\newtheorem{theorem}{Theorem}
\newtheorem{lemma}{Lemma}
\newtheorem{proposition}{Proposition}
\newtheorem{corollary}{Corollary}
\newtheorem{definition}{Definition}
\newtheorem{claim}{Claim}
\numberwithin{equation}{section}
\title{The Hilbertian Tensor Norm \\ and Entangled Two-Prover Games}
\author{Dejan D. Dukaric \\
[0.5cm]
Institute of Theoretical Computer Science, ETH Zurich, 8092 Zurich, Switzerland  \\
Institute for Theoretical Physics, ETH Zurich, 8093 Zurich, Switzerland \\  
[0.5cm]
\texttt{ddukaric@ethz.ch}
\\
[0.5cm] }
\begin{document}

\maketitle


\begin{abstract}
We study tensor norms over Banach spaces and their relations to quantum information theory, in particular their connection with two-prover games. We consider a version of the Hilbertian tensor norm $\gamma_2$ and its dual $\gamma_2^*$ that allow us to consider games with arbitrary output alphabet sizes.
We establish direct-product theorems and prove a generalized Grothendieck inequality for these tensor norms. Furthermore, we investigate the connection between the Hilbertian tensor norm and the set of quantum probability distributions, and show two applications to quantum information theory: firstly, we give an alternative proof of the perfect parallel repetition theorem for entangled XOR games; and secondly, we prove a new upper bound on the ratio between the entangled and the classical value of two-prover games.
\end{abstract}

\newpage

\tableofcontents

\section{Introduction and Motivation}

Entanglement is one of the central and most fascinating properties of quantum mechanics. The strange consequences of entangled quantum states have already puzzled Einstein, Podolsky, and Rosen \cite{EPR}, in their seminal paper of 1935 in which they raise the issue whether quantum mechanics is complete. This leads to the question if it is possible to augment quantum mechanics with additional (yet) unknown parameters, so called \emph{local hidden variables} (LHV), in order to obtain a local realistic and complete  theory. It was John Bell who gave a negative answer  to this question. He showed \cite{bellInequality} that there exist entangled quantum states and local measurements such that the resulting conditional probability distributions cannot be explained by a LHV theory. Consequently, such behaviours are called \emph{non-local}. In the last two decades, non-locality has become an extensively studied subject within quantum information theory which has applications in subjects ranging from device independent quantum key distribution \cite{nosignalingQKD, deviceIndependentCollective, deviceIndependentGeneral} over questions about the foundations of quantum mechanics \cite{limitOnNonlocality, closedSets, beyondquantum} to multi-prover games \cite{BGKW89, limitsNonlocal, paralleXOR, uniqueGamesEasy, entangledHard, noStrongParallel}. 

In a two-prover game Alice and Bob, the provers, are space-like separated and receive each a classical question from a verifier. Then, each of them sends back a classical answer to the verifier. The goal of the provers is to maximize the winning probability for the predefined game. This maximal winning probability can depend on the resources Alice and Bob share. Typically it is higher if they have non-local resources at their disposal instead of only shared randomness. In order to better understand the power and limitations of quantum non-locality it is therefore of interest to investigate the question of how big the gap between the winning probabilities with quantum and classical resources can maximally be. Note that for the case of XOR games, i.e., games for which the winning condition only depends on the XOR of Alice's and Bob's answer bit, this question has been fully answered by Tsirelson \cite{tirelsonGrothendieckearly}, who showed that there is a constant gap, independent of the input alphabet sizes.

Despite the fact that non-locality has been extensively studied there are still many important open questions. One of the questions addresses the problem of deciding whether a given conditional probability distribution can be obtained by product measurements on a quantum state. There is no efficient algorithm known which decides this problem. The current state of the art is an infinite hierarchy of semi-definite programs \cite{hierarchySDP1, hierarchySDP2} which decides whether a system is not quantum. The drawback of this approach is that the convergence rate is, in general, not known. In order to overcome the shortage of knowledge about specific properties of the quantum set we consider a relaxation of the quantum set, obtaining a larger set of conditional probability distributions. This larger set has desirable properties while still being reasonable close to the quantum set. We will show that $\gamma_2$ can be used to define such a bigger set. In addition, by considering the dual Hilbertian tensor norm, denoted by $\gamma_2^*$, we are able to make statements about the winning probability of two-prover games with Alice and Bob having quantum systems as resources. 

The first one who observed that there is a connection between tensor norms and quantum systems was Tsirelson \cite{tirelsonGrothendieckearly}. He showed that the ratio between the maximal quantum and the maximal classical value of XOR games\footnote{Note that his result also holds for the more general setting of correlation Bell inequalities.} is bounded by the Grothendieck constant $1.68 \lesssim K_G \lesssim 1.78$. Tsirelson used Grothendieck's inequality \cite{grothendieckInequality} which establishes a connection between the Hilbertian and the projective tensor norm. Together with the fact that there is a one-to-one correspondence between quantum correlations and the Hilbertian tensor norm and between classical correlations and the projective tensor norm, the constant gap between the quantum and classical winning probabilities is implied.

\paragraph*{Our contribution:} In this paper we generalize the above mentioned argument of Tsirelson. 
First, we establish a connection between arbitrary quantum systems and the Hilbertian tensor norm. In particular,  we prove that $\gamma_2$ evaluates to one for all quantum systems (see Proposition \ref{lemmaGammaQuantumState1} in Section \ref{sectionHilbertTensor}). Note however, that in contrast to the case of quantum correlations this result is not tight. 

And second, we introduce a generalized Grothendieck inequality which can be applied in a setting where Alice and Bob have several possible outputs, and is therefore an extension from XOR games to arbitrary two-prover games (see Theorem \ref{theoremDualGrothen} in Section \ref{sectionGeneralGrothen}). In Section \ref{sectionGeneralGrothen1} we provide a dual tensor and a matrix version of this generalized Grothendieck inequality. 

Combining these two results allows us to upper bound the ratio between the maximal quantum and the maximal classical value of arbitrary two-prover games (see Theorem \ref{theoremBellupperNew} in Section \ref{sectionApp2}) and to improve the best known upper bound given in \cite{commComplexNonSig} by a square root factor.

In Section \ref{sectionDirectProduct}, we prove a direct-product theorem for the dual Hilbertian tensor norm $\gamma_2^*$ (see Theorem \ref{lemmaProductTheoremGenNorm1}). This generalizes work of \cite{directProductTheorems} and enables us, together with a new tight characterization of the entangled winning probability for XOR games by means of $\gamma_2^*$, to derive an alternative proof of the perfect parallel repetition theorem for entangled XOR games (see Theorem \ref{theoremXORGames} in Section \ref{sectionApp1}).

\paragraph*{Related work:} Using tools from operator space theory, Junge and Palazuelos \cite{largeViolation} study large violations of Bell inequalities.  In order to prove that their results are almost tight, they also provide results corresponding to our Theorem \ref{theoremGenerGrothenTensor} (the generalized Grothendieck inequality) and Theorem \ref{theoremBellupperNew} (the upper bound on the entangled value of two-prover games). Note that their result is more general as it holds for Bell inequalities as well. This line of research is a continuation of \cite{operatorSpace1,operatorSpace2} where it is shown that operator space theory is a natural framework to study arbitrary Bell inequalities. The authors of \cite{nearOptimalBell} improve the work of \cite{largeViolation} by providing explicit two-prover games in order to establish near optimal lower bounds on the ratio between the quantum and classical value of Bell inequalities.

Grothendieck's inequality has been generalized in different ways before. The latest generalization can be found in \cite{generalGrothen} where references to other previous generalizations \cite{genProofGrothen, bellInGrothen, quadGrothGraphs} are provided.
Grothendieck's inequality and, in particular, the tensor norm $\gamma_2$ and its dual $\gamma_2^*$, have not only applications in quantum information theory but are also used to prove lower bounds in communication complexity \cite{complexMeasure, lowerBoundsComm, directProductTheorems}. Furthermore Grothendieck's inequality serves as an inspiration to derive new semi-definite programs which can be used to approximate computationally hard problems \cite{ApproxCutNorm, maxQuadratic}.

\section{Preliminaries and Notation}
\label{sectionPreNot}

\subsection{Two-Prover Games}
\label{sectionTwoProver}

In a \emph{classical one-round two-prover cooperative game of incomplete information} \cite{BGKW89} two classical and spatially separated provers, usually called Alice and Bob, try to win a game by interacting with a verifier. The two provers can agree on a strategy before the game. During the game the two provers are not allowed to communicate. The messages which are exchanged by the verifier and the two provers are classical bit strings. Let $
\pi: \mathcal{X} \times \mathcal{Y} \rightarrow [0,1]$ be a probability distribution known by the verifier and the two provers. The verifier selects $x \in \mathcal{X}$ and $y \in \mathcal{Y}$ according to the probability distribution $\pi$ and sends the value $x$ to Alice and $y$ to Bob. Alice and Bob send to the verifier the values $s_A(x) = a \in \mathcal{A}$ and $s_B(y) = b \in \mathcal{B}$ where we call the pair $(s_A, s_B)$ a strategy for the game. Note that it is sufficient to consider deterministic strategies only as the optimal (shared) randomness can be selected in advance. The provers win the game $G = (\pi, V)$ if the publicly known predicate $V : \mathcal{A} \times \mathcal{B} \times \mathcal{X} \times \mathcal{Y} \rightarrow \{0, 1\}$ evaluates to $1$ for the four-tuple $(a,b,x,y)$. We consider two classes of games:
\begin{definition}
Let $G = (\pi,V)$ be a game. Then
\begin{itemize}
\item $G$ is called a \emph{unique} game if there exist permutations $\sigma_{x,y}$ for all inputs $x \in \mathcal{X}$ and $y \in \mathcal{Y}$ such that $V(a,b,x,y) = 1$ if and only if $b = \sigma_{x,y}(a)$.
\item $G$ is called an \emph{XOR} game if $\mathcal{A} = \mathcal{B} = \{ 0,1 \}$ with $V(0,0,x,y) = V(1,1,x,y)$ and  $V(0,1,x,y) = V(1,0,x,y)$ for all inputs $x \in \mathcal{X}$ and $y \in \mathcal{Y}$, i.e., the predicate $V$ depends only on the XOR of the answers $a$ and $b$.
\end{itemize}
\end{definition}

The classical value of the protocol $s_A : \mathcal{X} \rightarrow \mathcal{A}$ and $s_B : \mathcal{Y} \rightarrow \mathcal{B}$ is defined by
\[
 \sum_{x \in \mathcal{X}, y \in \mathcal{Y}} \pi(x,y) \cdot V(s_A(x), s_B(y), x, y) ~.
\]
The \emph{classical value} of a game, denoted by $\omega(G)$, is defined as the maximal value that can be achieved by any two strategies $s_A$ and $s_B$ for a given game $G = (\pi, V)$, i.e.,
\[
\omega(G) := \max_{s_A, s_B} \sum_{x \in \mathcal{X}, y \in \mathcal{Y}} \pi(x,y) \cdot V(s_A(x), s_B(y), x, y) \ .
\]

We can give the two provers more power by allowing them to share entangled quantum states. Alice and Bob can then select a measurement depending on their inputs $x$ and $y$, respectively, and measure the entangled state $| \Psi \rangle$, obtaining measurement results $a$ and $b$, respectively. The \emph{entangled value} of a game $G = (\pi, V)$, denoted by $\omega^*(G)$, is defined as \cite{uniqueGamesEasy}
\[
\omega^*(G) := \lim_{d \rightarrow \infty} \max_{ | \Psi \rangle \in \mathbb{C}^d \otimes \mathbb{C}^d} \max_{M_{x}^a, N_{y}^b} \sum_{x,y} \pi(x,y) \sum_{a,b} V(a,b,x,y) \cdot \langle \Psi | M_x^a \otimes N_y^b | \Psi \rangle ~,
\]
with projective measurements $\{ M_x^a\}_{1 \leq a \leq |\mathcal{A}|}$, for $1 \leq x \leq |\mathcal{X}|$, and $\{ N_y^b\}_{1 \leq b \leq |\mathcal{B}|}$, for $1 \leq y \leq |\mathcal{Y}|$. It is clear that $\omega^*(G) \geq \omega(G)$ for all games $G$.

\subsection{Parallel Repetition of Two-Prover Games}
\label{sectionParalleRep}

A game $G = (\pi, V)$ can be repeated $N$ times
independently. Either the game is repeated
\emph{sequentially}, i.e., a full round is completed before a new round is
started, or in \emph{parallel}. In the latter case, $N$ mutually independent pairs of
inputs $(x_i, y_i)$ are chosen according to the distribution $\pi$ and sent to the provers. The provers
then compute outputs $(a_1, \ldots, a_N)$
and $(b_1, \ldots, b_N)$,
respectively. Finally, the predicate $V$ is evaluated for all tuples
$(a_i, b_i, x_i, y_i)$ separately. This $N$-fold repetition of a game $G$ can be seen as a new game, denoted by $G^{\odot N}$,  where this new game is only won if \emph{all} $N$ rounds are won.

For sequential composition, this probability is
obviously equal to the probability of winning a single game taken to the power of $N$. However, for parallel
composition the problem gets more involved as it is generally not
true that $\omega(G^{\odot N})$ is equal to $\omega(G)^N$, as
shown in~\cite{F89}. The same is true for entangled games, i.e., there exist games such that $\omega^*(G^{\odot N}) > \omega^{*}(G)^N$ \cite{noStrongParallel}. Note that $\omega(G^{\odot N}) \geq \omega(G)^N$ and $\omega^*(G^{\odot N}) \geq \omega^{*}(G)^N$ is obviously true for all games $G$. Nevertheless, it can be shown that the quantity
$\omega(G^{\odot N})$ decreases exponentially fast in $N$. A first
proof of this fact, also known as the \emph{Parallel Repetition
 Theorem}, has been given in \cite{R98}.
Raz's proof has been simplified in \cite{H07} and extended to the case of provers using arbitrary non-signalling resources.

No such parallel repetition result is known for entangled games. However, for the special case of entangled XOR games there holds a perfect parallel repetition theorem \cite{paralleXOR}. Recently the authors of \cite{uniqueGamesEasy} have shown that there is a parallel repetition theorem for entangled unique games as well. Quantitatively, it is known that if 
$G = (\pi, V)$ is a two-prover game then, for all $N \geq 1$, it holds that
\begin{itemize}
 \item \cite{paralleXOR}  if $G$ is an XOR game, then $\omega^*(G^{\odot N}) = \omega^*(G)^N$  ~,
 \item \cite{uniqueGamesEasy} if $G$ is a unique game, then $\omega^*(G^{\odot N}) \leq \left(1 - \frac{(1 - \omega^*(G))^2}{16} \right)^N$~.
\end{itemize}

\subsection{Banach Spaces}
\label{sectionLoalBnachSp}

Let $\| \cdot \|_X$ be a norm on the real finite-dimensional vector space $\mathbb{R}^n$. Then the tuple $X := (\mathbb{R}^n, \| \cdot \|_X)$ is called a \emph{Banach space}. The \emph{dual space} of $\mathbb{R}^n$, denoted by $(\mathbb{R}^n)^*$, is the vector space of all linear functionals from the vector space $\mathbb{R}^n$ to the real numbers. We write $\langle G, P \rangle \in \mathbb{R}$ for the application of the linear functional $G: \mathbb{R}^n \rightarrow \mathbb{R}$ on the element $P \in \mathbb{R}^n$. Note that this is just the usual inner product of real vectors. The corresponding \emph{dual norm} is then defined by
\begin{equation}
\label{eqnDefDualNorm}
\| G \|_{X^*} := \sup_{ P \in \mathbb{R}^n} \{ | \langle G, P \rangle | ~:~ \| P \|_X \leq 1 \} ~,
\end{equation}
and the dual Banach space is given by $X^* := ((\mathbb{R}^n)^*, \| \cdot \|_{X^*})$.

We write $\langle f_i, P \rangle$, where $f_i \in (\mathbb{R}^n)^* \cong \mathbb{R}^n$ is the all-zero vector with a one at position $i$, to access the $i$'th entry of the vector $P \in \mathbb{R}^n$.  And similarly, if $G \in (\mathbb{R}^n)^* \cong \mathbb{R}^n$ we use $\langle G, e_i \rangle$, where $e_i \in \mathbb{R}^n$ is the all-zero vector with a one at position $i$, to access the $i$'th entry of $G$. The inner product $\langle G, P \rangle$ can therefore also be written as
\begin{equation}
\label{eqnAltRepInnerProd}
\langle G, P \rangle = \sum_{i=1}^n \langle G, e_i \rangle \cdot \langle f_i, P \rangle ~.
\end{equation}

In particular, we consider the Banach space
\[
\ell_{\infty}^{|\mathcal{X}|}(\ell_{1}^{|\mathcal{A}|}) := (\mathbb{R}^{|\mathcal{X}|} \otimes \mathbb{R}^{|\mathcal{A}|}, \| \cdot \|_{\infty(1)}) ~,
\]
where the $\infty(1)$-norm is defined as
\[
\| P_A \|_{\infty(1)} := \max_{x \in \mathcal{X} } \sum_{a = 1}^{|\mathcal{A}|} | \langle f_x \otimes f_a,  P_A \rangle |  ~,
\]
for $P_A \in \mathbb{R}^{|\mathcal{X}|} \otimes \mathbb{R}^{|\mathcal{A}|}$. We will also use the notation $f_{x,a} := f_x \otimes f_a$. See also Section \ref{sectionConnProverTensor} which gives an interpretation of the expression $\langle f_x \otimes f_a,  P_A \rangle$ in the context of two-prover games. The dual space is given by $(\ell_{\infty}^{|\mathcal{X}|}(\ell_{1}^{|\mathcal{A}|}))^* \cong \ell_{1}^{|\mathcal{X}|}(\ell_{\infty}^{|\mathcal{A}|}) := (\mathbb{R}^{|\mathcal{X}|} \otimes \mathbb{R}^{|\mathcal{A}|}, \| \cdot \|_{1(\infty)} )$ with
\[
\| G_A \|_{1(\infty)} \equiv \| G_A \|_{\infty(1)^*}  := \sum_{x = 1}^{|\mathcal{X}|} \max_{a  \in \mathcal{A}} | \langle G_A , e_x \otimes e_a \rangle | ~,
\]
for $G_A \in (\mathbb{R}^{|\mathcal{X}|} \otimes \mathbb{R}^{|\mathcal{A}|})^* \cong \mathbb{R}^{|\mathcal{X}|} \otimes \mathbb{R}^{|\mathcal{A}|}$. It is easy to verify that
\[
\| G_A \|_{1(\infty)} = \sup_{P_A} \{ | \langle G_A, P_A \rangle | ~:~ \| P_A \|_{\infty(1)} \leq 1 \} ~,
\]
and therefore, the $1(\infty)$-norm is indeed the dual of the $\infty(1)$-norm.

Note that for $| \mathcal{A} | = 1$ we recover the Banach space $\ell_{\infty}^{|\mathcal{X}|} := (\mathbb{R}^{|\mathcal{X}|}, \| \cdot \|_1)$ where $\| P_A \|_1 := \sum_{x=1}^{|\mathcal{X}|} | \langle f_x, P_A \rangle |$, and for $| \mathcal{X} | = 1$ the Banach space $\ell_{1}^{|\mathcal{A}|} :=  (\mathbb{R}^{|\mathcal{A}|}, \| \cdot \|_{\infty})$ where $\| P_A \|_{\infty} := \max_{a \in \mathcal{A}} | \langle f_a, P_A \rangle |$.

We will use the symbols $P_A$, $P_B$, and $P$ for elements in $\ell_{\infty}^{|\mathcal{X}|}(\ell_{1}^{|\mathcal{A}|})$, $\ell_{\infty}^{|\mathcal{Y}|}(\ell_{1}^{|\mathcal{B}|})$, and $\ell_{\infty}^{|\mathcal{X}|}(\ell_{1}^{|\mathcal{A}|}) \otimes \ell_{\infty}^{|\mathcal{Y}|}(\ell_{1}^{|\mathcal{B}|})$, respectively, as they will represent conditional probability distributions. The symbols $G_A$, $G_B$, and $G$ are used for elements in $\ell_{1}^{|\mathcal{X}|}(\ell_{\infty}^{|\mathcal{A}|})$, $\ell_{1}^{|\mathcal{Y}|}(\ell_{\infty}^{|\mathcal{B}|})$, and $\ell_{1}^{|\mathcal{X}|}(\ell_{\infty}^{|\mathcal{A}|}) \otimes \ell_{1}^{|\mathcal{Y}|}(\ell_{\infty}^{|\mathcal{B}|})$, respectively. In this case they will represent (two-prover) games. Using this convention the expressions should be easier to read as we do not always have to explicitly mention the Banach space we are working on.

\section{Connection Between Tensor Norms and Two-Prover Games}
\label{sectionConnProverTensor}

A tensor norm is a function which maps elements from tensor product spaces $X \otimes Y$, where $X$ and $Y$ are Banach spaces, to the non-negative real numbers. Furthermore, a tensor norm inherits all properties of a regular norm and therefore fulfils the three norm-defining conditions given in Appendix \ref{sectionNotation}. A formal definition of tensor norms is given in Appendix \ref{sectionIntroTensorNorm}. 
In our particular case, we will consider the following four tensor norms (see Appendix \ref{sectionFourTensorNorms} for definitions):
\begin{itemize}
\item $\varepsilon : \ell_{1}^{|\mathcal{X}|}(\ell_{\infty}^{|\mathcal{A}|}) \otimes \ell_{1}^{|\mathcal{Y}|}(\ell_{\infty}^{|\mathcal{B}|}) \rightarrow \mathbb{R}_0^+$ (Injective Tensor Norm) ;
\item $\pi: \ell_{\infty}^{|\mathcal{X}|}(\ell_{1}^{|\mathcal{A}|}) \otimes \ell_{\infty}^{|\mathcal{Y}|}(\ell_{1}^{|\mathcal{B}|}) \rightarrow \mathbb{R}_0^+$ (Projective Tensor Norm) ;
\item $\gamma_2 : \ell_{\infty}^{|\mathcal{X}|}(\ell_{1}^{|\mathcal{A}|}) \otimes \ell_{\infty}^{|\mathcal{Y}|}(\ell_{1}^{|\mathcal{B}|}) \rightarrow \mathbb{R}_0^+$ (Hilbertian Tensor Norm) ;
\item $\gamma_2^* : \ell_{1}^{|\mathcal{X}|}(\ell_{\infty}^{|\mathcal{A}|}) \otimes \ell_{1}^{|\mathcal{Y}|}(\ell_{\infty}^{|\mathcal{B}|}) \rightarrow \mathbb{R}_0^+$ (Dual Hilbertian Tensor Norm) .
\end{itemize}
In the following we will show how one can represent a conditional probability distribution by a tensor $P \in \ell_{\infty}^{|\mathcal{X}|}(\ell_{1}^{|\mathcal{A}|}) \otimes \ell_{\infty}^{|\mathcal{Y}|}(\ell_{1}^{|\mathcal{B}|})$. This will allow us to see the projective tensor norm as a map from conditional probability distributions to non-negative real numbers. On the other hand, we will show that the tensor $G \in \ell_{1}^{|\mathcal{X}|}(\ell_{\infty}^{|\mathcal{A}|}) \otimes \ell_{1}^{|\mathcal{Y}|}(\ell_{\infty}^{|\mathcal{B}|})$ can be interpreted as a two-prover game and therefore the injective tensor norm assigns a non-negative real number to each game. We will see that this number is actually the \emph{classical} value of a two-prover game.

Let us first give an interpretation of the term $\langle f_x \otimes f_a,  P_A \rangle$, with $P_A \in \ell_{\infty}^{|\mathcal{X}|}(\ell_{1}^{|\mathcal{A}|})$, which will then lead to an explanation of the connection between tensor norms and two-prover games. First, let $s_A : \mathcal{X} \rightarrow \mathcal{A}$ be Alice's strategy. Such a strategy can always be represented by a conditional probability distribution $P_{A | X}$, with probabilities $P_{A | X}(a,x)$, output $a$, and input $x$. Setting 
\[
\langle f_x \otimes f_a,  P_A \rangle := P_{A | X}(a,x) ~,
\]
for all $x \in \mathcal{X}, a \in \mathcal{A}$, defines another representation of the conditional probability distribution $P_{A | X}$. Therefore, any (possibly probabilistic) strategy of Alice can conveniently be represented by a tensor $P_A \in \ell_{\infty}^{|\mathcal{X}|}(\ell_{1}^{|\mathcal{A}|})$. And similarly for Bob's strategy $s_B : \mathcal{Y} \rightarrow \mathcal{B}$ which can be represented by $P_B \in \ell_{\infty}^{|\mathcal{Y}|}(\ell_{1}^{|\mathcal{B}|})$. Note that in this case we have that $\| P_A \|_{\infty(1)} = \| P_B \|_{\infty(1)} = 1$. Hence, any classical strategy without shared randomness can then be represented by the product tensor $P_A \otimes P_B \in \ell_{\infty}^{|\mathcal{X}|}(\ell_{1}^{|\mathcal{A}|}) \otimes \ell_{\infty}^{|\mathcal{Y}|}(\ell_{1}^{|\mathcal{B}|})$, with
$\langle f_{x,a} \otimes f_{y,b}, P_A \otimes P_B \rangle$
representing the probability that Alice and Bob output $a$ and $b$ given they have inputs $x$ and $y$, respectively. Entangled strategies, however, can not be represent as product tensors $P_A \otimes P_B$. Instead, they will generally be represented by non-product tensors $P \in \ell_{\infty}^{|\mathcal{X}|}(\ell_{1}^{|\mathcal{A}|}) \otimes \ell_{\infty}^{|\mathcal{Y}|}(\ell_{1}^{|\mathcal{B}|})$ and given by the identification
\begin{equation}
\label{eqnrepProbTens}
\langle f_{x,a} \otimes f_{y,b}, P \rangle := P_{AB | XY}(a,b,x,y) ~,
\end{equation}
with $P_{AB | XY}(a,b,x,y)$ denoting the probability that Alice and Bob output $a$ and $b$, given the inputs $x$ and $y$, respectively.

Any two-prover game $G = (\pi, V)$ can be interpreted as an element of the tensor product space $\ell_{1}^{|\mathcal{X} |}(\ell_{\infty}^{|\mathcal{A}|}) \otimes \ell_{1}^{|\mathcal{Y} |}(\ell_{\infty}^{|\mathcal{B}|})$ by the following identification:
\begin{equation}
\label{eqnGameDefBanach}
\langle G, e_{x,a} \otimes e_{y,b} \rangle := \pi(x,y) \cdot V(a,b,x,y) ~.
\end{equation}
It will be clear from the context whether $G$ represents a pair $(\pi, V)$ or an element of a tensor product space.

The value of a protocol, which is represented as a tensor $P \in \ell_{\infty}^{|\mathcal{X}|}(\ell_{1}^{|\mathcal{A}|}) \otimes \ell_{\infty}^{|\mathcal{Y}|}(\ell_{1}^{|\mathcal{B}|})$, for a given game $G \in \ell_{1}^{|\mathcal{X} |}(\ell_{\infty}^{|\mathcal{A}|}) \otimes \ell_{1}^{|\mathcal{Y} |}(\ell_{\infty}^{|\mathcal{B}|})$, can then be computed by
\begin{eqnarray}
\label{eqnAltRepTensVal}
\langle G, P \rangle &=& \sum_{x,y} \sum_{a,b} \langle G, e_{x,a} \otimes e_{y,b} \rangle \langle f_{x,a} \otimes f_{y,b}, P \rangle \nonumber \\
&=& \sum_{x,y} \pi(x,y) \sum_{a,b} V(a,b,x,y) \cdot P_{AB | XY}(a,b,x,y) ~, 
\end{eqnarray}
where we used (\ref{eqnAltRepInnerProd}) in the first line and (\ref{eqnrepProbTens}) and (\ref{eqnGameDefBanach}) in the second line.

\subsection{Injective and Projective Tensor Norms}
\label{sectionInjecProjecNorm}

Building up on the previous section, in particular (\ref{eqnAltRepTensVal}), the \emph{classical} value of a two-prover game $G$ is then given by
\begin{equation}
\omega(G) = \sup_{P_A, P_B} | \langle G, P_A \otimes P_B \rangle | ~,
\end{equation}
where $P_A$ and $P_B$ are conditional probability distributions. As $P_A$ and $P_B$ represent strategies it follows that $\| P_A \|_{\infty(1)} = \| P_B \|_{\infty(1)} = 1$, and therefore the following upper bound on the classical value of a game is obtained:
\begin{equation}
\label{eqnValueGameREpClas}
\omega(G) \leq \sup_{P_A, P_B } \{ | \langle G, P_A \otimes P_B \rangle | ~:~ \| P_A \|_{\infty(1)} \leq 1 ~,~ \| P_A \|_{\infty(1)} \leq 1 \} ~,
\end{equation}
where $P_A$ and $P_B$ do not necessarily represent valid conditional probability distributions any more.

The right hand side of (\ref{eqnValueGameREpClas}) is usually abbreviated by $\varepsilon(G)$, i.e., it is an expression for the \emph{injective tensor norm} of $G$. Note that the dual tensor norm of $\varepsilon$ is the projective tensor norm $\pi$ (see Appendix \ref{sectionProjInjNorm}), and therefore, (\ref{eqnValueGameREpClas}) and (\ref{eqnDefDualNorm}) imply that
\[
\omega(G) \leq \varepsilon(G) = \sup_{P} \{ |  \langle G, P \rangle | ~:~ \pi(P) \leq 1 \} ~.
\]
However, as $\varepsilon(G)$ is also a lower bound on $\omega(G)$, we obtain
\begin{proposition}
\label{lemmaEqiuvClassVAr}
Let $G = (\pi,V)$ be an arbitrary two-prover game with $G \in \ell_{1}^{|\mathcal{X} |}(\ell_{\infty}^{|\mathcal{A}|}) \otimes \ell_{1}^{|\mathcal{Y} |}(\ell_{\infty}^{|\mathcal{B}|})$. Then
\[
\omega(G) = \varepsilon(G) ~.
\]
\end{proposition}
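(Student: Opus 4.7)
The plan is to prove both inequalities separately. The inequality $\omega(G)\leq\varepsilon(G)$ is already in hand, since any classical deterministic strategy $(s_A,s_B)$ induces tensors $P_A,P_B$ with $\|P_A\|_{\infty(1)}=\|P_B\|_{\infty(1)}=1$, and the classical value of the strategy equals $\langle G,P_A\otimes P_B\rangle$; so passing to the supremum on the right-hand side gives the bound. The real content is to show the reverse inequality $\varepsilon(G)\leq\omega(G)$, and the plan is to argue that the supremum defining $\varepsilon(G)$ is already attained on product tensors arising from deterministic classical strategies.

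First I would exploit the non-negativity of the game tensor. Because $\pi(x,y)\geq 0$ and $V(a,b,x,y)\in\{0,1\}$, the entries $\langle G,e_{x,a}\otimes e_{y,b}\rangle$ are all non-negative, and therefore for any $P_A,P_B$,
\[
|\langle G,P_A\otimes P_B\rangle|\;\leq\;\langle G,|P_A|\otimes|P_B|\rangle,
\]
where $|P_A|$, $|P_B|$ denote entrywise absolute values. Since the $\infty(1)$-norm depends only on absolute values of entries, $\||P_A|\|_{\infty(1)}=\|P_A\|_{\infty(1)}$ and similarly for $P_B$, so we may restrict the supremum defining $\varepsilon(G)$ to non-negative $P_A,P_B$ with $\infty(1)$-norm at most $1$.

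Next I would use a vertex-of-polytope argument, separately in each variable. Fix $P_B\geq 0$ with $\|P_B\|_{\infty(1)}\leq 1$, and consider the linear functional $P_A\mapsto \langle G,P_A\otimes P_B\rangle$ on the feasible set
\[
\mathcal{F}_A=\bigl\{P_A\in\mathbb{R}_{\geq 0}^{|\mathcal{X}|\times|\mathcal{A}|}\,:\,\textstyle\sum_a P_A(x,a)\leq 1\text{ for all }x\bigr\},
\]
which is a product of simplices (with the origin allowed in each factor). Because the functional is linear, its supremum is attained at a vertex, i.e.\ at a $P_A$ whose $x$-th row is either zero or a standard basis vector $\delta_{a,s_A(x)}$. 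Whenever some row is zero, non-negativity of $G$ implies that replacing it by an arbitrary $\delta_{a,s_A(x)}$ can only increase the objective, so we may assume $P_A$ represents a deterministic strategy $s_A:\mathcal{X}\to\mathcal{A}$. Applying the same argument to $P_B$ produces a deterministic $s_B:\mathcal{Y}\to\mathcal{B}$.

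Finally, using the identification \eqref{eqnAltRepTensVal}, the resulting value is
\[
\langle G,P_A\otimes P_B\rangle=\sum_{x,y}\pi(x,y)\,V(s_A(x),s_B(y),x,y),
\]
which is by definition bounded above by $\omega(G)$. Taking the supremum over $(s_A,s_B)$ yields $\varepsilon(G)\leq\omega(G)$, completing the proof. The only step that requires a moment's thought is the reduction to non-negative tensors and then to vertices of the feasible polytope; everything else is bookkeeping via \eqref{eqnGameDefBanach} and \eqref{eqnrepProbTens}.
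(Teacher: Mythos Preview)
Your proof is correct and follows essentially the same approach as the paper: both exploit the non-negativity of the entries $\pi(x,y)V(a,b,x,y)$ to reduce the supremum defining $\varepsilon(G)$ to non-negative $P_A,P_B$, and then argue that the optimum is achieved at tensors representing genuine strategies. The only cosmetic difference is that the paper stops at probabilistic strategies (row sums equal to $1$) while you push to deterministic ones via a vertex argument; this is immaterial since the classical value is the same either way.
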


\begin{proof}
The statement follows from
\begin{eqnarray}
\varepsilon(G) &=& \sup_{P_A, P_B} \{ | \langle G, P_A \otimes P_B \rangle | ~:~ \| P_A \|_{\infty(1)} \leq 1, \| P_B \|_{\infty(1)} \leq 1 \} \nonumber \\
 &=& \sup_{P_A, P_B} \left| \sum_{x,y} \pi(x,y) \sum_{a,b} V(a,b,x,y) \cdot \langle f_{x,a}, P_A \rangle  \cdot \langle f_{y,b},  P_B \rangle \right| ~, \nonumber
\end{eqnarray}
by using (\ref{eqnAltRepTensVal}) in the second line and the supremum is over $\| P_A \|_{\infty(1)} \leq 1$ and $\| P_B \|_{\infty(1)} \leq 1$.
Thus, since $\pi(x,y) \cdot V(a,b,x,y) \geq 0$, we have that $\langle f_{x,a}, P_A \rangle \geq 0$ and $\langle f_{y,b}, P_B \rangle \geq 0$ for the optimal case. Furthermore, it is clear that the optimum is achieved when $\langle f_{x,a}, P_A \rangle$ and $\langle f_{y,b}, P_B \rangle$ are as large as possible, meaning that $\sum_{a} \langle f_{x,a}, P_A \rangle = 1$ and $\sum_{b} \langle f_{y,b}, P_B \rangle = 1$ for all $1 \leq x \leq | \mathcal{X} |$ and $1 \leq y \leq | \mathcal{Y} |$, respectively. But this implies that $P_A$ and $P_B$ correspond to valid (local probabilistic) strategies of Alice and Bob, respectively, and therefore the injective tensor norm of $G$ is the same as the classical value of the game $G$.
\end{proof}

\subsection{Hilbertian Tensor Norm and its Dual}
\label{sectionHilbertTensor}

In the previous section we have investigated the connection between tensor norms and the \emph{classical} value of two-prover games. In this section we will now establish a connection between tensor norms, in particular the dual Hilbertian tensor norm, and the \emph{entangled} value of two-prover games.

We will call $P \in \ell_{\infty}^{|\mathcal{X} |}(\ell_{1}^{|\mathcal{A}|}) \otimes \ell_{\infty}^{|\mathcal{Y} |}(\ell_{1}^{|\mathcal{B}|})$ a \emph{quantum system} if it can be obtained by product measurements on a pure quantum state $| \Psi \rangle  \in \mathcal{H}_A \otimes \mathcal{H}_B$, with $\mathcal{H}_A$ and $\mathcal{H}_B$ Hilbert spaces, i.e., there exist projective measurements $\{ M_x^a\}_{1 \leq a \leq |\mathcal{A}|}$, for $1 \leq x \leq |\mathcal{X}|$, and $\{ N_y^b\}_{1 \leq b \leq |\mathcal{B}|}$, for $1 \leq y \leq |\mathcal{Y}|$, such that (see also (\ref{eqnrepProbTens}))
\begin{equation}
\label{eqnRepQuantSys}
\langle f_{x,a} \otimes f_{y,b}, P \rangle = \langle \Psi | M_x^a \otimes N_y^b | \Psi \rangle ~.
\end{equation}
Note that it is no restriction to assume pure states and projective measurements (see also \cite{nielsenchuang}).
In Section \ref{sectionHilbertQuantum} we show that the Hilbertian tensor norm has value $1$ for all quantum systems.
\begin{proposition}
\label{lemmaGammaQuantumState1}
Let $P \in \ell_{\infty}^{|\mathcal{X} |}(\ell_{1}^{|\mathcal{A}|}) \otimes \ell_{\infty}^{|\mathcal{Y} |}(\ell_{1}^{|\mathcal{B}|})$ be a quantum system. Then $\gamma_2(P) = 1$.
\end{proposition}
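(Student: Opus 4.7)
The plan is to prove $\gamma_2(P)=1$ by establishing the matching upper and lower bounds separately, each using a different structural feature of a quantum system: the factorization bound uses the projectivity and completeness of the local measurements, while the matching lower bound uses only the fact that $P$ is a genuine conditional probability distribution.

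For the upper bound $\gamma_2(P)\le 1$, I would use the factorization-through-Hilbert-space characterization of $\gamma_2$. Given a purification $|\Psi\rangle\in\mathcal{H}_A\otimes\mathcal{H}_B$ and the projective measurements $\{M_x^a\}$, $\{N_y^b\}$ witnessing $P$, define vectors
\[
u_{x,a}:=(M_x^a\otimes I)|\Psi\rangle,\qquad v_{y,b}:=(I\otimes N_y^b)|\Psi\rangle
\]
in the Hilbert space $H=\mathcal{H}_A\otimes\mathcal{H}_B$. Hermiticity of the projectors gives $\langle u_{x,a},v_{y,b}\rangle_H=\langle\Psi|M_x^a\otimes N_y^b|\Psi\rangle=\langle f_{x,a}\otimes f_{y,b},P\rangle$, so the linear maps $A:\ell_{1}^{|\mathcal{X}|}(\ell_{\infty}^{|\mathcal{A}|})\to H$ and $B^{*}:\ell_{1}^{|\mathcal{Y}|}(\ell_{\infty}^{|\mathcal{B}|})\to H$ sending $e_{x,a}\mapsto u_{x,a}$ and $e_{y,b}\mapsto v_{y,b}$ realise a Hilbert-space factorisation of $P$. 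It remains to check $\|A\|\le 1$ and $\|B^{*}\|\le 1$ as operators out of the appropriate $\ell_{1}(\ell_{\infty})$ space. For $A$, given $G_A$ with $\|G_A\|_{1(\infty)}\le 1$, the triangle inequality over $x$ reduces the bound to showing $\|\sum_a c_a u_{x,a}\|_H\le\max_a|c_a|$ for each fixed $x$. The key calculation is
\[
\Bigl\|\sum_a c_a u_{x,a}\Bigr\|_H^{2}=\langle\Psi|\bigl(\sum_a c_a M_x^a\bigr)^{2}\otimes I|\Psi\rangle=\sum_a|c_a|^{2}\,\langle\Psi|M_x^a\otimes I|\Psi\rangle,
\]
where the second equality uses orthogonality $M_x^a M_x^{a'}=\delta_{aa'}M_x^a$. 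Bounding $|c_a|^2\le\max_a|c_a|^2$ and using completeness $\sum_a M_x^a=I$ (so the remaining expectations sum to $1$) gives the desired estimate. The same computation handles $B^{*}$, and hence $\gamma_2(P)\le\|A\|\cdot\|B\|\le 1$.

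For the lower bound $\gamma_2(P)\ge 1$, I would appeal to duality and the normalisation of the probability distribution. Pick any $x_0,y_0$ and define $G_A:=\sum_a e_{x_0,a}$, $G_B:=\sum_b e_{y_0,b}$, and $G:=G_A\otimes G_B$. Then $\langle G,P\rangle=\sum_{a,b}P_{AB|XY}(a,b,x_0,y_0)=1$, while $\|G_A\|_{1(\infty)}=\|G_B\|_{1(\infty)}=1$. Since $\gamma_2^{*}$ is a reasonable tensor norm, $\gamma_2^{*}(G_A\otimes G_B)\le\|G_A\|_{1(\infty)}\|G_B\|_{1(\infty)}=1$, and duality gives $\gamma_2(P)\ge|\langle G,P\rangle|/\gamma_2^{*}(G)\ge 1$.

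The main obstacle is the upper-bound step: verifying $\|A\|\le 1$ requires carefully matching the $\ell_{1}(\ell_{\infty})$ norm on the domain with the Hilbert-space norm on the range, and this is where both the projective-measurement structure (giving orthogonality of $\{u_{x,a}\}_a$ for each fixed $x$) and the completeness relation $\sum_a M_x^a=I$ are used essentially; the $\ell_1$ sum over $x$, by contrast, simply passes through the triangle inequality. Once the per-$x$ estimate is in place, everything assembles cleanly and the lower bound is essentially bookkeeping about the dual norm of a product functional.
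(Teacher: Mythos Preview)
Your proof is correct and follows essentially the same route as the paper: the same factorisation vectors $(M_x^a\otimes I)|\Psi\rangle$ and $(I\otimes N_y^b)|\Psi\rangle$, the same use of projectivity (orthogonality) and completeness to bound the $1(\infty)\to 2$ operator norms, and the same duality-against-a-product-functional argument for the lower bound (packaged in the paper as Lemma~\ref{lemmaLowerAllNormMnorm}). The one technicality you gloss over is that the paper's $\gamma_2$ factors through a \emph{real} $\ell_2$, so the complex vectors must be realified (doubling the dimension), a step the paper makes explicit.
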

Using this result, we can now upper bound the entangled value of an arbitrary two-prover game by the dual Hilbertian tensor norm.
\begin{proposition}
\label{lemmaUpperOmegaGamma}
Let $G = (\pi,V)$ be an arbitrary two-prover game with $G \in \ell_{1}^{|\mathcal{X} |}(\ell_{\infty}^{|\mathcal{A}|}) \otimes \ell_{1}^{|\mathcal{Y} |}(\ell_{\infty}^{|\mathcal{B}|})$. Then
\[
\omega^*(G) \leq \gamma_2^*(G) ~.
\]
\end{proposition}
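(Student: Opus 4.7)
The plan is to combine Proposition \ref{lemmaGammaQuantumState1} with the duality definition of $\gamma_2^*$ in essentially one step. By the same reasoning that identifies $\varepsilon(G)$ with a supremum over product strategies in Proposition \ref{lemmaEqiuvClassVAr}, the entangled value $\omega^*(G)$ rewrites as a supremum of $\langle G, P\rangle$ over all tensors $P$ arising as quantum systems via the identification (\ref{eqnRepQuantSys}). Explicitly, using (\ref{eqnAltRepTensVal}), one gets
\begin{equation*}
\omega^*(G) = \sup_{P \text{ quantum}} \langle G, P \rangle,
\end{equation*}
where the supremum is over all $P \in \ell_{\infty}^{|\mathcal{X}|}(\ell_{1}^{|\mathcal{A}|}) \otimes \ell_{\infty}^{|\mathcal{Y}|}(\ell_{1}^{|\mathcal{B}|})$ obtained from some pure state and projective measurements.

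Next I would invoke Proposition \ref{lemmaGammaQuantumState1}, which tells us that every such quantum system $P$ satisfies $\gamma_2(P) = 1$. Therefore every quantum $P$ lies in the unit ball $\{P : \gamma_2(P) \leq 1\}$ over which the dual norm $\gamma_2^*$ takes its supremum. Moreover, because $\pi(x,y) V(a,b,x,y) \geq 0$ and $P_{AB|XY}(a,b,x,y) \geq 0$, the inner product $\langle G, P \rangle$ is non-negative, so $|\langle G, P\rangle| = \langle G, P \rangle$ and no absolute-value issue arises.

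Combining these observations, for every quantum system $P$,
\begin{equation*}
\langle G, P \rangle = |\langle G, P \rangle| \leq \sup_{Q}\{ |\langle G, Q \rangle| : \gamma_2(Q) \leq 1 \} = \gamma_2^*(G),
\end{equation*}
by the definition (\ref{eqnDefDualNorm}) of the dual norm applied to $\gamma_2$. Taking the supremum over quantum $P$ yields $\omega^*(G) \leq \gamma_2^*(G)$, which is the claim. There is really no hard step here: the conceptual work has already been done in establishing the tensor representation of quantum strategies and in Proposition \ref{lemmaGammaQuantumState1}. If anything, the only subtlety is making sure the limit over growing Hilbert space dimensions in the definition of $\omega^*(G)$ is absorbed correctly, but since $\gamma_2(P) = 1$ holds uniformly for quantum systems of any dimension, passing to the supremum (and then the limit) preserves the inequality.
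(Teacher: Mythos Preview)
Your argument is correct and matches the paper's proof essentially line for line: rewrite $\omega^*(G)$ as a supremum of $|\langle G,P\rangle|$ over quantum systems, invoke Proposition~\ref{lemmaGammaQuantumState1} to enlarge the feasible set to $\{P:\gamma_2(P)\le 1\}$, and recognize the resulting supremum as $\gamma_2^*(G)$. Your extra remarks about non-negativity and the dimension limit are fine but not needed beyond what the paper does.
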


\begin{proof}
The statement follows from
\begin{eqnarray}
 \omega^*(G) &=& \sup_{P} \left\{ \left| \sum_{x,y} \pi(x,y) \sum_{a,b} V(a,b,x,y) \cdot \langle f_{x,a} \otimes f_{y,b}, P \rangle \right| ~:~ P \textit{~quantum system} \right\} \nonumber \\
&=& \sup_{P} \{ |\langle G, P \rangle | ~:~ P \textit{~quantum system} \} \nonumber \\
&\leq& \sup_{P} \{ |\langle G, P \rangle | ~:~ \gamma_2(P) \leq 1 \} \nonumber \\
&=& \gamma_2^*(G) ~, \nonumber
\end{eqnarray}
by using (\ref{eqnRepQuantSys}) in first line, (\ref{eqnAltRepTensVal}) in the second line, Proposition \ref{lemmaGammaQuantumState1} in the third line and that $\gamma_2$ is the dual of $\gamma_2^*$ in the fourth line.
\end{proof}

\section{Generalized Grothendieck Inequality}
\label{sectionGeneralGrothen}

The previous section can be summarized by the following chain of (in)equalities which holds for any two-prover game $G$:
\[
\varepsilon(G) = \omega(G) \leq \omega^*(G) \leq \gamma_2^*(G) ~.
\]
Recall that if $G$ corresponds to a two-prover game all entries $\langle G, e_{x,a} \otimes e_{y,b} \rangle := \pi(x,y) \cdot V(a,b,x,y)$ are non-negative.
Let $G$ now be an arbitrary element of the tensor product space $\ell_{1}^{|\mathcal{X} |}(\ell_{\infty}^{|\mathcal{A}|}) \otimes \ell_{1}^{|\mathcal{Y} |}(\ell_{\infty}^{|\mathcal{B}|})$, i.e., the tensor $G$ can have negative entries as well and therefore corresponds to a general Bell inequality (see Appendix \ref{sectionIntroBell} for a short introduction to Bell inequalities). According to Lemma \ref{lemmaCrossnorm} in Appendix \ref{sectionProjInjNorm}, $\gamma_2^*(G)$ is still an upper bound on $\varepsilon(G)$. However, in Section \ref{sectionGeneralGrothen1} we prove that there is an upper bound on the maximal ratio between $\gamma_2^*(G)$ and $\varepsilon(G)$.
\begin{theorem}[Generalized Grothendieck Inequality in Dual Tensor Form]
\label{theoremDualGrothen}
For any $G \in \ell_{1}^{|\mathcal{X} |}(\ell_{\infty}^{|\mathcal{A}|}) \otimes \ell_{1}^{|\mathcal{Y} |}(\ell_{\infty}^{|\mathcal{B}|})$ it holds that
\[
\gamma_2^*(G) \leq K \cdot \sqrt{|\mathcal{A}| |\mathcal{B}|} \cdot \varepsilon(G) ~,
\]
with $K = \frac{\pi}{2 \ln(1 + \sqrt{2})}$.
\end{theorem}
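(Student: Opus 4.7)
The target constant $K=\pi/(2\ln(1+\sqrt{2}))$ is precisely Krivine's bound on the real Grothendieck constant $K_G^{\mathbb{R}}$, so the plan is to treat this theorem as a ``block'' lifting of the classical Grothendieck inequality to the tensor product $\ell_1^{|\mathcal X|}(\ell_\infty^{|\mathcal A|})\otimes\ell_1^{|\mathcal Y|}(\ell_\infty^{|\mathcal B|})$. Concretely, I will first use Krivine's rounding to replace the Hilbert-space inner products naturally produced by the factorisation characterisation of $\gamma_2^*$ with scalar $\pm 1$ variables (paying a factor $K$), and then embed the resulting sign assignments into probability distributions on $\mathcal A$ and $\mathcal B$ (paying a dimension factor $\sqrt{|\mathcal A||\mathcal B|}$). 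Schematically,
\[
\gamma_2^*(G)\ \xrightarrow{\text{Krivine rounding}}\ K\cdot(\text{sign value})\ \xrightarrow{\text{embedding into simplex}}\ K\sqrt{|\mathcal A||\mathcal B|}\cdot\varepsilon(G).
\]

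For the rounding step, I would start from the vector/factorisation characterisation of $\gamma_2^*$ obtained in Section~\ref{sectionGeneralGrothen1} by dualising the Hilbert-space factorisation that defines $\gamma_2$ on $\ell_\infty(\ell_1)\otimes\ell_\infty(\ell_1)$: there exist Hilbert-space vectors $\{u_{x,a}\}$ and $\{v_{y,b}\}$, satisfying the normalisation dual to $\|\cdot\|_{\infty(1)}$, such that
\[
\gamma_2^*(G)=\sum_{x,a,y,b}\langle G,e_{x,a}\otimes e_{y,b}\rangle\,\langle u_{x,a},v_{y,b}\rangle.
\]
Applying Krivine's rounding---compose with the entrywise $\sinh^{-1}$-type map and then project a standard Gaussian---produces signs $s_{x,a},t_{y,b}\in\{\pm 1\}$ satisfying $\mathbb{E}[s_{x,a}t_{y,b}]=K^{-1}\langle u_{x,a},v_{y,b}\rangle$, and hence
\[
\gamma_2^*(G)\ \le\ K\cdot\sup_{s,t\in\{\pm 1\}}\sum_{x,a,y,b}\langle G,e_{x,a}\otimes e_{y,b}\rangle\,s_{x,a}\,t_{y,b}.
\]

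To finish, I would embed sign families into probabilistic strategies. For each $x$, the vector $(s_{x,a})_{a\in\mathcal A}\in\{\pm 1\}^{|\mathcal A|}$ has $\ell_2$-norm $\sqrt{|\mathcal A|}$, whereas any genuine strategy $P_A(\cdot\mid x)$ has $\ell_1$-norm $1$; a Cauchy--Schwarz argument then expresses the sign block $s_{x,\cdot}$ as at most $\sqrt{|\mathcal A|}$ times a convex combination of deterministic strategies for Alice. Doing the analogous thing on Bob's side produces the factor $\sqrt{|\mathcal A||\mathcal B|}$ and reduces the sign supremum to a supremum over local probabilistic strategies, which by Proposition~\ref{lemmaEqiuvClassVAr} equals $\varepsilon(G)$.

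The main obstacle I anticipate is precisely this final embedding: obtaining $\sqrt{|\mathcal A||\mathcal B|}$ rather than the naive $|\mathcal A||\mathcal B|$ one would get by writing each sign $s_{x,a}$ as a difference $P_A^+(a\mid x)-P_A^-(a\mid x)$ of two point masses. The square-root improvement really exploits the \emph{Hilbertian} character of $\gamma_2^*$: Krivine's signs are bounded Gaussian projections of unit vectors, so their aggregate $\ell_2$-mass per input is controlled, which is what allows one to ``average'' $|\mathcal A|$ sign variables into a single probability distribution at the cost of only $\sqrt{|\mathcal A|}$. This is also the step that pins down the $\sqrt{|\mathcal A||\mathcal B|}$ scaling: any approach that collapses a vector-valued Hilbert-space optimisation into a scalar-valued simplex optimisation incurs a Hilbert-type $\sqrt{\dim}$ loss per side, which is exactly the square-root improvement over \cite{commComplexNonSig} claimed in the introduction.
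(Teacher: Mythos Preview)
Your overall plan—apply Krivine rounding to the Hilbert-space vectors coming from the $\gamma_2$ factorisation, then compare the resulting scalar object to $\varepsilon(G)$—is exactly the paper's argument (the paper proves the primal inequality $\pi(P)\le K\sqrt{|\mathcal A||\mathcal B|}\,\gamma_2(P)$ and then dualises via Lemma~\ref{lemmaTakingDuals}, but the content is the same). However, your execution has a genuine gap in precisely the step you flag as the obstacle.

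The identity $\mathbb E[s_{x,a}t_{y,b}]=K^{-1}\langle u_{x,a},v_{y,b}\rangle$ is false as written: Krivine's lemma (Lemma~\ref{lemmaAlonNaor}) applies only to \emph{unit} vectors, and the vectors $u_{x,a}$ produced by $\gamma_2(P)\le 1$ are not unit—by Lemma~\ref{lemmaRepRSNormInf2} they only satisfy $\sum_a\|u_{x,a}\|_2^2\le 1$ for each $x$. What Krivine actually yields is
\[
\langle u_{x,a},v_{y,b}\rangle \;=\; K\,\|u_{x,a}\|_2\,\|v_{y,b}\|_2\,\mathbb E[s_{x,a}t_{y,b}],
\]
so after rounding the bilinear form becomes $\sum_{x,a,y,b}\langle G,e_{x,a}\otimes e_{y,b}\rangle\,\bigl(\|u_{x,a}\|_2\, s_{x,a}\bigr)\bigl(\|v_{y,b}\|_2\, t_{y,b}\bigr)$, not the pure sign sum you wrote. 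The $\sqrt{|\mathcal A|}$ then drops out directly:
\[
\bigl\|(\|u_{x,a}\|_2\, s_{x,a})_{x,a}\bigr\|_{\infty(1)}=\max_x\sum_a\|u_{x,a}\|_2\le\sqrt{|\mathcal A|}\cdot\Bigl(\sum_a\|u_{x,a}\|_2^2\Bigr)^{1/2}\le\sqrt{|\mathcal A|},
\]
and similarly on Bob's side, giving $K\sqrt{|\mathcal A||\mathcal B|}\,\varepsilon(G)$. By contrast, your proposed ``embedding of $\pm 1$ signs into strategies'' cannot beat the naive factor $|\mathcal A|$: a block $(s_{x,a})_a\in\{\pm 1\}^{|\mathcal A|}$ has $\|\cdot\|_{\infty(1)}$-norm exactly $|\mathcal A|$, and no Cauchy--Schwarz argument applied to the signs alone changes that. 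The square-root saving lives in the weights $\|u_{x,a}\|_2$ that you discarded when mis-stating Krivine, not in the signs themselves.
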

The standard Grothendieck inequality \cite{grothendieckInequality} (in dual tensor form) is obtained from our generalized version by setting the output alphabet sizes to $1$, i.e., $|\mathcal{A}| = |\mathcal{B}| = 1$, and therefore
\begin{equation}
\label{eqnStandGrothTensor}
\gamma_2^*(G) \leq K_G \cdot \varepsilon(G) ~ \textit{~for~all~} G \in \ell_{1}^{| \mathcal{X} |} \otimes \ell_{1}^{| \mathcal{Y} |} ~,
\end{equation}
where $1.68 \lesssim K_G \lesssim 1.78$ is the Grothendieck constant. The exact value of $K_G$ is still unknown. Note that $K = \frac{\pi}{2 \ln(1 + \sqrt{2})} \approx 1.78$ is the best known upper bound on the Grothendieck constant $K_G$ \cite{KervinGrothendick}. The best lower bound is $K_G \geq 1.6770$ due to Reeds and Davis \cite{lowerGrothendieckDavis, lowerGrothendieckReeds}.

\section{Direct-Product Theorem}
\label{sectionDirectProduct}

We will state a direct-product theorem for the dual Hilbertian tensor norm $\gamma_2^*$ in this section. We will use this result later in the application section about parallel repetition of two-prover games. 
Let $G_{A_1B_1} \in \ell_{1}^{|\mathcal{X}_1 |}(\ell_{\infty}^{|\mathcal{A}_1|}) \otimes \ell_{1}^{|\mathcal{Y}_1 |}(\ell_{\infty}^{|\mathcal{B}_1|})$ and $G_{A_2B_2} \in \ell_{1}^{|\mathcal{X}_2 |}(\ell_{\infty}^{|\mathcal{A}_2|}) \otimes \ell_{1}^{|\mathcal{Y}_2 |}(\ell_{\infty}^{|\mathcal{B}_2|})$ be arbitrary two-prover games between Alice, Bob and the verifier. We denote by $G:= G_{A_1B_1} \odot G_{A_2B_2}$ the composition of these two games (see also Section \ref{sectionParalleRep}). Formally we have 
\begin{equation}
\label{eqnDefDecomp}
G_{A_1B_1}  \odot G_{A_2B_2}  := \sum_{i,j} (G_{A_1}^i \otimes G_{A_2}^j) \otimes (G_{B_1}^i \otimes G_{B_2}^j) ~,
\end{equation}
with $G_{A_1B_1} = \sum_i G_{A_1}^i \otimes G_{B_1}^i$ and $G_{A_2B_2} = \sum_j G_{A_2}^j \otimes G_{B_2}^j$ arbitrary decompositions.
The game $G_{A_1B_1}  \odot G_{A_2B_2}$ is therefore an element of the tensor product space $\left( \ell_{1}^{|\mathcal{X}_1 |}(\ell_{\infty}^{|\mathcal{A}_1|}) \otimes \ell_{1}^{|\mathcal{X}_2 |}(\ell_{\infty}^{|\mathcal{A}_2|}) \right) \otimes \left( \ell_{1}^{|\mathcal{Y}_1 |}(\ell_{\infty}^{|\mathcal{B}_1|}) \otimes \ell_{1}^{|\mathcal{Y}_2 |}(\ell_{\infty}^{|\mathcal{B}_2|}) \right)$.
Alice and Bob will then play these two games in parallel and try two win both rounds.

To be more explicit, we define the two round game $G$ as follows
\[
\langle G, e_{\bar{x},\bar{a}} \otimes e_{\bar{y},\bar{b}} \rangle := \pi_1(x_1,y_1) \cdot \pi_2(x_2,y_2) \cdot V_1(a_1,b_1,x_1,y_1) \cdot V_2(a_2,b_2,x_2,y_2) ~,
\]
with $e_{\bar{x},\bar{a}} := e_{x_1} \otimes e_{a_1} \otimes e_{x_2} \otimes e_{a_2}$, $e_{\bar{y},\bar{b}} := e_{y_1} \otimes e_{b_1} \otimes e_{y_2} \otimes e_{b_2}$, and $G_{A_1B_1} := (\pi_1, V_1)$ and $G_{A_2B_2} := (\pi_2, V_2)$. 
The game $G = G_{A_1B_1} \odot G_{A_2B_2}$ is therefore executed by Alice and Bob with systems $A_1A_2$ belonging to Alice and $B_1B_2$ to Bob. We say that the game is bipartite with respect to the partition $A_1A_2:B_1B_2$ between Alice and Bob, where $A_i$ corresponds to the system $\ell_{1}^{|\mathcal{X}_i |}(\ell_{\infty}^{|\mathcal{A}_i|})$ and $B_i$ to the system $\ell_{1}^{|\mathcal{Y}_i |}(\ell_{\infty}^{|\mathcal{B}_i|})$, for $i \in \{1,2\}$.

Let $P \in \left( \ell_{\infty}^{|\mathcal{X}_1 |}(\ell_{1}^{|\mathcal{A}_1|}) \otimes \ell_{\infty}^{|\mathcal{X}_2 |}(\ell_{1}^{|\mathcal{A}_2|}) \right) \otimes \left( \ell_{\infty}^{|\mathcal{Y}_1 |}(\ell_{1}^{|\mathcal{B}_1|}) \otimes \ell_{\infty}^{|\mathcal{Y}_2 |}(\ell_{1}^{|\mathcal{B}_2|}) \right)$ represent an arbitrary strategy of Alice and Bob for the two round game $G$. The winning probability of this strategy on the game $G$ is then given by $\langle G_{A_1B_1} \odot G_{A_2B_2}, P \rangle$. Note that, in general, the maximal winning probability is achieved when $P$ is a non-product strategy, i.e., $P \neq P_{A_1B_1} \odot P_{A_2B_2}$. Therefore, in general, there exist games $G_{A_1B_1}$ and $G_{A_2B_2}$ such that
\begin{eqnarray}
\sup_{P} \langle G_{A_1B_1} \odot G_{A_2B_2}, P \rangle &>& \sup_{P_{A_1B_1}, P_{A_2B_2}} \langle G_{A_1B_1} \odot G_{A_2B_2}, P_{A_1B_1} \odot P_{A_2B_2} \rangle \nonumber \\
&=&  \sup_{P_{A_1B_1}, P_{A_2B_2}} \langle G_{A_1B_1}, P_{A_1B_1} \rangle \cdot  \langle G_{A_2B_2}, P_{A_2B_2} \rangle ~, \nonumber
\end{eqnarray}
with $P, P_{A_1B_1}, P_{A_2B_2}$ either all classical or all entangled strategies.
Hence, it is in general impossible to upper bound the classical (entangled) value of the game $G_{A_1B_1} \odot G_{A_2B_2}$ by the product of the classical (entangled) values of the individual games.
On the other hand, the dual Hilbertian tensor norm has the nice property that one can actually upper bound the value of the parallel executed games by the product of the value of the single rounds.
\begin{theorem}
\label{lemmaProductTheoremGenNorm1}
Let $G_{A_1B_1} \in \ell_{1}^{|\mathcal{X}_1 |}(\ell_{\infty}^{|\mathcal{A}_1|}) \otimes \ell_{1}^{|\mathcal{Y}_1 |}(\ell_{\infty}^{|\mathcal{B}_1|})$,  $G_{A_2B_2} \in \ell_{1}^{|\mathcal{X}_2 |}(\ell_{\infty}^{|\mathcal{A}_2|}) \otimes \ell_{1}^{|\mathcal{Y}_2 |}(\ell_{\infty}^{|\mathcal{B}_2|})$  and $G_{A_1B_1} \odot G_{A_2B_2}$ be bipartite with respect to the partition $A_1A_2:B_1B_2$. Then
\[
\gamma_2^*(G_{A_1B_1} \odot G_{A_2B_2}) \leq \gamma_2^*(G_{A_1B_1}) \cdot \gamma_2^*(G_{A_2B_2}) ~.
\]
\end{theorem}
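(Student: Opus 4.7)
The natural route is via duality, since $\gamma_2$ and $\gamma_2^*$ are dual norms. The plan is to contract one of the two games against a $\gamma_2$-normalized tensor, show that the resulting reduced tensor still has bounded $\gamma_2$-norm, and then close the argument by invoking duality a second time.

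First, I would write
\[
\gamma_2^*(G_{A_1B_1} \odot G_{A_2B_2}) = \sup\{|\langle G_{A_1B_1} \odot G_{A_2B_2}, P\rangle| : \gamma_2(P) \leq 1\},
\]
where $P$ ranges over the combined Alice/Bob space $\bigl(\ell_\infty^{|\mathcal{X}_1|}(\ell_1^{|\mathcal{A}_1|}) \otimes \ell_\infty^{|\mathcal{X}_2|}(\ell_1^{|\mathcal{A}_2|})\bigr) \otimes \bigl(\ell_\infty^{|\mathcal{Y}_1|}(\ell_1^{|\mathcal{B}_1|}) \otimes \ell_\infty^{|\mathcal{Y}_2|}(\ell_1^{|\mathcal{B}_2|})\bigr)$. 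Fix such a $P$ and define the partial contraction $Q \in \ell_\infty^{|\mathcal{X}_2|}(\ell_1^{|\mathcal{A}_2|}) \otimes \ell_\infty^{|\mathcal{Y}_2|}(\ell_1^{|\mathcal{B}_2|})$ by
\[
\langle f_{x_2,a_2} \otimes f_{y_2,b_2}, Q\rangle := \sum_{x_1,a_1,y_1,b_1} \pi_1(x_1,y_1)\, V_1(a_1,b_1,x_1,y_1) \cdot \langle f_{x_1,a_1,x_2,a_2} \otimes f_{y_1,b_1,y_2,b_2}, P\rangle.
\]
From the definition (\ref{eqnDefDecomp}) of $\odot$ it then follows that $\langle G_{A_1B_1} \odot G_{A_2B_2}, P\rangle = \langle G_{A_2B_2}, Q\rangle$.

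The heart of the proof is the estimate $\gamma_2(Q) \leq \gamma_2^*(G_{A_1B_1}) \cdot \gamma_2(P)$. Granting this, a second use of duality gives
\[
|\langle G_{A_2B_2}, Q\rangle| \leq \gamma_2^*(G_{A_2B_2}) \cdot \gamma_2(Q) \leq \gamma_2^*(G_{A_1B_1}) \cdot \gamma_2^*(G_{A_2B_2}),
\]
and taking the supremum over $P$ yields the theorem. To establish the crucial estimate I would invoke the Hilbert-space factorization characterizing $\gamma_2$: write
\[
\langle f_{x_1,a_1,x_2,a_2} \otimes f_{y_1,b_1,y_2,b_2}, P\rangle = \langle u_{x_1,a_1,x_2,a_2},\, v_{y_1,b_1,y_2,b_2}\rangle_H
\]
in a Hilbert space $H$, with vectors whose $\ell_\infty(\ell_1)$-type norm witnesses attain $\gamma_2(P)$. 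Combining this with a Hilbert-space factorization of $G_{A_1B_1}$ attaining $\gamma_2^*(G_{A_1B_1})$—i.e., viewing $G_{A_1B_1}$ as a bilinear form on the Hilbert-space vectors of any $\gamma_2$-unit-norm tensor, with operator norm $\gamma_2^*(G_{A_1B_1})$—one can construct new vectors $\tilde u_{x_2,a_2}, \tilde v_{y_2,b_2}$ in an enlarged Hilbert space whose inner products reproduce $Q$ and whose norm witnesses give precisely the desired bound.

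The main obstacle is this last construction. Concretely, one must match the $\ell_\infty(\ell_1)$-type norm conditions certifying $\gamma_2(P) \leq 1$ on the combined round indices $(x_1,a_1,x_2,a_2)$ against those needed on just $(x_2,a_2)$, while absorbing the contraction against $G_{A_1B_1}$ through the Hilbert-space operator associated to its optimal factorization. The bipartite structure $A_1A_2 : B_1B_2$ is what makes this work: since Alice's vectors $u$ depend only on her combined index and Bob's $v$ only on his, the round-$1$ contraction commutes with the Alice/Bob factorization, and the resulting round-$2$ vectors remain genuine Hilbert-space elements whose norms can be controlled by the operator-norm bound $\gamma_2^*(G_{A_1B_1})$. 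Without the prover-based bipartition, the factorization would mix first-round Alice and first-round Bob contributions, and the clean operator interpretation would fail.
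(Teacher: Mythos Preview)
Your dual strategy—contract $P$ against $G_{A_1B_1}$ to obtain $Q$, then bound $\gamma_2(Q)$—is a legitimate route, and it is genuinely different from what the paper does. But the proposal is not a proof: you explicitly flag the contraction estimate $\gamma_2(Q)\le\gamma_2^*(G_{A_1B_1})\,\gamma_2(P)$ as ``the main obstacle'' and then offer only a heuristic. The phrase ``a Hilbert-space factorization of $G_{A_1B_1}$ attaining $\gamma_2^*(G_{A_1B_1})$'' is not well-defined as stated; $\gamma_2^*$ is not a factorization-through-Hilbert-space norm, and $G_{A_1B_1}$ does not act as a single bounded operator on the Hilbert space carrying the vectors $u_{x_1,a_1,x_2,a_2}$. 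To actually carry out your construction you would have to invoke the infimum representation
\[
\gamma_2^*(G_{A_1B_1})=\inf\Bigl\{\|(\mu_{ij})\|_{2\to 2}\cdot \ell_2(G_{A_1}^i)\cdot \ell_2(G_{B_1}^j)\;:\;G_{A_1B_1}=\textstyle\sum_{i,j}\mu_{ij}\,G_{A_1}^i\otimes G_{B_1}^j\Bigr\},
\]
split the matrix $(\mu_{ij})$ via its SVD, and push one square-root factor onto each side to build $\tilde u_{x_2,a_2},\tilde v_{y_2,b_2}$; only then do the $1(\infty)\!\to\!2$ norm bounds fall out from the operator-norm interpretation of Lemma~\ref{lemmaRepRSNormInf2}. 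None of this is in your write-up.

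The paper bypasses all of this with a short primal argument. It works directly with the representation~(\ref{eqnDefGammaDual}) of $\gamma_2^*$: take optimal decompositions $G_{A_1B_1}=\sum_{i,j}\alpha_{ij}\,G_{A_1}^i\otimes G_{B_1}^j$ and $G_{A_2B_2}=\sum_{k,l}\beta_{kl}\,G_{A_2}^k\otimes G_{B_2}^l$, and observe that their tensor is a feasible decomposition of $G_{A_1B_1}\odot G_{A_2B_2}$. The value of that decomposition factors because (i) $\|(\alpha_{ij})\otimes(\beta_{kl})\|_{2\to 2}=\|(\alpha_{ij})\|_{2\to 2}\cdot\|(\beta_{kl})\|_{2\to 2}$ (Bennett), and (ii) the local $1(\infty)$-norms are multiplicative on tensors (Lemma~\ref{lemmaInf1NormProduct}), so $\ell_2(G_{A_1}^i\otimes G_{A_2}^k)\le \ell_2(G_{A_1}^i)\cdot\ell_2(G_{A_2}^k)$. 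That is the whole proof. Your approach, once completed, uses exactly the same decomposition of $G_{A_1B_1}$ and an equivalent operator-norm splitting, but wrapped inside two layers of duality; the paper's route is strictly shorter.
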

The proof is given in Section \ref{sectionDirctProductTheo}. There we also provide additional direct product results.

\section{Applications}

\subsection{Application I: Parallel Repetition of Entangled Games}
\label{sectionApp1}

In this section we will provide an alternative proof for the parallel repetition theorem for XOR games given in \cite{paralleXOR}. The proof of \cite{paralleXOR}  contains two parts. In the first part, they show that the sum of XOR games obeys a perfect product rule by using semi-definite programming (SDP) techniques and then, in a second step, they use Fourier analysis to get a perfect parallel repetition theorem for XOR games. The first part corresponds to applying the direct product result for the $\gamma_2^*$ tensor norm (see also the remark at the very end of Section \ref{sectionDirctProductTheo}) applied on a game $\tilde{G} = (\pi, \tilde{V})$, but where $\tilde{V}$ has now range $\{ -1, +1\}$ instead of $\{0,1\}$ and $\tilde{G}$ is interpreted as an element of $\ell_{1}^{|\mathcal{X} |} \otimes \ell_{1}^{|\mathcal{Y} |}$. Using Lemma \ref{lemmaQuantumCorrGamma} in Section \ref{sectionHilbertQuantum}, it is not difficult to show that $\gamma_2^*(\tilde{G}) = 2 \cdot \omega^*(\tilde{G}) - 1$. Hence, $\gamma_2^*(\tilde{G})$ is the \emph{quantum bias} of an XOR game, denoted by $\varepsilon_q(\tilde{G})$ in \cite{paralleXOR}. The second part is required because $\omega^*(\tilde{G})$ is a rescaling of $\gamma_2^*(\tilde{G})$ which is due to the fact that $\tilde{V}$ has range $\pm 1$.

The crucial idea in our alternative proof is to interpret the XOR game $G$ as an element of $\ell_{1}^{|\mathcal{X} |}(\ell_{\infty}^{|\mathcal{A}|}) \otimes \ell_{1}^{|\mathcal{Y} |}(\ell_{\infty}^{|\mathcal{B}|})$, with $|\mathcal{A}| = |\mathcal{B}| = 2$, instead of $\ell_{1}^{|\mathcal{X} |} \otimes \ell_{1}^{|\mathcal{Y} |}$.
Furthermore, if $G$ is an XOR game, Proposition \ref{lemmaUpperOmegaGamma} can be strengthened to
\begin{proposition}
\label{lemmaOmegaGammaEquiv1}
Let $G = (\pi,V)$ be an XOR game with $G \in \ell_{1}^{|\mathcal{X} |}(\ell_{\infty}^{|\mathcal{A}|}) \otimes \ell_{1}^{|\mathcal{Y} |}(\ell_{\infty}^{|\mathcal{B}|})$ and $|\mathcal{A}| = |\mathcal{B}| = 2$. Then
\[
\omega^*(G) = \gamma_2^*(G) ~.
\]
\end{proposition}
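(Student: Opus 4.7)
Proposition \ref{lemmaUpperOmegaGamma} already provides $\omega^*(G)\leq\gamma_2^*(G)$, so the task reduces to establishing the reverse inequality. The plan is to exploit the XOR structure to collapse the computation onto correlators $E_{xy}$, where the tight characterization from Lemma~\ref{lemmaQuantumCorrGamma} is available, and then argue that the ``marginal'' degrees of freedom contribute nothing beyond what a valid quantum system achieves.

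First, because $V$ depends only on $a\oplus b$, set $\alpha_{xy}:=V(0,0,x,y)=V(1,1,x,y)$ and $\beta_{xy}:=V(0,1,x,y)=V(1,0,x,y)$ and expand
\[
V(a,b,x,y) \;=\; \frac{\alpha_{xy}+\beta_{xy}}{2} + \frac{\alpha_{xy}-\beta_{xy}}{2}(-1)^{a+b}.
\]
Using (\ref{eqnAltRepTensVal}), any tensor $P$ satisfies
\[
\langle G,P\rangle \;=\; \sum_{x,y}\pi(x,y)\,\tfrac{\alpha_{xy}+\beta_{xy}}{2}\,M_{xy}(P) \;+\; \tfrac{1}{2}\sum_{x,y}\pi(x,y)(\alpha_{xy}-\beta_{xy})\,E_{xy}(P),
\]
with $M_{xy}(P):=\sum_{a,b}\langle f_{x,a}\otimes f_{y,b},P\rangle$ the marginal sum and $E_{xy}(P):=\sum_{a,b}(-1)^{a+b}\langle f_{x,a}\otimes f_{y,b},P\rangle$ the correlator. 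The marginal coefficients are non-negative because $V\in\{0,1\}$.

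Second, I would show that the supremum in $\gamma_2^*(G)$ is attained at a $P^\star$ of quantum form (in particular, with $M_{xy}(P^\star)=1$). Given an arbitrary $P$ with $\gamma_2(P)\leq 1$, take a Hilbert-space factorization $\langle f_{x,a}\otimes f_{y,b},P\rangle=\langle u_{x,a},v_{y,b}\rangle$ with the $\ell_\infty(\ell_1)$-type norm bounds built into the definition of $\gamma_2$. The differences $\tilde u_x:=u_{x,0}-u_{x,1}$ and $\tilde v_y:=v_{y,0}-v_{y,1}$ then satisfy $\|\tilde u_x\|,\|\tilde v_y\|\leq 1$ and reproduce the correlator, $E_{xy}(P)=\langle\tilde u_x,\tilde v_y\rangle$. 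Tsirelson's theorem delivers a pure state $|\Psi\rangle$ and $\pm 1$-observables $A_x,B_y$ with $\langle\Psi|A_x\otimes B_y|\Psi\rangle=\langle\tilde u_x,\tilde v_y\rangle$, and the projective measurements $M_x^a:=(I+(-1)^aA_x)/2$, $N_y^b:=(I+(-1)^bB_y)/2$ assemble into a quantum strategy $P_{\mathrm q}$ satisfying $E_{xy}(P_{\mathrm q})=E_{xy}(P)$ and $M_{xy}(P_{\mathrm q})=1$. Combining this with the symmetrization $P\mapsto\tfrac{1}{2}(P+\sigma\cdot P)$, where $\sigma$ flips both outputs (an operation that preserves $\gamma_2$), allows one to align $M_{xy}(P)$ with the quantum value without decreasing $\langle G,P\rangle$.

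Plugging $P_{\mathrm q}$ back into the decomposition gives exactly the expression for $\omega^*(G)$ under a quantum strategy, and maximizing over $P_{\mathrm q}$ yields $\omega^*(G)$; together with the non-negativity of the marginal coefficients this furnishes $\gamma_2^*(G)\leq\omega^*(G)$. The main obstacle I anticipate is the reduction to $M_{xy}=1$: a priori the constraint $\gamma_2(P)\leq 1$ does not force the marginals to be normalized, so one must verify carefully that any ``excess marginal mass'' either is forbidden by $\gamma_2(P)\leq 1$ or can be redistributed into the correlator $E_{xy}$ without violating the Tsirelson bound $|E_{xy}|\leq 1$. Most of the remaining work is a clean application of Lemma~\ref{lemmaQuantumCorrGamma} to the correlation matrix $[E_{xy}]$ viewed in $\ell_1^{|\mathcal X|}\otimes\ell_1^{|\mathcal Y|}$.
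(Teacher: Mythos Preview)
Your overall strategy coincides with the paper's: reduce to the correlator $E_{xy}(P)=\langle u_{x,0}-u_{x,1},\,v_{y,0}-v_{y,1}\rangle$, invoke Tsirelson (Lemma~\ref{theoremTsirelson}) to realize $E_{xy}$ by a quantum strategy $P_{\mathrm q}$, and then compare the marginal part. The gap is precisely where you flag it, and your proposed fix does not work.

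Concretely, the symmetrization $P\mapsto\tfrac12(P+\sigma\!\cdot\!P)$ with $\sigma$ the simultaneous output flip leaves both $E_{xy}$ \emph{and} $M_{xy}$ invariant, so it cannot ``align $M_{xy}(P)$ with the quantum value''. What actually closes the argument is that $\gamma_2(P)\le 1$ already forces $M_{xy}(P)\le 1$. By Lemma~\ref{lemmaRepRSNormInf2} the factorization bound reads $\max_{s\in\{\pm1\}^2}\bigl\|s(0)u_{x,0}+s(1)u_{x,1}\bigr\|_2\le 1$ (and similarly for $v$), so not only $\|u_{x,0}-u_{x,1}\|_2\le 1$ but also $\|u_{x,0}+u_{x,1}\|_2\le 1$; Cauchy--Schwarz then gives
\[
M_{xy}(P)=\langle u_{x,0}+u_{x,1},\,v_{y,0}+v_{y,1}\rangle\le 1=M_{xy}(P_{\mathrm q}).
\]
Since the marginal coefficients $\pi(x,y)\tfrac{\alpha_{xy}+\beta_{xy}}{2}$ are nonnegative and $E_{xy}(P_{\mathrm q})=E_{xy}(P)$, your displayed decomposition immediately yields $\langle G,P\rangle\le\langle G,P_{\mathrm q}\rangle\le\omega^*(G)$, which is exactly the paper's route (it phrases the same inequality as $P_{x,y}^{0,0}+P_{x,y}^{1,1}\le\Pr[a=b]$ and $P_{x,y}^{0,1}+P_{x,y}^{1,0}\le\Pr[a\neq b]$). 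No redistribution into $E_{xy}$ and no appeal to Lemma~\ref{lemmaQuantumCorrGamma} on $[E_{xy}]$ is needed.
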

The proof can be found in Section \ref{sectionDualHilXor}.
We now have all the tools we need in order to give an alternative proof of the perfect parallel repetition theorem for entangled XOR games.
\begin{theorem}
\label{theoremXORGames}
Let $G = (\pi,V)$ be an XOR game with $G \in \ell_{1}^{|\mathcal{X} |}(\ell_{\infty}^{|\mathcal{A}|}) \otimes \ell_{1}^{|\mathcal{Y} |}(\ell_{\infty}^{|\mathcal{B}|})$ and $|\mathcal{A}| = |\mathcal{B}| = 2$. Then
\[
\omega^*(G^{\odot N}) = \omega^*(G)^N ~.
\]
\end{theorem}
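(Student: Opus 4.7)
The plan is to combine the three ingredients assembled just before the statement: the general upper bound $\omega^*(\cdot) \leq \gamma_2^*(\cdot)$ from Proposition~\ref{lemmaUpperOmegaGamma}, the tight equality $\omega^*(G) = \gamma_2^*(G)$ for a single XOR game from Proposition~\ref{lemmaOmegaGammaEquiv1}, and the multiplicativity of $\gamma_2^*$ under parallel composition from Theorem~\ref{lemmaProductTheoremGenNorm1}. The lower bound $\omega^*(G^{\odot N}) \geq \omega^*(G)^N$ is the easy direction: I would note it follows by taking the $N$-fold tensor product of an optimal single-round entangled strategy (optimal shared state, and the same measurements in each coordinate), since the corresponding value factorizes exactly. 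This step deserves only a single sentence.

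For the non-trivial upper bound, I would chain the three ingredients as
\begin{equation*}
\omega^*(G^{\odot N}) \;\leq\; \gamma_2^*(G^{\odot N}) \;\leq\; \gamma_2^*(G)^N \;=\; \omega^*(G)^N,
\end{equation*}
where the first inequality is Proposition~\ref{lemmaUpperOmegaGamma} applied to $G^{\odot N}$, the middle inequality is Theorem~\ref{lemmaProductTheoremGenNorm1} applied $N-1$ times (using that $G^{\odot N}$ is bipartite with respect to the Alice/Bob partition of the $N$ rounds), and the final equality is Proposition~\ref{lemmaOmegaGammaEquiv1} applied to the single-round XOR game $G$. Combining with the trivial lower bound gives the claimed identity. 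An induction on $N$ suffices for the iterated application of the direct-product theorem, with base case $N=1$ trivial and inductive step writing $G^{\odot N} = G^{\odot(N-1)} \odot G$ and checking that the bipartition is preserved.

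The only subtle point I would flag explicitly is that although $G$ is an XOR game, the parallel repetition $G^{\odot N}$ is \emph{not} itself an XOR game (its winning predicate is a conjunction of $N$ XOR conditions on answer strings in $\{0,1\}^N$), so Proposition~\ref{lemmaOmegaGammaEquiv1} does \emph{not} apply to $G^{\odot N}$ directly. This is precisely why one needs the general upper bound $\omega^*(\cdot) \leq \gamma_2^*(\cdot)$ for $G^{\odot N}$ and the tight equality only for a single copy of $G$; the multiplicativity of $\gamma_2^*$ bridges the two. There is no further obstacle, so the proof should take only a handful of lines once this observation is made.
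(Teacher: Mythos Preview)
Your proposal is correct and mirrors the paper's proof exactly: the paper first notes the trivial lower bound $\omega^*(G^{\odot N}) \geq \omega^*(G)^N$, then chains Proposition~\ref{lemmaUpperOmegaGamma}, Theorem~\ref{lemmaProductTheoremGenNorm1}, and Proposition~\ref{lemmaOmegaGammaEquiv1} to obtain $\omega^*(G^{\odot N}) \leq \gamma_2^*(G^{\odot N}) \leq \gamma_2^*(G)^N = \omega^*(G)^N$. Your additional remark that $G^{\odot N}$ is not itself an XOR game (so Proposition~\ref{lemmaOmegaGammaEquiv1} cannot be applied to it directly) is a worthwhile clarification that the paper leaves implicit.
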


\begin{proof}
It is clear that $\omega^*(G^{\odot N}) \geq \omega^*(G)^N$ by executing the rounds individually. For the other direction,
by Proposition \ref{lemmaUpperOmegaGamma}, we have
\[
\omega^*(G^{\odot N}) \leq \gamma_2^*(G^{\odot N}) ~.
\]
Applying the direct-product result of Theorem \ref{lemmaProductTheoremGenNorm1} and using Proposition \ref{lemmaOmegaGammaEquiv1} yields
\[
\omega^*(G^{\odot N}) \leq \gamma_2^*(G^{\odot N}) \leq  \gamma_2^*(G)^N = \omega^*(G)^N ~.
\]
\end{proof}

\subsection{Application II: Upper Bound on the Value of Two-Prover Games}
\label{sectionApp2}

In the following we will give an upper bound on the  maximal ratio between the entangled and the classical value of two-prover games, i.e., we will compute
\[
v  := \sup_G \left\{ \frac{\omega^*(G)}{\omega(G)} ~:~ G = (\pi, V) ~ \textit{two-prover game} \right\} ~.
\]
The best upper bound known so far \cite{commComplexNonSig} states that $v \leq O(|\mathcal{A}| \cdot |\mathcal{B}|)$ independently of the input dimensions $|\mathcal{X}|$ and $|\mathcal{Y}|$. If we fix the dimension of the local Hilbert spaces to $d$ in the computation of $\omega^*(G)$, it has been shown \cite{operatorSpace2} that $v \leq O(d)$, independently of the input and output dimensions. Note that these two results also hold if one considers the more general setting of Bell inequalities instead of two-prover games.
We prove a result which improves the previous upper bound of \cite{commComplexNonSig} by a square root factor.
\begin{theorem}
\label{theoremBellupperNew}
Let the input alphabet sizes $|\mathcal{X}|$, $|\mathcal{Y}|$ and the output alphabet sizes $|\mathcal{A}|$, $|\mathcal{B}|$ of the two-prover games be finite. Then
\[
v \leq  K \cdot \sqrt{|\mathcal{A}|  |\mathcal{B}|} ~,
\]
independently of the input dimensions $|\mathcal{X}|$ and $|\mathcal{Y}|$ and with $K = \frac{\pi}{2 \ln(1 + \sqrt{2})}$.
\end{theorem}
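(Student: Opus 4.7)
The plan is to assemble this theorem directly by chaining the three key results established earlier in the paper, since the statement is essentially a corollary of the machinery already developed in Sections \ref{sectionConnProverTensor} and \ref{sectionGeneralGrothen}. The skeleton of the argument will be: given an arbitrary two-prover game $G = (\pi, V)$ interpreted as an element of $\ell_{1}^{|\mathcal{X}|}(\ell_{\infty}^{|\mathcal{A}|}) \otimes \ell_{1}^{|\mathcal{Y}|}(\ell_{\infty}^{|\mathcal{B}|})$, bound $\omega^*(G)$ from above by $\gamma_2^*(G)$, then use the generalized Grothendieck inequality to trade $\gamma_2^*(G)$ for $\varepsilon(G)$ at the cost of a factor $K\sqrt{|\mathcal{A}||\mathcal{B}|}$, and finally use the identification $\varepsilon(G) = \omega(G)$ to conclude.

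More concretely, I would begin by applying Proposition \ref{lemmaUpperOmegaGamma} to get $\omega^*(G) \leq \gamma_2^*(G)$. This step reflects the fact that any quantum system $P$ satisfies $\gamma_2(P) = 1$ (Proposition \ref{lemmaGammaQuantumState1}), so the supremum defining $\omega^*(G)$ is taken over a subset of the unit ball of $\gamma_2$, which is exactly the unit ball dual to $\gamma_2^*$. Next, I invoke Theorem \ref{theoremDualGrothen}, the generalized Grothendieck inequality in dual tensor form, which gives
\[
\gamma_2^*(G) \leq K \cdot \sqrt{|\mathcal{A}||\mathcal{B}|} \cdot \varepsilon(G),
\]
with $K = \frac{\pi}{2 \ln(1+\sqrt{2})}$. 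Finally, since $G$ arises from a two-prover game and therefore has non-negative entries, Proposition \ref{lemmaEqiuvClassVAr} identifies $\varepsilon(G)$ with the classical value $\omega(G)$.

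Combining these three facts yields
\[
\omega^*(G) \leq \gamma_2^*(G) \leq K \sqrt{|\mathcal{A}||\mathcal{B}|}\cdot \varepsilon(G) = K\sqrt{|\mathcal{A}||\mathcal{B}|}\cdot \omega(G),
\]
so that $\omega^*(G)/\omega(G) \leq K\sqrt{|\mathcal{A}||\mathcal{B}|}$ uniformly in $G$; taking the supremum gives $v \leq K \sqrt{|\mathcal{A}||\mathcal{B}|}$ as claimed, with no dependence on $|\mathcal{X}|$ or $|\mathcal{Y}|$ since none of the three invoked estimates involves the input alphabet sizes.

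Given this structure, there is no real obstacle inside the proof of the theorem itself: the hard work is already packaged into Theorem \ref{theoremDualGrothen} and Proposition \ref{lemmaGammaQuantumState1}. I would only flag one small subtlety to verify in the write-up, namely that the assumption of non-negativity of $\langle G, e_{x,a}\otimes e_{y,b}\rangle$ required to apply Proposition \ref{lemmaEqiuvClassVAr} indeed holds for games (it does, since those entries equal $\pi(x,y)\cdot V(a,b,x,y) \geq 0$), so that the chain $\omega^*(G) \leq \gamma_2^*(G) \leq K\sqrt{|\mathcal{A}||\mathcal{B}|}\,\omega(G)$ closes cleanly and the supremum over all games is controlled by the claimed bound.
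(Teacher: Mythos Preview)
Your proposal is correct and is essentially identical to the paper's own proof: the paper also chains Proposition~\ref{lemmaUpperOmegaGamma} ($\omega^*(G)\le\gamma_2^*(G)$), Theorem~\ref{theoremDualGrothen} ($\gamma_2^*(G)\le K\sqrt{|\mathcal{A}||\mathcal{B}|}\,\varepsilon(G)$), and Proposition~\ref{lemmaEqiuvClassVAr} ($\varepsilon(G)=\omega(G)$) to obtain $v\le K\sqrt{|\mathcal{A}||\mathcal{B}|}$. The only cosmetic difference is that the paper writes the chain as $v=\sup_G \omega^*(G)/\omega(G)\le \sup_G \gamma_2^*(G)/\varepsilon(G)\le K\sqrt{|\mathcal{A}||\mathcal{B}|}$ rather than fixing a game first, but the content is the same.
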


\begin{proof}
By using Proposition \ref{lemmaEqiuvClassVAr} and Proposition \ref{lemmaUpperOmegaGamma} and the dual of the generalized Grothendieck inequality in tensor form given in Theorem \ref{theoremDualGrothen} we get
\[
v = \sup_G \frac{\omega^*(G)}{\omega(G)} \leq \sup_G \frac{ \gamma_2^*(G)}{\varepsilon(G)} \leq  K \cdot \sqrt{|\mathcal{A}|  |\mathcal{B}|} ~,
\]
where the supremum is over two-prover games $G \in \ell_{1}^{|\mathcal{X} |}(\ell_{\infty}^{|\mathcal{A}|}) \otimes \ell_{1}^{|\mathcal{Y} |}(\ell_{\infty}^{|\mathcal{B}|})$.
\end{proof}

This theorem can be seen as a generalization of Tsirelson's work in \cite{tirelsonGrothendieckearly}. In particular, 
an XOR game $G = (\pi, V)$ can be interpreted as an element of $\ell_{1}^{|\mathcal{X}|} \otimes \ell_{1}^{|\mathcal{Y}|}$ with the respective correlation strategies $P \in \ell_{\infty}^{|\mathcal{X}|} \otimes \ell_{\infty}^{|\mathcal{Y}|}$ containing elements in $[-1,1]$ corresponding to expectation values. Proposition \ref{lemmaEqiuvClassVAr} and Proposition \ref{lemmaUpperOmegaGamma} (in this case even with equality) can also be proven for this setting of correlation strategies. Therefore, by using the standard Grothendieck inequality in dual tensor form (see also (\ref{eqnStandGrothTensor})) one obtains \cite{tirelsonGrothendieckearly}
\[
 \sup_{G} \frac{\omega^*(G)}{ \omega(G) } = \sup_{G} \frac{\gamma_2^*(G)}{ \varepsilon(G) } \leq K_G
\]
for $G$ an XOR game, independently of the input dimensions $|\mathcal{X}|$ and $|\mathcal{Y}|$.

\section{Concluding Remarks and Open Questions}

We have investigated the Hilbertian tensor norm $\gamma_2$ and its dual $\gamma_2^*$ defined over the tensor spaces $\ell_{\infty}^{|\mathcal{X} |}(\ell_{1}^{|\mathcal{A}|}) \otimes \ell_{\infty}^{|\mathcal{Y} |}(\ell_{1}^{|\mathcal{B}|})$ and $\ell_{1}^{|\mathcal{X} |}(\ell_{\infty}^{|\mathcal{A}|}) \otimes \ell_{1}^{|\mathcal{Y} |}(\ell_{\infty}^{|\mathcal{B}|})$.
We have given an alternative proof of the perfect parallel repetition theorem for entangled XOR games using our direct-product theorems for these tensor norms. Furthermore, by applying our generalized Grothendieck inequality we could establish an upper bound on the maximal ratio between entangled and classical values of two-prover games.

As a line of possible future work, it would be interesting to investigate if our generalization of Grothendieck's inequality and our direct-product theorems have other applications in quantum information theory, communication complexity or approximation algorithms.

\section*{Acknowledgements}

I thank Thomas Holenstein for helpful discussions and comments, Matthias Fitzi, Esther H\"anggi, Marco Tomamichel and Severin Winkler for useful comments on an earlier version of this paper, the referees for constructive comments that have significantly improved the presentation of this paper and Carlos Palazuelos for pointing out an error in a previous version of Theorem \ref{theoremBellupperNew}. This research was supported by the Swiss National Science Foundation through the National Centre of Competence in Research \emph{Quantum Science and Technology}.

\section{Proofs}

\subsection{Notation}
\label{sectionNotationProofs}

In addition to the notation introduced in Section \ref{sectionPreNot} we will use the following notation in this proof section.
Let $X$ and $Y$ be Banach spaces and $T:X \rightarrow Y$ be a linear map. The \emph{operator norm} of $T$ is defined as
\[
\| T \|_{X \rightarrow Y} := \sup_{v \in X} \{ \| T(v) \|_Y ~:~ \| v \|_X \leq 1 \} ~.
\]
Given column vectors $v_{i,j} \in \mathbb{R}^n$ with $1 \leq i \leq N$ and $1 \leq j \leq M$, we write $(v_{i,j})$ for the $n \times N \cdot M$-matrix which has the vectors $v_{i,j}$ as columns. The order is such that the first $M$ columns of $(v_{i,j})$ are given by the vectors $v_{1,1},v_{1,2},...,v_{1,M}$. The second $M$ columns are $v_{2,1},v_{2,2},...,v_{2,M}$, and so forth. We write $(v_{i,j})^T$ for the transposed matrix, i.e., the matrix $(v_{i,j})^T$ has the vectors $v_{i,j}^T$ as rows.

We will use $(\mu_{ij})$, with $1 \leq i \leq n$ and $1 \leq j \leq m$, to denote the $n \times m$-matrix with entries $\mu_{ij} \in \mathbb{R}$.

We write $sign : \mathbb{R} \rightarrow \{-1,+1\}$ for the sign-function, i.e., $sign(a) = +1$ if $a \geq 0$ and $sign(a) = -1$ if $a < 0$.

\subsection{Alternative Expressions for the $1(\infty) \rightarrow 2$ and $2 \rightarrow \infty(1)$ Operator Norms}

In this section we prove new equivalent expressions for the operator norms $\| \cdot \|_{1(\infty) \rightarrow 2}$ and $\| \cdot \|_{2 \rightarrow \infty(1)}$.
\begin{lemma}
\label{lemmaRepRSNormInf2}
Let $n_{y,b} \in \ell_2^n$ and $m_{x,a} \in \ell_2^n$, with $1 \leq n \leq \infty$, for all $1 \leq y \leq |\mathcal{Y}|$, $1 \leq b \leq |\mathcal{B}|$, $1 \leq x \leq |\mathcal{X}|$, and $1 \leq a \leq |\mathcal{A}|$. Then
\[
 \| (n_{y,b})^T \|_{2 \rightarrow \infty(1)} =  \max_{1 \leq y \leq | \mathcal{Y} |} \max_{ s } \left\{   \left\| \sum_{b=1}^{|\mathcal{B}|} s(b) \cdot n_{y,b} \right\|_2 ~:~ s: \{1,2,...,|\mathcal{B}|\} \rightarrow \{-1,+1\} \right\} ~,
\]
and
\[
 \| (m_{x,a}) \|_{1(\infty) \rightarrow 2} = \max_{1 \leq x \leq | \mathcal{X} |} \max_{ s }  \left\{    \left\| \sum_{a=1}^{|\mathcal{A}|} s(a) \cdot m_{x,a} \right\|_2 ~:~ s: \{1,2,...,|\mathcal{A}|\} \rightarrow \{-1,+1\} \right\} ~.
\]
In particular, $\| (n_{y,b})^T \|_{2 \rightarrow \infty(1)}^2 \geq \sum_{b=1}^{|\mathcal{B}|} \| n_{y,b} \|_2^2$ and $\| (m_{x,a}) \|_{1(\infty) \rightarrow 2}^2 \geq \sum_{a=1}^{|\mathcal{A}|} \| m_{x,a} \|_2^2$ for all $y \in \{1,2,...,|\mathcal{Y}| \}$ and all $x \in \{1,2,...,|\mathcal{X}| \}$, respectively.
\end{lemma}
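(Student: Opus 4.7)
The plan is to prove the first identity directly from the definitions, obtain the second by duality, and then derive the two lower bounds by a Rademacher averaging argument.

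First I would unfold the operator norm:
\[
 \| (n_{y,b})^T \|_{2 \rightarrow \infty(1)} = \sup_{\|v\|_2 \leq 1} \| (n_{y,b})^T v \|_{\infty(1)},
\]
and observe that the $(y,b)$-entry of $(n_{y,b})^T v$ is exactly $\langle n_{y,b}, v\rangle$. The definition of the $\infty(1)$-norm then gives $\max_y \sum_b |\langle n_{y,b}, v\rangle|$. Using the elementary identity $\sum_b |\alpha_b| = \max_{s \in \{-1,+1\}^{|\mathcal{B}|}} \sum_b s(b)\alpha_b$, the expression becomes $\sup_{\|v\|_2 \leq 1} \max_y \max_s \langle \sum_b s(b) n_{y,b}, v\rangle$. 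Since for any finite family of functions $f_i$ one has $\max_i \sup_v f_i(v) = \sup_v \max_i f_i(v)$, I can swap the two finite maxima with the supremum over $v$, leaving $\max_y \max_s \sup_{\|v\|_2\leq 1} \langle \sum_b s(b) n_{y,b}, v\rangle$. The inner supremum evaluates to $\|\sum_b s(b) n_{y,b}\|_2$ by self-duality of $\ell_2$, which is exactly the claimed formula.

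For the second identity I would invoke Banach-space duality. The matrix $(m_{x,a})$ represents a linear map $\ell_1^{|\mathcal{X}|}(\ell_\infty^{|\mathcal{A}|}) \to \ell_2^n$; its adjoint is represented by $(m_{x,a})^T$ and maps $\ell_2^n$ (self-dual) to the dual space $\ell_\infty^{|\mathcal{X}|}(\ell_1^{|\mathcal{A}|})$. Since operator norms are preserved under taking adjoints, $\| (m_{x,a}) \|_{1(\infty) \rightarrow 2} = \| (m_{x,a})^T \|_{2 \rightarrow \infty(1)}$, and the first identity, applied with the family $m_{x,a}$ playing the role of $n_{y,b}$, yields the formula.

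For the lower bounds, I would use that the maximum over sign patterns dominates the Rademacher average. Fix any index $y$ and let $s$ be uniform on $\{-1,+1\}^{|\mathcal{B}|}$. Then
\[
\max_s \Big\| \sum_b s(b) n_{y,b} \Big\|_2^2 \;\geq\; \mathbb{E}_s \Big\| \sum_b s(b) n_{y,b} \Big\|_2^2 \;=\; \sum_{b,b'} \mathbb{E}[s(b) s(b')] \langle n_{y,b}, n_{y,b'}\rangle \;=\; \sum_b \| n_{y,b} \|_2^2,
\]
using the independence of the Rademacher signs, i.e.\ $\mathbb{E}[s(b)s(b')] = \delta_{bb'}$. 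Combined with the identity from the first part, this gives $\|(n_{y,b})^T\|_{2 \rightarrow \infty(1)}^2 \geq \sum_b \|n_{y,b}\|_2^2$ for every $y$; the analogous inequality for $(m_{x,a})$ follows by the same argument (or by duality). The proof is essentially bookkeeping, and I do not foresee any real obstacle; the only step worth a moment's care is the sup-max interchange, which however is immediate since the outer maxima range over finite sets.
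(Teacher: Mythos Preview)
Your proof is correct. For the norm identities, your argument and the paper's are mirror images: you compute $\|(n_{y,b})^T\|_{2\to\infty(1)}$ directly (unfolding the $\infty(1)$-norm, replacing $\sum_b|\cdot|$ by a max over signs, swapping sup/max, and using self-duality of $\ell_2$) and then deduce the $1(\infty)\to 2$ identity by duality, whereas the paper computes $\|(m_{x,a})\|_{1(\infty)\to 2}$ directly---parametrizing the unit ball of $\ell_1(\ell_\infty)$ as $G_{x,a}=\kappa_x\mu_{x,a}$ with $\sum_x|\kappa_x|\le 1$ and $|\mu_{x,a}|\le 1$, applying the triangle inequality, and then using convexity to restrict $\mu$ to $\{\pm 1\}$---and obtains the other identity by duality (their Lemma~\ref{lemmaDualityA}). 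The substance is identical.

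The lower-bound argument is where the two proofs genuinely diverge. You bound the maximum over sign patterns below by the Rademacher average and use $\mathbb{E}_s[s(b)s(b')]=\delta_{bb'}$ to get $\mathbb{E}_s\bigl\|\sum_b s(b)n_{y,b}\bigr\|_2^2=\sum_b\|n_{y,b}\|_2^2$ in one line. The paper instead builds a single sign vector greedily: it sets $s(1)=1$ and then, for $a_1\ge 2$, chooses
\[
s(a_1)=\mathrm{sign}\Bigl(\sum_{a_2<a_1}s(a_2)\langle m_{x,a_1},m_{x,a_2}\rangle\Bigr),
\]
so that each summand in the cross-term expansion is non-negative, forcing $\sum_{a_1\ne a_2}s(a_1)s(a_2)\langle m_{x,a_1},m_{x,a_2}\rangle\ge 0$. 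Your averaging argument is shorter and more standard; the paper's construction has the minor advantage of exhibiting an explicit sign pattern attaining the bound.
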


\begin{proof}
By using the definition of the $\| \cdot \|_{1(\infty) \rightarrow 2}$-norm, we obtain
\[
\| (m_{x,a}) \|_{1(\infty) \rightarrow 2} = \sup_{ \| G \|_{1(\infty)} \leq 1} \left\| \sum_{x=1}^{|\mathcal{X}|} \sum_{a=1}^{|\mathcal{A}|} \langle G, e_x \otimes e_a \rangle \cdot m_{x,a} \right\|_2 ~.
\]
As every $G$ with $\| G \|_{1(\infty)} \leq 1$ can be written as $G_{x,a} \equiv \langle G, e_x \otimes e_a \rangle = \kappa_{x} \cdot \mu_{x,a}$, with $\sum_{x=1}^{|\mathcal{X}|} | \kappa_x | \leq 1$ and $\mu_{x,a} \in [-1,1]$, we get
\begin{eqnarray}
\label{eqnUpperM1inf2}
\| (m_{x,a}) \|_{1(\infty) \rightarrow 2}  &\leq& \sup_{ \| G \|_{1(\infty)} \leq 1}  \sum_{x=1}^{|\mathcal{X}|} | \kappa_x | \cdot \left\| \sum_{a=1}^{|\mathcal{A}|} \mu_{x,a} \cdot m_{x,a} \right\|_2 \nonumber \\
&\leq& \sup_{ \| G \|_{1(\infty)} \leq 1} \max_{1 \leq x \leq |\mathcal{X}|}  \left\| \sum_{a=1}^{|\mathcal{A}|} \mu_{x,a} \cdot m_{x,a} \right\|_2 \nonumber \\
&=& \sup_{ \| \mu \|_{\ell_{\infty}^{|\mathcal{A}|}} \leq 1} \max_{1 \leq x \leq |\mathcal{X}| }   \left\| \sum_{a=1}^{|\mathcal{A}|} \mu_a \cdot m_{x,a} \right\|_2 ~,
\end{eqnarray}
where we used the triangle inequality in the first line.
That $\| (m_{x,a}) \|_{1(\infty) \rightarrow 2}$ is greater or equal than the upper bound of  (\ref{eqnUpperM1inf2}) is obvious, by setting $\kappa_x = 1$ for the optimal $x$, and hence we have equality. That the optimal vector $\mu$ in  (\ref{eqnUpperM1inf2}) can be chosen to consist only of $+1, -1$ entries follows from the convexity of norms. That $\| (n_{y,b})^T \|_{2 \rightarrow \infty(1)} = \max_{ s, y }   \| \sum_{b=1}^{|\mathcal{B}|} s(b) \cdot n_{y,b} \|_2$ holds as well follows from Lemma \ref{lemmaDualityA} in Appendix \ref{sectionNotation}.

Let us now show that $\| (m_{x,a}) \|_{1(\infty) \rightarrow 2}^2 \geq \sum_{a=1}^{|\mathcal{A}|} \| m_{x,a} \|_2^2$.
By using the above result we obtain
\begin{eqnarray}
\| (m_{x,a}) \|_{1(\infty) \rightarrow 2}^2 &\geq&   \left\langle \sum_{a_1 = 1}^{|\mathcal{A}|}  s(a_1) \cdot m_{x,a_1}, \sum_{a_2=1}^{|\mathcal{A}|} s(a_2) \cdot m_{x,a_2} \right\rangle  \nonumber \\
&=& \sum_{a=1}^{|\mathcal{A}|} \langle m_{x,a}, m_{x,a} \rangle + \sum_{a_1 \neq a_2} s(a_1) \cdot s(a_2) \cdot \langle m_{x,a_1}, m_{x,a_2} \rangle ~, \nonumber
\end{eqnarray}
for all $x \in \{1,2,...,|\mathcal{X}| \}$ and all $s: \{1,2,...,|\mathcal{A}| \} \rightarrow \{-1, +1 \}$.
If we can show that there exists a function $s : \{1,2,...,|\mathcal{A}|\} \rightarrow \{-1,+1\}$ such that $\sum_{a_1 \neq a_2} s(a_1) \cdot s(a_2) \cdot \langle m_{x,a_1}, m_{x,a_2} \rangle \geq 0$, then we can conclude that 
\[
\| (m_{x,a}) \|_{1(\infty) \rightarrow 2}^2  \geq \sum_{a=1}^{|\mathcal{A}|} \langle m_{x,a}, m_{x,a} \rangle = \sum_{a=1}^{|\mathcal{A}|} \| m_{x,a} \|_2^2 ~.
\]
We will now construct a function with this property. First, we can write
\begin{equation}
\label{eqnRewrittenSumB}
 \sum_{a_1 \neq a_2} s(a_1) \cdot s(a_2) \cdot \langle m_{x,a_1}, m_{x,a_2} \rangle = 2 \cdot \sum_{a_1 = 2}^{|\mathcal{A}|} s(a_1) \cdot \left( \sum_{a_2 = 1}^{a_1 - 1} s(a_2) \cdot \langle m_{x,a_1}, m_{x,a_2} \rangle \right) ~.
\end{equation}
For $a = 1$ we set $s(1) := 1$. We then set the value for $s(2)$ which will depend on $s(1)$. Then we set $s(3)$ which will depend on $s(1)$ and $s(2)$. Hence, the value for $s(a_1)$ will depend on all $s(1),s(2),...,s(a_1 - 1)$. In particular, we define $s(a_1)$ to be
\[
s(a_1) := sign \left( \sum_{a_2 = 1}^{a_1 - 1} s(a_2) \cdot \langle m_{x,a_1}, m_{x,a_2} \rangle \right) ~.
\]
By defining the function $s$ in this way, the right hand side of  (\ref{eqnRewrittenSumB}) is always non-negative which is what we wanted to prove. By the same reasoning, one can show that $\| (n_{y,b})^T \|_{2 \rightarrow \infty(1)}^2 \geq \sum_{b=1}^{|\mathcal{B}|} \| n_{y,b} \|_2^2$ holds as well.
\end{proof}

Note that, for $| \mathcal{A} | = | \mathcal{B} | = 1$, we have that $\| R \|_{2 \rightarrow \infty}$ is the largest $2$-norm of a row of $R$ and $\| S \|_{1 \rightarrow 2}$ is the largest $2$-norm of a column of $S$.

\subsection{Generalized Grothendieck Inequality}
\label{sectionGeneralGrothen1}

In this section, we will state and prove a generalized Grothendieck inequality. In the tensor norm picture, it is a generalization of the standard Grothendieck inequality \cite{grothendieckInequality} in the sense that multiple outputs are allowed, or in the language of games, it generalizes from XOR games to arbitrary games. In the tensor norm language, the (generalized) Grothendieck inequality establishes a connection between the projective tensor norm $\pi$ and the Hilbertian tensor norm $\gamma_2$. 
The difference of our generalized Grothendieck inequality to the standard one is that the they are defined over different local Banach spaces.

By Lemma \ref{lemmaCrossnorm} in Appendix \ref{sectionProjInjNorm} we know that $\pi$ dominates $\gamma_2$, i.e., that $\pi(P) \geq \gamma_2(P)$ for all $P \in \ell_{\infty}^{|\mathcal{X} |}(\ell_{1}^{|\mathcal{A}|}) \otimes \ell_{\infty}^{|\mathcal{Y} |}(\ell_{1}^{|\mathcal{B}|})$. On the other hand, Grothendieck's inequality in tensor form upper bounds $\pi$ by $\gamma_2$, i.e., it is of the form
\[
\pi(P) \leq c \cdot \gamma_2(P)  ~,  ~ \forall P \in \ell_{\infty}^{|\mathcal{X} |}(\ell_{1}^{|\mathcal{A}|}) \otimes \ell_{\infty}^{|\mathcal{Y} |}(\ell_{1}^{|\mathcal{B}|}) ~,
\]
for $c = K_G$ and local Banach spaces $\ell_{\infty}^{|\mathcal{X} |}(\ell_{1}^{|\mathcal{A}|})$ and $\ell_{\infty}^{|\mathcal{Y} |}(\ell_{1}^{|\mathcal{B}|})$, with $|\mathcal{A}| = |\mathcal{B}| = 1$. Our goal in this section is to determine the best possible $c$ for local Banach spaces $\ell_{\infty}^{|\mathcal{X} |}(\ell_{1}^{|\mathcal{A}|})$ and $\ell_{\infty}^{|\mathcal{Y} |}(\ell_{1}^{|\mathcal{B}|})$ with \emph{arbitrary} output alphabet sizes $|\mathcal{A}|$ and $|\mathcal{B}|$.

Before we can prove the generalized Grothendieck inequality in tensor form we need an additional result:
\begin{lemma}[Alon \& Naor \cite{ApproxCutNorm}]
\label{lemmaAlonNaor}
For any sets $\{ x_i \}_{1 \leq i \leq n}$ and $\{ y_j \}_{1 \leq j \leq m}$ of real unit vectors in a Hilbert space $\mathcal{H}$, there are sets $\{ \tilde{x}_i \}_{1 \leq i \leq n}$ and $\{ \tilde{y}_j \}_{1 \leq j \leq m}$ of real unit vectors in a Hilbert space $\tilde{\mathcal{H}}$, such that
\[
\langle x_i , y_j \rangle = \frac{\pi}{2 \ln(1 + \sqrt{2})} \int_{\tilde{\mathcal{H}}} sign\langle \tilde{x}_i, z \rangle \cdot sign\langle \tilde{y}_j, z \rangle \gamma(dz) ~,  
\]
for all $1 \leq i \leq n$, $1 \leq j \leq m$, where $\gamma(dz)$ is the normalized Gauss measure on $\tilde{\mathcal{H}}$.
\end{lemma}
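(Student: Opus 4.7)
The plan is to reduce the claim to a classical Grothendieck arcsine identity plus a Krivine-style Hilbert-space embedding. Concretely, I would use as a black box the standard Grothendieck identity
\[
\int_{\tilde{\mathcal{H}}} sign\langle u,z\rangle\cdot sign\langle v,z\rangle\,\gamma(dz) \;=\; \frac{2}{\pi}\arcsin\langle u,v\rangle
\]
valid for any real unit vectors $u,v\in\tilde{\mathcal{H}}$ (a direct computation using that the pair $(\langle u,z\rangle,\langle v,z\rangle)$ is jointly Gaussian with covariance $\langle u,v\rangle$, together with Sheppard's formula). With this in hand, the lemma reduces to producing unit vectors $\tilde{x}_i,\tilde{y}_j\in\tilde{\mathcal{H}}$ whose inner products satisfy
\[
\langle \tilde{x}_i,\tilde{y}_j\rangle \;=\; \sin\bigl(c\,\langle x_i,y_j\rangle\bigr), \qquad c \;:=\; \ln(1+\sqrt{2}),
\]
because then $\arcsin\langle\tilde{x}_i,\tilde{y}_j\rangle = c\,\langle x_i,y_j\rangle$ (the inversion is valid since $c<\pi/2$ and $|\langle x_i,y_j\rangle|\leq 1$), and multiplying by $\pi/(2c)$ yields exactly the claimed formula.

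To construct such $\tilde{x}_i,\tilde{y}_j$, I would apply Krivine's trick in the symmetric Fock-type space $\tilde{\mathcal{H}}:=\bigoplus_{k\geq 0}\mathcal{H}^{\otimes(2k+1)}$. For each input unit vector, set
\[
\tilde{x}_i \;:=\; \bigoplus_{k\geq 0} a_k\,x_i^{\otimes(2k+1)}, \qquad \tilde{y}_j \;:=\; \bigoplus_{k\geq 0} (-1)^k\,a_k\,y_j^{\otimes(2k+1)}, \qquad a_k \;:=\; \sqrt{\tfrac{c^{2k+1}}{(2k+1)!}}.
\]
Using the elementary identity $\langle x^{\otimes n},y^{\otimes n}\rangle_{\mathcal{H}^{\otimes n}}=\langle x,y\rangle^n$ on each summand and then summing in $k$, one obtains
\[
\langle \tilde{x}_i,\tilde{y}_j\rangle \;=\; \sum_{k\geq 0} (-1)^k\,\frac{c^{2k+1}}{(2k+1)!}\,\langle x_i,y_j\rangle^{2k+1} \;=\; \sin\bigl(c\,\langle x_i,y_j\rangle\bigr),
\]
which is the required identity. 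Unit normalization is automatic by the choice of $c$: since $\|x_i\|=\|y_j\|=1$,
\[
\|\tilde{x}_i\|^2 \;=\; \|\tilde{y}_j\|^2 \;=\; \sum_{k\geq 0}\frac{c^{2k+1}}{(2k+1)!} \;=\; \sinh(c) \;=\; 1,
\]
where the last equality is precisely the defining property of $c=\sinh^{-1}(1)=\ln(1+\sqrt{2})$.

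The substantive point — and what I expect to be the only real subtlety — is the insertion of the alternating sign $(-1)^k$ on the $\tilde{y}_j$ side only. Without it one recovers $\sinh$ (which has nonnegative Taylor coefficients) rather than $\sin$; with it the bipartite inner product becomes the genuine Taylor series of $\sin$, while each of $\tilde{x}_i$ and $\tilde{y}_j$ separately remains a unit vector since normalization depends only on $a_k^2$. Combining the Krivine construction with the arcsine identity on $\tilde{\mathcal{H}}$ yields the stated integral representation with the sharp constant $\pi/(2\ln(1+\sqrt{2}))$, completing the proof.
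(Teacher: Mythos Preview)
Your argument is correct. The paper does not prove this lemma at all: it is quoted as a result of Alon and Naor \cite{ApproxCutNorm} and used as a black box in the proof of the generalized Grothendieck inequality. What you have written is precisely the standard Krivine argument that underlies the Alon--Naor formulation: combine Sheppard's/Grothendieck's arcsine identity with the tensor-power embedding that turns $\langle x_i,y_j\rangle$ into $\sin(c\,\langle x_i,y_j\rangle)$, with $c=\sinh^{-1}(1)=\ln(1+\sqrt{2})$ forced by the unit-norm constraint. The only cosmetic point worth noting is that $\tilde{\mathcal{H}}=\bigoplus_{k\ge 0}\mathcal{H}^{\otimes(2k+1)}$ is infinite-dimensional, so strictly speaking the normalized Gaussian measure should be taken on the finite-dimensional span of the $\tilde{x}_i,\tilde{y}_j$; this is harmless since there are only finitely many of them.
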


\begin{claim}[Generalized Grothendieck Inequality in Tensor Form]
\label{theoremGenerGrothenTensor}
For any $P \in \ell_{\infty}^{|\mathcal{X} |}(\ell_{1}^{|\mathcal{A}|}) \otimes \ell_{\infty}^{|\mathcal{Y} |}(\ell_{1}^{|\mathcal{B}|})$ it holds that
\[
\pi(P) \leq K \cdot \sqrt{|\mathcal{A}| |\mathcal{B}|} \cdot \gamma_2(P) ~,
\]
with $K = \frac{\pi}{2 \ln(1 + \sqrt{2})}$. 
\end{claim}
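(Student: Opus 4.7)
The plan is to realize $P$ as an integral of elementary product tensors whose factors have small $\infty(1)$-norm, starting from an optimal Hilbert-space factorization that witnesses $\gamma_2(P)$. By homogeneity assume $\gamma_2(P)\le 1$, so there exist vectors $m_{x,a},n_{y,b}$ in some Hilbert space with
$\langle f_{x,a}\otimes f_{y,b},P\rangle = \langle m_{x,a},n_{y,b}\rangle$
and $\|(m_{x,a})\|_{1(\infty)\to 2}\le 1$, $\|(n_{y,b})^T\|_{2\to\infty(1)}\le 1$. The quantitative input I need comes from the sum-of-squares estimate in Lemma~\ref{lemmaRepRSNormInf2}: for every $x$,
$\sum_{a}\|m_{x,a}\|_2^2 \le 1$, and similarly $\sum_{b}\|n_{y,b}\|_2^2 \le 1$ for every $y$. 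This is the single ingredient that will produce the square-root gain over the naive bound $|\mathcal{A}||\mathcal{B}|$.

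Next I would set $\alpha_{x,a}:=\|m_{x,a}\|_2$, $\beta_{y,b}:=\|n_{y,b}\|_2$, and pass to the unit vectors $\tilde m_{x,a}:=m_{x,a}/\alpha_{x,a}$ and $\tilde n_{y,b}:=n_{y,b}/\beta_{y,b}$ (vanishing coordinates simply drop out of the argument). Applying Lemma~\ref{lemmaAlonNaor} (Alon--Naor) to the finite families $\{\tilde m_{x,a}\}$ and $\{\tilde n_{y,b}\}$ yields unit vectors $u_{x,a},v_{y,b}$ in some auxiliary Hilbert space $\tilde{\mathcal{H}}$ with
\[
\langle \tilde m_{x,a},\tilde n_{y,b}\rangle \;=\; K\int_{\tilde{\mathcal{H}}} \mathrm{sign}\langle u_{x,a},z\rangle\,\mathrm{sign}\langle v_{y,b},z\rangle\,\gamma(dz),
\]
where $K=\pi/(2\ln(1+\sqrt{2}))$. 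Defining, for each $z\in\tilde{\mathcal{H}}$, the tensors $s_A(z)\in\ell_\infty^{|\mathcal{X}|}(\ell_1^{|\mathcal{A}|})$ and $s_B(z)\in\ell_\infty^{|\mathcal{Y}|}(\ell_1^{|\mathcal{B}|})$ via
$\langle f_{x,a},s_A(z)\rangle := \alpha_{x,a}\,\mathrm{sign}\langle u_{x,a},z\rangle$ and $\langle f_{y,b},s_B(z)\rangle := \beta_{y,b}\,\mathrm{sign}\langle v_{y,b},z\rangle$, multiplying through by $\alpha_{x,a}\beta_{y,b}$ and integrating yields the representation
\[
P \;=\; K\int_{\tilde{\mathcal{H}}} s_A(z)\otimes s_B(z)\,\gamma(dz).
\]

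The final step is to bound the $\infty(1)$-norm of each factor uniformly in $z$. Here Cauchy--Schwarz combined with the sum-of-squares constraint gives
\[
\|s_A(z)\|_{\infty(1)} \;=\; \max_{x}\sum_{a=1}^{|\mathcal{A}|}\alpha_{x,a} \;\le\; \max_{x}\sqrt{|\mathcal{A}|\cdot\sum_{a}\alpha_{x,a}^2} \;\le\; \sqrt{|\mathcal{A}|},
\]
and analogously $\|s_B(z)\|_{\infty(1)}\le\sqrt{|\mathcal{B}|}$. Since $\pi$ is a cross-norm, $\pi(s_A(z)\otimes s_B(z))\le \|s_A(z)\|_{\infty(1)}\|s_B(z)\|_{\infty(1)}\le\sqrt{|\mathcal{A}||\mathcal{B}|}$, and the triangle inequality under the integral gives
\[
\pi(P)\;\le\; K\int_{\tilde{\mathcal{H}}} \pi\bigl(s_A(z)\otimes s_B(z)\bigr)\,\gamma(dz)\;\le\; K\sqrt{|\mathcal{A}||\mathcal{B}|},
\]
which is the claim after rescaling by $\gamma_2(P)$.

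The main conceptual obstacle, and the only place the argument does better than the trivial $|\mathcal{A}||\mathcal{B}|$ factor, is the interplay between Lemma~\ref{lemmaRepRSNormInf2} and the Cauchy--Schwarz step: one must think of the factorization norm not as a uniform bound on the individual vectors $m_{x,a}$ but rather as a \emph{budget} $\sum_a\|m_{x,a}\|_2^2\le 1$, which is then distributed across the output alphabet and concentrated by Cauchy--Schwarz into the single $\sqrt{|\mathcal{A}|}$ factor. Everything else (the Alon--Naor randomization and the integration argument) is standard, and the proof reduces to the classical Grothendieck inequality when $|\mathcal{A}|=|\mathcal{B}|=1$, since in that case $\alpha_{x,a}=\|m_x\|_2\le 1$ automatically.
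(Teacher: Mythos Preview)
Your proof is correct and follows essentially the same route as the paper: an optimal $\gamma_2$ factorization, the sum-of-squares estimate from Lemma~\ref{lemmaRepRSNormInf2}, the Alon--Naor sign representation of Lemma~\ref{lemmaAlonNaor}, and then the triangle inequality together with Cauchy--Schwarz to extract the $\sqrt{|\mathcal{A}||\mathcal{B}|}$ factor. The only cosmetic difference is that the paper writes $\pi(s_A(z)\otimes s_B(z)) = \|s_A(z)\|_{\infty(1)}\|s_B(z)\|_{\infty(1)}$ with equality (using that $\pi$ is a tensor norm), whereas you use the inequality; either suffices.
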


\begin{proof}
Let us assume that $\gamma_2(P) = 1$ for some $P = \sum_{x,y,a,b} P_{x,y}^{a,b} \cdot (e_x \otimes e_a) \otimes (e_y \otimes e_b)$. Showing that $\pi(P) \leq \frac{\pi}{2 \ln(1 + \sqrt{2})} \cdot \sqrt{|\mathcal{A}| |\mathcal{B}|}$ proves the claim.

As $\gamma_2(P) = 1$, we can conclude according to Corollary \ref{corollaryEqivGammaVec} in Appendix \ref{sectionHilbertTensor1}, that there exist real vectors $\{ m_{x,a} \}$ and $\{ n_{y,b} \}$ in $\ell_2$ with
\[
\| (m_{x,a}) \|_{1(\infty) \rightarrow 2} \leq 1 ~,~ \| (n_{y,b})^T \|_{2 \rightarrow \infty(1)} \leq 1 ~,
\]
such that
\begin{equation}
\label{eqnRepRhoTensmn}
P = \sum_{x,y,a,b} \langle m_{x,a} , n_{y,b} \rangle \cdot e_{x,a} \otimes e_{y,b} ~, 
\end{equation}
with $e_{x,a} := e_x \otimes e_a$ and $e_{y,b} := e_y \otimes e_b$. Applying the second part of Lemma \ref{lemmaRepRSNormInf2} yields
\[
\sum_{a=1}^{|\mathcal{A}|} \| m_{x,a} \|_2^2 \leq 1 ~,~ \sum_{b=1}^{|\mathcal{B}|} \| n_{y,b} \|_2^2 \leq 1 ~,
\]
for all $x \in \{1,2,...,|\mathcal{X}| \}$ and $y \in \{1,2,...,|\mathcal{Y}| \}$.

Using Lemma \ref{lemmaAlonNaor} on the vectors $\{ m_{x,a}\}$ and $\{ n_{y,b} \}$ implies
\begin{equation}
\label{eqnInnerRepAlon}
 \left\langle \frac{m_{x,a}}{\| m_{x,a} \|_2}, \frac{n_{y,b}}{\| n_{y,b} \|_2} \right\rangle = \frac{\pi}{2 \ln(1 + \sqrt{2})} \int_{\tilde{\mathcal{H}}} sign\langle \tilde{m}_{x,a}, z \rangle \cdot sign\langle \tilde{n}_{y,b}, z \rangle \gamma(dz) ~,
\end{equation}
for all $x \in \{1,2,...,|\mathcal{X}|\}, y \in \{1,2,...,|\mathcal{Y}|\}, a \in \{1,2,...,|\mathcal{A}|\}$ and $b \in \{1,2,...,|\mathcal{B}|\}$. Combining  (\ref{eqnRepRhoTensmn}) and (\ref{eqnInnerRepAlon}) gives
\[
 P = c \int_{\tilde{\mathcal{H}}} \left( \sum_{x,a} \| m_{x,a} \|_2 \cdot  sign\langle \tilde{m}_{x,a}, z \rangle  e_{x,a} \right) \otimes \left( \sum_{y,b} \| n_{y,b} \|_2 \cdot   sign\langle \tilde{n}_{y,b}, z \rangle  e_{y,b} \right) \gamma(dz) ~,
\]
with $c = \frac{\pi}{2 \ln(1 + \sqrt{2})}$. Since $\pi$ is a norm, we can apply the triangle inequality and get
\[
 \pi(P) \leq c \cdot \sup_{z} \left\| \sum_{x,a} \| m_{x,a} \|_2 \cdot  sign\langle \tilde{m}_{x,a}, z \rangle  e_{x,a} \right\|_{\infty(1)}   \left\| \sum_{y,b} \| n_{y,b} \|_2 \cdot   sign\langle \tilde{n}_{y,b}, z \rangle  e_{y,b} \right\|_{\infty(1)} ~,
\]
where we also used that $\pi$ is a tensor norm and therefore $\pi(P_A \otimes P_B) = \| P_A \|_{\infty(1)} \cdot \| P_B \|_{\infty(1)}$.
Furthermore, by using the definition of the $\infty(1)$-norm, we have that
\[
\left\| \sum_{x,a} \| m_{x,a} \|_2 \cdot  sign\langle \tilde{m}_{x,a}, z \rangle  e_{x,a} \right\|_{\infty(1)} \leq \max_{x \in \{1,2,...,|\mathcal{X}| \}} \sum_{a=1}^{|\mathcal{A}|} \| m_{x,a} \|_2 ~,
\]
for any $z$. Using that $\sum_{a=1}^{|\mathcal{A}|} \| m_{x,a} \|_2^2 \leq 1$ implies $\sum_{a=1}^{|\mathcal{A}|} \| m_{x,a} \|_2 \leq \sqrt{|\mathcal{A}|}$ (by Cauchy-Schwarz inequality) finishes the proof.
\end{proof}

\setcounter{theorem}{0}
By applying Lemma \ref{lemmaTakingDuals} in Appendix \ref{sectionNotation} we get the following dual theorem:
\begin{theorem}[Generalized Grothendieck Inequality in Dual Tensor Form]
\label{corollaryDualGrothen}
For any $G \in \ell_{1}^{|\mathcal{X} |}(\ell_{\infty}^{|\mathcal{A}|}) \otimes \ell_{1}^{|\mathcal{Y} |}(\ell_{\infty}^{|\mathcal{B}|})$ it holds that
\[
\gamma_2^*(G) \leq K \cdot \sqrt{|\mathcal{A}| |\mathcal{B}|} \cdot \varepsilon(G) ~,
\]
with $K = \frac{\pi}{2 \ln(1 + \sqrt{2})}$.
\end{theorem}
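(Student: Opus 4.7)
The plan is to derive this dual tensor inequality directly from the primal generalized Grothendieck inequality, Claim \ref{theoremGenerGrothenTensor}, via the standard duality principle: reversing an inequality between two norms when passing to the duals. The key is that under the bilinear pairing $\langle G, P \rangle$ used throughout Section \ref{sectionConnProverTensor}, the space $\ell_1^{|\mathcal{X}|}(\ell_\infty^{|\mathcal{A}|}) \otimes \ell_1^{|\mathcal{Y}|}(\ell_\infty^{|\mathcal{B}|})$ is precisely the dual of $\ell_\infty^{|\mathcal{X}|}(\ell_1^{|\mathcal{A}|}) \otimes \ell_\infty^{|\mathcal{Y}|}(\ell_1^{|\mathcal{B}|})$, and under the same pairing $\varepsilon$ is dual to $\pi$ (as noted in Section \ref{sectionInjecProjecNorm}) while $\gamma_2^*$ is by definition dual to $\gamma_2$.

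Concretely, I would fix $G$ in the dual tensor space and write, by definition of the dual norm,
\[
\gamma_2^*(G) = \sup\{ |\langle G, P \rangle| : \gamma_2(P) \leq 1 \} ~.
\]
The primal statement $\pi(P) \leq K \sqrt{|\mathcal{A}||\mathcal{B}|} \cdot \gamma_2(P)$ from Claim \ref{theoremGenerGrothenTensor} implies that the unit ball of $\gamma_2$ sits inside the ball of radius $K \sqrt{|\mathcal{A}||\mathcal{B}|}$ for $\pi$. Taking the supremum of $|\langle G, P \rangle|$ over the larger set and rescaling yields
\[
\gamma_2^*(G) \leq \sup\{ |\langle G, P \rangle| : \pi(P) \leq K \sqrt{|\mathcal{A}||\mathcal{B}|} \} = K \sqrt{|\mathcal{A}||\mathcal{B}|} \cdot \varepsilon(G) ~,
\]
which is the claim. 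This is precisely the content of Lemma \ref{lemmaTakingDuals} invoked in the excerpt: an inequality $\|\cdot\|_A \leq c\, \|\cdot\|_B$ passes to $\|\cdot\|_{B^*} \leq c\, \|\cdot\|_{A^*}$ on the dual space.

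The main (and only) obstacle is really just bookkeeping: one must confirm that the dualities being combined all refer to the same bilinear pairing. That is, one should check that the natural identification $(\ell_\infty^{|\mathcal{X}|}(\ell_1^{|\mathcal{A}|}))^* \cong \ell_1^{|\mathcal{X}|}(\ell_\infty^{|\mathcal{A}|})$ (and its $\mathcal{Y},\mathcal{B}$ counterpart) extends multiplicatively to the tensor product, so that $\gamma_2^*$ computed on the dual tensor space genuinely coincides with the Banach-space dual of $\gamma_2$. In finite dimension, which is the setting of the paper, no analytic subtleties arise and this reduces to checking that the appendix's definitions of $\pi$, $\varepsilon$, $\gamma_2$, and $\gamma_2^*$ are consistent with the pairing $\langle G, P \rangle = \sum_{x,y,a,b} G_{x,a;y,b}\, P_{x,a;y,b}$. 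Once this is verified the inequality follows in a single line from Claim \ref{theoremGenerGrothenTensor}.
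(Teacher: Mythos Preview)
Your proposal is correct and matches the paper's approach exactly: the paper states the theorem immediately after Claim \ref{theoremGenerGrothenTensor} with the single remark ``By applying Lemma \ref{lemmaTakingDuals} in Appendix \ref{sectionNotation} we get the following dual theorem,'' which is precisely the duality argument you spelled out.
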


As the standard Grothendieck inequality is usually stated in matrix form, we will also give a matrix representation of our generalization.
As the $\| \cdot \|_{2 \rightarrow \infty(1)}$ and $\| \cdot \|_{1(\infty) \rightarrow 2}$ operator norms will appear in the following claim, it might be helpful for the reader to have a look at Lemma \ref{lemmaRepRSNormInf2} again which gives an alternative representation of these two operator norms.
\begin{claim}[Generalized Grothendieck Inequality in Matrix Form]
\label{theoremGeneralGrothenMatrix}
For any set of real numbers $\{ \alpha_{x,y}^{a,b} \}$, with $1 \leq x \leq |\mathcal{X}|$, $1 \leq y \leq |\mathcal{Y}|$, $1 \leq a \leq |\mathcal{A}|$, and $1 \leq b \leq |\mathcal{B}|$, it holds that
\begin{eqnarray}
 &\sup& \left\{ \left| \sum_{x,y,a,b} \alpha_{x,y}^{a,b} \cdot \langle m_{x,a}, n_{y,b} \rangle \right| ~:~ \| (n_{y,b})^T \|_{2 \rightarrow \infty(1)} \leq 1 ~,~ \| (m_{x,a}) \|_{1(\infty) \rightarrow 2} \leq 1 \right\} \nonumber \\
 &\leq& K \cdot \sqrt{|\mathcal{A}| |\mathcal{B}|} \cdot \sup \left\{ \left| \sum_{x,y,a,b} \alpha_{x,y}^{a,b} \cdot s_{x,a} \cdot t_{y,b} \right| ~:~ \| (s_{x,a}) \|_{\infty(1)} \leq 1 ~,~ \| (t_{y,b}) \|_{\infty(1)} \leq 1 \right\} \nonumber ~,
\end{eqnarray}
with $K = \frac{\pi}{2 \ln(1 + \sqrt{2})}$ and where the supremum is over real vectors $m_{x,a}, n_{y,b} \in \ell_2^n$, with $1 \leq n \leq \infty$, and real numbers $s_{x,a}, t_{y,b} \in [-1,+1]$ with $\| (s_{x,a}) \|_{\infty(1)} = \max_{1 \leq x \leq |\mathcal{X}| } \sum_{a=1}^{|\mathcal{A}|} | s_{x,a} |$ and $\| (t_{y,b}) \|_{\infty(1)} = \max_{1 \leq y \leq |\mathcal{Y}| } \sum_{b=1}^{|\mathcal{B}|} | t_{y,b} |$.
\end{claim}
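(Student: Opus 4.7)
The plan is to reduce the matrix form directly to the dual tensor form already established in Theorem \ref{corollaryDualGrothen}. The key observation is that both sides of the claimed inequality can be identified with tensor norms of a single object $G \in \ell_1^{|\mathcal{X}|}(\ell_\infty^{|\mathcal{A}|}) \otimes \ell_1^{|\mathcal{Y}|}(\ell_\infty^{|\mathcal{B}|})$ defined via its entries
\[
\langle G, e_{x,a} \otimes e_{y,b}\rangle := \alpha_{x,y}^{a,b}.
\]
So the task is essentially bookkeeping: show that the left-hand side equals $\gamma_2^*(G)$, show that the right-hand side equals $\varepsilon(G)$, and then quote Theorem \ref{corollaryDualGrothen}.

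First I would handle the right-hand side. Any $P_A \in \ell_\infty^{|\mathcal{X}|}(\ell_1^{|\mathcal{A}|})$ with $\|P_A\|_{\infty(1)} \le 1$ corresponds to the family of scalars $s_{x,a} := \langle f_{x,a}, P_A\rangle$, and the condition $\|P_A\|_{\infty(1)} \le 1$ is by definition exactly $\|(s_{x,a})\|_{\infty(1)} = \max_x \sum_a |s_{x,a}| \le 1$; similarly for $P_B$ and $t_{y,b}$. Expanding $\langle G, P_A \otimes P_B\rangle$ by formula (\ref{eqnAltRepTensVal}) yields $\sum_{x,y,a,b} \alpha_{x,y}^{a,b}\, s_{x,a}\, t_{y,b}$, so the supremum on the right is precisely $\varepsilon(G)$.

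Second, I would unpack the left-hand side. By Corollary \ref{corollaryEqivGammaVec} (the vector characterization of $\gamma_2$ that was already invoked in the proof of Claim \ref{theoremGenerGrothenTensor}), an element $P \in \ell_\infty^{|\mathcal{X}|}(\ell_1^{|\mathcal{A}|}) \otimes \ell_\infty^{|\mathcal{Y}|}(\ell_1^{|\mathcal{B}|})$ satisfies $\gamma_2(P) \le 1$ if and only if it can be written as
\[
P = \sum_{x,y,a,b} \langle m_{x,a}, n_{y,b}\rangle\, e_{x,a} \otimes e_{y,b}
\]
for real vectors $(m_{x,a}),(n_{y,b}) \in \ell_2^n$ with $\|(m_{x,a})\|_{1(\infty)\rightarrow 2} \le 1$ and $\|(n_{y,b})^T\|_{2\rightarrow\infty(1)} \le 1$. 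Combined with the duality formula (\ref{eqnDefDualNorm}) defining $\gamma_2^*$, this gives
\[
\gamma_2^*(G) = \sup\Bigl\{ \bigl|\langle G,P\rangle\bigr| : \gamma_2(P) \le 1 \Bigr\} = \sup\Bigl\{ \Bigl|\sum_{x,y,a,b} \alpha_{x,y}^{a,b}\, \langle m_{x,a}, n_{y,b}\rangle\Bigr| \Bigr\},
\]
with the supremum running over the same constraints as on the left-hand side of the claim.

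Having identified both sides with $\gamma_2^*(G)$ and $\varepsilon(G)$ respectively, the inequality asserted in the claim is literally the content of Theorem \ref{corollaryDualGrothen}, namely $\gamma_2^*(G) \le K \sqrt{|\mathcal{A}||\mathcal{B}|}\,\varepsilon(G)$. Since there is no computation beyond rewriting norms, there is no genuine obstacle; the only step that requires care is invoking the vector characterization of $\gamma_2$ correctly, because one needs both directions (every factorization through $\ell_2$ with the prescribed operator-norm bounds produces a $P$ with $\gamma_2(P)\le 1$, and conversely every $P$ with $\gamma_2(P)\le 1$ arises this way), so that taking the supremum over factorizations coincides with taking the supremum over $\{P : \gamma_2(P) \le 1\}$.
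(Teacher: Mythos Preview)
Your proposal is correct and follows essentially the same approach as the paper: define $G$ by $\langle G, e_{x,a}\otimes e_{y,b}\rangle := \alpha_{x,y}^{a,b}$, identify the right-hand supremum with $\varepsilon(G)$ via the definition of the injective norm, identify the left-hand supremum with $\gamma_2^*(G)$ via Corollary~\ref{corollaryEqivGammaVec} and duality, and then invoke Theorem~\ref{corollaryDualGrothen}.
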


\begin{proof}
Let $G \in \ell_{1}^{|\mathcal{X} |}(\ell_{\infty}^{|\mathcal{A}|}) \otimes \ell_{1}^{|\mathcal{Y} |}(\ell_{\infty}^{|\mathcal{B}|})$ with 
\[
\langle G, e_{x,a} \otimes e_{y,b} \rangle := \alpha_{x,y}^{a,b} ~,
\]
and $P_s = \sum_{x,a} s_{x,a} \cdot e_{x,a} \in \ell_{\infty}^{|\mathcal{X} |}(\ell_{1}^{|\mathcal{A}|})$ and $P_t = \sum_{y,b} t_{y,b} \cdot e_{y,b} \in \ell_{\infty}^{|\mathcal{Y} |}(\ell_{1}^{|\mathcal{B}|})$ with $s_{x,a}, t_{y,b} \in \mathbb{R}$.
Computing the injective tensor norm of $G$ yields
\begin{eqnarray}
\label{eqnVarMRepOth}
 \varepsilon(G) &=& \sup \{ | \langle G, P_s \otimes P_t \rangle | ~:~ \| P_s \|_{\infty(1)} \leq 1~,~ \| P_t \|_{\infty(1)} \leq 1 \} \nonumber \\
 &=& \sup \left\{ \left| \sum_{x,y,a,b} \langle G, e_{x,a} \otimes e_{y,b} \rangle \cdot s_{x,a} \cdot t_{y,b} \right| ~:~ \| (s_{x,a}) \|_{\infty(1)} \leq 1~,~ \| (t_{y,b}) \|_{\infty(1)} \leq 1 \right\} \nonumber \\
 &=& \sup \left\{ \left| \sum_{x,y,a,b} \alpha_{x,y}^{a,b} \cdot s_{x,a} \cdot t_{y,b} \right| ~:~ \| (s_{x,a}) \|_{\infty(1)} \leq 1~,~ \| (t_{y,b}) \|_{\infty(1)} \leq 1 \right\} ~.
\end{eqnarray}
On the other hand, using Corollary \ref{corollaryEqivGammaVec} in Appendix \ref{sectionHilbertTensor1}, we obtain
\begin{eqnarray}
\label{eqnGammRepDif}
 \gamma_2^*(G) &=& \sup \{ | \langle G, P \rangle | ~:~ \gamma_2(P) \leq 1 \} \nonumber \\
 &=& \sup \{ | \langle G, P \rangle | ~:~ \| (m_{x,y}) \|_{1(\infty) \rightarrow 2} \leq 1~,~ \| (n_{y,b})^T \|_{2 \rightarrow \infty(1)} \leq 1 \} \nonumber \\
 &=& \sup \left\{ \left| \sum_{x,y,a,b} \alpha_{x,y}^{a,b}  \langle m_{x,a}, n_{y,b} \rangle \right| ~:~ \| (m_{x,y}) \|_{1(\infty) \rightarrow 2} \leq 1 ,~\| (n_{y,b})^T \|_{2 \rightarrow \infty(1)} \leq 1 \right\} ~,
\end{eqnarray}
with $P = \sum_{x,y,a,b} P_{x,y}^{a,b} \cdot e_{x,a} \otimes e_{y,b} \in \ell_{\infty}^{|\mathcal{X} |}(\ell_{1}^{|\mathcal{A}|}) \otimes \ell_{\infty}^{|\mathcal{Y} |}(\ell_{1}^{|\mathcal{B}|})$ and $P_{x,y}^{a,b} := \langle m_{x,a}, n_{y,b} \rangle$. Equations (\ref{eqnVarMRepOth}) and (\ref{eqnGammRepDif}) together with Theorem \ref{corollaryDualGrothen} yield the result.
\end{proof}

The standard Grothendieck inequality in tensor as well as in matrix form are recovered from our generalized Grothendieck inequalities by setting $| \mathcal{A}| = |\mathcal{B}| = 1$.

\subsection{Direct-Product Theorems}
\label{sectionDirctProductTheo}

We first show a direct-product result for the $\gamma_2^*$ tensor norm over Banach spaces which have the property that their norms behave ''nicely`` on product tensors. By behaving ''nicely`` we mean the following.
Let $X_n := (\mathbb{R}^n, \| \cdot \|_{X_n})$, with $1 \leq n < \infty$, be a Banach space. Then the norm $\| \cdot \|_{X_n}$ behaves ''nicely`` on product tensors if
\[
\| P_A \otimes P_B \|_{X_{n\cdot m}}  \leq \| P_A \|_{X_n} \cdot \| P_B \|_{X_m}  ~,
\]
for all $1 \leq n, m < \infty$, $P_A \in \mathbb{R}^n$, $P_B \in \mathbb{R}^m$ and with $P_A \otimes P_B \in \mathbb{R}^{n \cdot m}$. To shorten the notation we will usually write
\[
\| P_A \otimes P_B \|_{X}  \leq \| P_A \|_{X} \cdot \| P_B \|_{X}  ~.
\]
Let us now show that the $\infty(1)$-norm and its dual have the property that they behave ''nicely`` on product tensors.
First, the $\infty(1)$-norm can also be defined for tensor elements $P_{A_1} \otimes P_{A_2} \in \ell_{\infty}^{|\mathcal{X}_1|}(\ell_{1}^{|\mathcal{A}_1|}) \otimes \ell_{\infty}^{|\mathcal{X}_2|}(\ell_{1}^{|\mathcal{A}_2|})$ by
\begin{equation}
\label{eqnDefInf1NormTensor}
\| P_{A_1} \otimes P_{A_2} \|_{\infty(1)} := \max_{x_1 \in \mathcal{X}_1, x_2 \in \mathcal{X}_2} \sum_{a_1 \in \mathcal{A}_1, a_2 \in \mathcal{A}_2} | \langle f_{x_1,a_1} \otimes f_{x_2,a_2},  P_{A_1} \otimes P_{A_2} \rangle |  ~,
\end{equation}
with $P_{A_1} \in \ell_{\infty}^{|\mathcal{X}_1|}(\ell_{1}^{|\mathcal{A}_1|})$ and $P_{A_2} \in \ell_{\infty}^{|\mathcal{X}_2|}(\ell_{1}^{|\mathcal{A}_2|})$. Then, we obtain:
\begin{lemma}
\label{lemmaInf1NormProduct}
Let $P_{A_1} \in \ell_{\infty}^{| \mathcal{X}_1 |}(\ell_{1}^{| \mathcal{A}_1 |})$, $P_{A_2} \in \ell_{\infty}^{| \mathcal{X}_2 |}(\ell_{1}^{| \mathcal{A}_2 |})$ and $G_{A_1} \in \ell_1^{| \mathcal{X}_1 |}(\ell_{\infty}^{| \mathcal{A}_1 |})$, $G_{A_2} \in \ell_1^{| \mathcal{X}_2 |}(\ell_{\infty}^{| \mathcal{A}_2 |})$. Then
\begin{eqnarray}
\| P_{A_1} \otimes P_{A_2} \|_{\infty(1)} &=& \| P_{A_1} \|_{\infty(1)} \cdot \| P_{A_2} \|_{\infty(1)} ~, \nonumber \\
\| G_{A_1} \otimes G_{A_2} \|_{1(\infty)} &=& \| G_{A_1} \|_{1(\infty)} \cdot \| G_{A_2} \|_{1(\infty)} ~, \nonumber
\end{eqnarray}
with $P_{A_1} \otimes P_{A_2} \in \ell_{\infty}^{|\mathcal{X}_1|}(\ell_{1}^{|\mathcal{A}_1|}) \otimes \ell_{\infty}^{|\mathcal{X}_2|}(\ell_{1}^{|\mathcal{A}_2|})$ and $G_{A_1} \otimes G_{A_2} \in \ell_{1}^{|\mathcal{X}_1|}(\ell_{\infty}^{|\mathcal{A}_1|}) \otimes \ell_{1}^{|\mathcal{X}_2|}(\ell_{\infty}^{|\mathcal{A}_2|})$.
\end{lemma}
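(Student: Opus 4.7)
The plan is to verify both identities by direct computation from the definitions, exploiting the fact that the canonical dual pairing on a tensor product of real vector spaces factorises across pure tensors. Concretely, for the $\infty(1)$-norm I will use the defining expression
\[
\| P_{A_1} \otimes P_{A_2} \|_{\infty(1)} = \max_{x_1, x_2} \sum_{a_1, a_2} \bigl| \langle f_{x_1,a_1} \otimes f_{x_2,a_2},\, P_{A_1} \otimes P_{A_2} \rangle \bigr|
\]
from (\ref{eqnDefInf1NormTensor}). The inner product of a pure tensor against a pure tensor factors as a product of the component inner products, so each summand becomes $|\langle f_{x_1,a_1}, P_{A_1}\rangle| \cdot |\langle f_{x_2,a_2}, P_{A_2}\rangle|$.

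From there I would use two elementary identities for non-negative quantities. First, the double sum splits: $\sum_{a_1,a_2} |\alpha_{a_1}|\,|\beta_{a_2}| = \bigl(\sum_{a_1}|\alpha_{a_1}|\bigr)\bigl(\sum_{a_2}|\beta_{a_2}|\bigr)$. Second, the maximum of a product of non-negative functions of disjoint variables factors: $\max_{x_1,x_2} \varphi(x_1)\psi(x_2) = (\max_{x_1}\varphi(x_1))(\max_{x_2}\psi(x_2))$. Applying these in sequence converts the right-hand side into $\| P_{A_1}\|_{\infty(1)} \cdot \| P_{A_2}\|_{\infty(1)}$, giving the first equality.

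For the $1(\infty)$-norm the argument is structurally identical, but with the roles of sum and max interchanged. Starting from
\[
\| G_{A_1} \otimes G_{A_2} \|_{1(\infty)} = \sum_{x_1,x_2} \max_{a_1,a_2} \bigl| \langle G_{A_1} \otimes G_{A_2},\, e_{x_1,a_1} \otimes e_{x_2,a_2}\rangle \bigr|,
\]
I again factor the inner product as $|\langle G_{A_1}, e_{x_1,a_1}\rangle|\cdot|\langle G_{A_2}, e_{x_2,a_2}\rangle|$, then pull the max across the product (using non-negativity of absolute values) and the sum across disjoint variables. This yields $\| G_{A_1}\|_{1(\infty)} \cdot \| G_{A_2}\|_{1(\infty)}$.

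There is no real obstacle here; the only thing to be careful about is that the definition of the $\infty(1)$-norm on tensor product spaces must be the one given in (\ref{eqnDefInf1NormTensor}), which treats $\mathcal{X}_1 \times \mathcal{X}_2$ as the outer index set and $\mathcal{A}_1 \times \mathcal{A}_2$ as the inner one (and analogously for the dual). Once the conventions are spelled out, both identities reduce to the tensor-product factorisation of the inner product combined with the two elementary splitting identities for max and sum applied to non-negative quantities.
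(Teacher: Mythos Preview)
Your proposal is correct and follows essentially the same approach as the paper's own proof: the paper also starts from the definition in (\ref{eqnDefInf1NormTensor}), factors $|\langle f_{x_1,a_1}\otimes f_{x_2,a_2}, P_{A_1}\otimes P_{A_2}\rangle|$ into a product, then splits the sum and the supremum over the disjoint index sets to obtain the product of the two $\infty(1)$-norms, and handles the $1(\infty)$-case by the analogous computation.
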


\begin{proof}
Using the definition of the $\infty(1)$-norm (see  (\ref{eqnDefInf1NormTensor})) gives us
\begin{eqnarray}
\| P_{A_1} \otimes P_{A_2} \|_{\infty(1)} &=& \sup_{x_1,x_2} \sum_{a_1,a_2} | \langle f_{x_1,a_1} \otimes f_{x_2,a_2}, P_{A_1} \otimes P_{A_2} \rangle | \nonumber \\
&=& \sup_{x_1,x_2} \sum_{a_1,a_2} | \langle f_{x_1,a_1}, P_{A_1} \rangle | \cdot | \langle f_{x_2,a_2}, P_{A_2} \rangle | \nonumber \\
&=& \sup_{x_1} \sum_{a_1} | \langle f_{x_1,a_1}, P_{A_1} \rangle | \cdot \sup_{x_2} \sum_{a_2} | \langle f_{x_2,a_2}, P_{A_2} \rangle | \nonumber \\
&=& \| P_{A_1} \|_{\infty(1)} \cdot \| P_{A_2} \|_{\infty(1)} ~. \nonumber
\end{eqnarray}
Similarly, we get $\| G_{A_1} \otimes G_{A_2} \|_{1(\infty)} = \| G_{A_1} \|_{1(\infty)} \cdot \| G_{A_2} \|_{1(\infty)}$.
\end{proof}

Let us now prove a direct product result for the $\gamma_2^*$ tensor norm. 
We will need the following result in order to show this result.
\begin{lemma}[Bennett \cite{schurMultipliers}]
\label{lemmaProductNorm}
Let $A$ and $B$ be $n \times n$ and $m \times m$ matrices over $\mathbb{R}$, respectively. Then
\[
\| A \otimes B \|_{2 \rightarrow 2} = \| A \|_{2 \rightarrow 2} \cdot \| B \|_{2 \rightarrow 2} ~.
\]
\end{lemma}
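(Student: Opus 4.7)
The plan is to prove the identity by computing singular value decompositions. Recall that for any real matrix $M$, $\|M\|_{2 \to 2}$ equals its largest singular value $\sigma_{\max}(M)$, so the claim reduces to showing that the singular values of $A \otimes B$ are precisely the products $\sigma_i(A) \cdot \sigma_j(B)$.

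First I would establish the easy lower bound. Let $v \in \mathbb{R}^n$ and $w \in \mathbb{R}^m$ be unit vectors achieving $\|Av\|_2 = \|A\|_{2 \to 2}$ and $\|Bw\|_2 = \|B\|_{2 \to 2}$. Then $v \otimes w$ is a unit vector in $\mathbb{R}^{nm}$, and the mixed-product property $(A \otimes B)(v \otimes w) = (Av) \otimes (Bw)$ together with $\|x \otimes y\|_2 = \|x\|_2 \|y\|_2$ gives
\[
\| A \otimes B \|_{2 \to 2} \geq \|(Av) \otimes (Bw)\|_2 = \|A\|_{2 \to 2} \cdot \|B\|_{2 \to 2}.
\]

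For the matching upper bound I would write SVDs $A = U_A \Sigma_A V_A^T$ and $B = U_B \Sigma_B V_B^T$ with $U_A, V_A, U_B, V_B$ orthogonal and $\Sigma_A, \Sigma_B$ diagonal with nonnegative entries. Applying the mixed-product property twice yields
\[
A \otimes B = (U_A \otimes U_B)\,(\Sigma_A \otimes \Sigma_B)\,(V_A \otimes V_B)^T.
\]
Since the tensor product of orthogonal matrices is orthogonal (this follows directly from $(U_A \otimes U_B)^T (U_A \otimes U_B) = (U_A^T U_A) \otimes (U_B^T U_B) = I_n \otimes I_m = I_{nm}$), and since $\Sigma_A \otimes \Sigma_B$ is a diagonal matrix with nonnegative entries $\sigma_i(A)\sigma_j(B)$, this is an SVD of $A \otimes B$. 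Hence the largest singular value of $A \otimes B$ equals $\sigma_{\max}(A) \cdot \sigma_{\max}(B)$, giving the reverse inequality.

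There is no real obstacle here: the whole argument is routine once one commits to using SVDs, the only thing to be slightly careful about is the mixed-product identity $(M_1 \otimes N_1)(M_2 \otimes N_2) = (M_1 M_2) \otimes (N_1 N_2)$, which is a direct consequence of the definition of the Kronecker product. An equivalent alternative (if one wanted to avoid invoking SVD) would be to identify vectors in $\mathbb{R}^{nm}$ with $n \times m$ matrices $Z$ under the isomorphism that sends $A \otimes B$ to the map $Z \mapsto A Z B^T$, and then apply the standard submultiplicativity bounds $\|AZ\|_F \leq \|A\|_{2 \to 2}\|Z\|_F$ and $\|Z B^T\|_F \leq \|B\|_{2 \to 2}\|Z\|_F$ in succession; this would give the upper bound without passing through SVD, with the lower bound obtained exactly as above from rank-one $Z = v w^T$.
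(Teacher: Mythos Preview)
Your proof is correct. Note, however, that the paper does not actually prove this lemma: it is stated with attribution to Bennett \cite{schurMultipliers} and used as a black box in the proof of Lemma~\ref{lemmaGammaComposition}, so there is no ``paper's own proof'' to compare against. Your SVD argument is the standard textbook route to this identity, and the alternative Frobenius-norm argument you sketch at the end would also work cleanly.
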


\begin{lemma}
\label{lemmaGammaComposition}
Let $X$ and $Y$ be Banach spaces with the norms having the property that
\begin{eqnarray}
\| G_{A_1} \otimes G_{A_2} \|_X  &\leq& \| G_{A_1} \|_X \cdot \| G_{A_2} \|_X ~,  \nonumber \\
\| G_{B_1} \otimes G_{B_2} \|_Y  &\leq& \| G_{B_1} \|_Y \cdot \| G_{B_2} \|_Y ~, \nonumber
\end{eqnarray}
and let the tensor norm $\gamma_2^*$ be defined over the tensor space $X \otimes Y$.
Then
\[
\gamma_2^*(G_{A_1B_1} \odot G_{A_2B_2}) \leq  \gamma_2^*(G_{A_1B_1}) \cdot \gamma_2^*(G_{A_2B_2}) ~,
\]
where the partition of $G_{A_1B_1} \odot G_{A_2B_2}$ is with respect to $A_1A_2 : B_1B_2$.
\end{lemma}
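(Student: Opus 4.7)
My approach is to lift near-optimal Hilbert-space factorizations of $G_1$ and $G_2$ to a factorization of the composed game. By the vector characterization of $\gamma_2$ (Corollary \ref{corollaryEqivGammaVec}) and duality, $\gamma_2^{*}$ admits a factorization form: given $\varepsilon > 0$, there exist, for each $i \in \{1,2\}$, linear operators $U_i \colon X_i^{*} \to H_i$ and $V_i \colon Y_i^{*} \to H_i$ into Hilbert spaces $H_i$ such that
\[
\langle G_i,\, x_i^{*} \otimes y_i^{*}\rangle \;=\; \langle U_i x_i^{*},\, V_i y_i^{*}\rangle_{H_i}\qquad \text{for all } x_i^{*} \in X_i^{*},\ y_i^{*} \in Y_i^{*},
\]
with $\|U_i\|_{X_i^{*} \to H_i}\cdot \|V_i\|_{Y_i^{*} \to H_i} \leq \gamma_2^{*}(G_i) + \varepsilon$. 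Set $U := U_1 \otimes U_2$ and $V := V_1 \otimes V_2$ as maps into the Hilbert tensor product $H_1 \otimes H_2$. The bilinear identity on rank-one tensors $x_1^{*} \otimes x_2^{*}$, $y_1^{*} \otimes y_2^{*}$ extends by linearity to all of $X_{12}^{*} \otimes Y_{12}^{*}$ and yields a valid Hilbert-space factorization of $G_1 \odot G_2$, whence $\gamma_2^{*}(G_1 \odot G_2) \leq \|U\|\cdot \|V\|$.

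The proof is thus reduced to the operator-norm inequality
\[
\|U_1 \otimes U_2\|_{X_{12}^{*} \to H_1 \otimes H_2} \;\leq\; \|U_1\|\cdot \|U_2\|,
\]
and its analogue for $V$. My plan to establish this is to pass to extreme points of the unit ball of $X_{12}^{*}$; by the hypothesized multiplicativity of $\|\cdot\|_X$ on product tensors, dualized to the predual, these extreme points are parameterized by a sign function $s$ on the combined $X$-index set together with coordinate-index assignments $a_1, a_2$. The squared norm of the image then rewrites as a quadratic form $s^{\top} (A \odot B)\, s$ in the sign vector $s$, where $A$ and $B$ are positive semi-definite Gram matrices formed from the columns of $U_1$ and $U_2$ (with the chosen assignments). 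Since $A \odot B$ arises as a principal submatrix of $A \otimes B$, Bennett's Lemma (Lemma \ref{lemmaProductNorm}) gives $\|A \otimes B\|_{2\to 2} = \|A\|_{2\to 2}\cdot \|B\|_{2\to 2}$, and a careful extremal sign-vector analysis then yields $s^{\top} (A \odot B)\, s \leq \|U_1\|^{2}\|U_2\|^{2}$.

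Combining the two steps gives $\gamma_2^{*}(G_1 \odot G_2) \leq \|U_1\|\|V_1\|\cdot \|U_2\|\|V_2\| \leq (\gamma_2^{*}(G_1)+\varepsilon)(\gamma_2^{*}(G_2)+\varepsilon)$, and letting $\varepsilon \to 0$ completes the proof. The main obstacle is the operator-norm inequality above: the naive triangle bound merely yields the projective-tensor bound on the domain, which strictly exceeds the combined $\|\cdot\|_{X}$-norm. It is precisely the interplay of the hypothesized multiplicativity of $\|\cdot\|_X$ on product tensors and Bennett's Lemma controlling the $2\to 2$ norms on the Hilbert side that bridges this gap; without either ingredient one cannot descend from the projective to the prescribed Banach norm on $X_{12}$.
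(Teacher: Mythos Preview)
Your opening move is where the argument breaks. The Hilbert-space factorization you write down,
\[
\langle G,\, x^{*}\otimes y^{*}\rangle \;=\; \langle U x^{*},\, V y^{*}\rangle_{H},\qquad \inf \|U\|_{X^{*}\to H}\cdot \|V\|_{Y^{*}\to H},
\]
is exactly the operator-ideal characterization of $\gamma_2(G)$, not of $\gamma_2^{*}(G)$; this is precisely equation~(\ref{eqnAltDefW2}) (with $S=U$ and $R=V^{*}$, using Lemma~\ref{lemmaDualityA}). The tensor norms $\gamma_2$ and $\gamma_2^{*}$ are genuinely different, and invoking ``duality'' from Corollary~\ref{corollaryEqivGammaVec} only gives you the sup-formula $\gamma_2^{*}(G)=\sup\{|\langle G,P\rangle|:\gamma_2(P)\le 1\}$, not a factorization infimum for $G$. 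So what your argument is actually aiming at is $\gamma_2(G_1\odot G_2)\le \gamma_2(G_1)\gamma_2(G_2)$, which is neither the statement of the lemma nor something the paper claims (Theorem~\ref{lemmaProductTheoremGenNorm} only asserts the reverse inequality for $\gamma_2$).

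The paper proceeds instead from the explicit representation~(\ref{eqnDefGammaDual}) of $\gamma_2^{*}$: one picks optimal decompositions $G_1=\sum_{i,j}\alpha_{ij}\,G_{A_1}^{i}\otimes G_{B_1}^{j}$ and $G_2=\sum_{k,l}\beta_{kl}\,G_{A_2}^{k}\otimes G_{B_2}^{l}$, tensors them to obtain a candidate decomposition of $G_1\odot G_2$ with coefficient matrix $(\alpha_{ij})\otimes(\beta_{kl})$, and then Bennett's Lemma~\ref{lemmaProductNorm} is applied directly to these \emph{coefficient} matrices to get $\|(\alpha)\otimes(\beta)\|_{2\to 2}=\|(\alpha)\|_{2\to 2}\cdot\|(\beta)\|_{2\to 2}$. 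The hypothesis on $\|\cdot\|_X$, $\|\cdot\|_Y$ is used only to bound the $\ell_2$-sums $\ell_2(G_{A_1}^{i}\otimes G_{A_2}^{k};X)\le \ell_2(G_{A_1}^{i};X)\cdot\ell_2(G_{A_2}^{k};X)$. There is no operator-norm inequality of the form $\|U_1\otimes U_2\|\le\|U_1\|\|U_2\|$ to establish, and no passage through extreme points or Schur products.
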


\begin{proof}
Let $G_{A_1B_1} = \sum_{i,j} \alpha_{ij} \cdot G_{A_1}^i \otimes G_{B_1}^j$ and $G_{A_2B_2} = \sum_{k,l} \beta_{kl} \cdot G_{A_2}^k \otimes G_{B_2}^l$ be optimal decompositions in the definition of the $\gamma_2^*$ tensor norm (see  (\ref{eqnDefGammaDual}) in Appendix \ref{sectionHilbertTensor1}). Taking the composition of these two systems gives us (see also (\ref{eqnDefDecomp}))
\[
G_{A_1B_1} \odot G_{A_2B_2} = \sum_{i,k} \sum_{j,l} \alpha_{ij} \cdot \beta_{kl} \cdot (G_{A_1}^i \otimes G_{A_2}^k) \otimes (G_{B_1}^j \otimes G_{B_2}^l) ~.
\]
We therefore get
\begin{eqnarray}
 \gamma_2^*(G_{A_1B_1} \odot G_{A_2B_2}) &\leq& \| (\alpha_{ij} \cdot \beta_{kl}) \|_{2 \rightarrow 2} \cdot \ell_2(G_{A_1}^i \otimes G_{A_2}^k; X) \cdot \ell_2( G_{B_1}^j \otimes G_{B_2}^l ; Y) \nonumber \\
&=& \| (\alpha_{ij}) \otimes (\beta_{kl}) \|_{2 \rightarrow 2} \cdot \ell_2(G_{A_1}^i \otimes G_{A_2}^k; X) \cdot \ell_2( G_{B_1}^j \otimes G_{B_2}^l ; Y) \nonumber \\
&=& \| (\alpha_{ij}) \|_{2 \rightarrow 2} \cdot \| (\beta_{kl}) \|_{2 \rightarrow 2} \cdot \ell_2(G_{A_1}^i \otimes G_{A_2}^k; X) \cdot \ell_2( G_{B_1}^j \otimes G_{B_2}^l ; Y) ~, \nonumber
\end{eqnarray}
where we used Lemma \ref{lemmaProductNorm} in the last line. The fact that the local norms behave nicely on product tensors implies immediately that $\ell_2(G_{A_1}^i \otimes G_{A_2}^k; X) \leq \ell_2(G_{A_1}^i; X) \cdot \ell_2(G_{A_2}^k; X)$
and $\ell_2(G_{B_1}^j \otimes G_{B_2}^l; Y) \leq \ell_2(G_{B_1}^j; Y) \cdot \ell_2(G_{B_2}^l; Y)$, and therefore
\begin{eqnarray}
\gamma_2^*(G_{A_1B_1} \odot G_{A_2B_2}) &\leq&  \| (\alpha_{ij}) \|_{2 \rightarrow 2} \cdot \ell_2(G_{A_1}^i; X) \cdot \ell_2(G_{B_1}^j; Y) \nonumber \\
&\cdot& \| (\beta_{kl}) \|_{2 \rightarrow 2} \cdot \ell_2(G_{A_2}^k; X) \cdot \ell_2(G_{B_2}^l; Y) \nonumber \\
&=& \gamma_2^*(G_{A_1B_1}) \cdot \gamma_2^*(G_{A_2B_2}) ~. \nonumber
\end{eqnarray}
\end{proof}

The next theorem gives new direct-product results for the $\gamma_2$ and $\gamma_2^*$ tensor norms over the Banach spaces $\ell_{\infty}^{|\mathcal{X} |}(\ell_{1}^{|\mathcal{A}|}) \otimes \ell_{\infty}^{|\mathcal{Y} |}(\ell_{1}^{|\mathcal{B}|})$ and $\ell_{1}^{|\mathcal{X} |}(\ell_{\infty}^{|\mathcal{A}|}) \otimes \ell_{1}^{|\mathcal{Y} |}(\ell_{\infty}^{|\mathcal{B}|})$.
\begin{theorem}
\label{lemmaProductTheoremGenNorm}
Let $P_{A_1B_1} \in \ell_{\infty}^{|\mathcal{X}_1 |}(\ell_{1}^{|\mathcal{A}_1|}) \otimes \ell_{\infty}^{|\mathcal{Y}_1 |}(\ell_{1}^{|\mathcal{B}_1|})$, $P_{A_2B_2} \in \ell_{\infty}^{|\mathcal{X}_2 |}(\ell_{1}^{|\mathcal{A}_2|}) \otimes \ell_{\infty}^{|\mathcal{Y}_2 |}(\ell_{1}^{|\mathcal{B}_2|})$, $G_{A_1B_1} \in \ell_{1}^{|\mathcal{X}_1 |}(\ell_{\infty}^{|\mathcal{A}_1|}) \otimes \ell_{1}^{|\mathcal{Y}_1 |}(\ell_{\infty}^{|\mathcal{B}_1|})$, $G_{A_2B_2} \in \ell_{1}^{|\mathcal{X}_2 |}(\ell_{\infty}^{|\mathcal{A}_2|}) \otimes \ell_{1}^{|\mathcal{Y}_2 |}(\ell_{\infty}^{|\mathcal{B}_2|})$ and $P_{A_1B_1} \odot P_{A_2B_2}$ and $G_{A_1B_1} \odot G_{A_2B_2}$ be bipartite with respect to the partition $A_1A_2:B_1B_2$. Then
\begin{eqnarray}
\gamma_2(P_{A_1B_1} \odot P_{A_2B_2}) &\geq& \gamma_2(P_{A_1B_1}) \cdot \gamma_2(P_{A_2B_2}) ~, \nonumber \\
\gamma_2(G_{A_1B_1} \odot G_{A_2B_2}) &\geq& \gamma_2(G_{A_1B_1}) \cdot \gamma_2(G_{A_2B_2}) ~, \nonumber \\
\gamma_2^*(G_{A_1B_1} \odot G_{A_2B_2}) &\leq& \gamma_2^*(G_{A_1B_1}) \cdot \gamma_2^*(G_{A_2B_2}) ~, \nonumber \\
\gamma_2^*(P_{A_1B_1} \odot P_{A_2B_2}) &\leq& \gamma_2^*(P_{A_1B_1}) \cdot \gamma_2^*(P_{A_2B_2}) \nonumber ~.
\end{eqnarray}
\end{theorem}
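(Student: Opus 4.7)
The plan is to deduce the four inequalities from two inputs that are already in place in the excerpt: Lemma~\ref{lemmaGammaComposition} (a direct product upper bound for $\gamma_2^*$ assuming the local norms are submultiplicative on product tensors) and Lemma~\ref{lemmaInf1NormProduct} (which provides exactly that submultiplicativity, with equality, for $\|\cdot\|_{\infty(1)}$ and $\|\cdot\|_{1(\infty)}$). I first handle the upper bounds (the third and fourth inequalities), and then obtain the lower bounds (the first and second) by a clean duality argument.

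For the third inequality, I take the local Banach spaces $X = \ell_{1}^{|\mathcal{X}_i|}(\ell_{\infty}^{|\mathcal{A}_i|})$ and $Y = \ell_{1}^{|\mathcal{Y}_i|}(\ell_{\infty}^{|\mathcal{B}_i|})$. Lemma~\ref{lemmaInf1NormProduct} guarantees $\|G_{A_1}\otimes G_{A_2}\|_{1(\infty)} = \|G_{A_1}\|_{1(\infty)}\cdot \|G_{A_2}\|_{1(\infty)}$ and similarly on the $B$-side, so the hypothesis of Lemma~\ref{lemmaGammaComposition} is satisfied and the upper bound $\gamma_2^*(G_{A_1B_1}\odot G_{A_2B_2})\leq \gamma_2^*(G_{A_1B_1})\cdot\gamma_2^*(G_{A_2B_2})$ follows. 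The fourth inequality is identical with the local spaces replaced by $\ell_{\infty}^{|\mathcal{X}_i|}(\ell_{1}^{|\mathcal{A}_i|})$ and $\ell_{\infty}^{|\mathcal{Y}_i|}(\ell_{1}^{|\mathcal{B}_i|})$, invoking the other equality in Lemma~\ref{lemmaInf1NormProduct}.

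For the lower bounds, I use that $\gamma_2$ and $\gamma_2^*$ are dual to each other, so
\[
\gamma_2(P) = \sup\bigl\{|\langle G,P\rangle|\,:\, \gamma_2^*(G)\leq 1\bigr\}.
\]
The key observation is that the dual pairing factorizes on $\odot$-products: if $G_{A_iB_i}=\sum_{k}G_{A_i}^{k}\otimes G_{B_i}^{k}$ and $P_{A_iB_i}=\sum_{m}P_{A_i}^{m}\otimes P_{B_i}^{m}$ are arbitrary decompositions, then expanding $G_{A_1B_1}\odot G_{A_2B_2}$ and $P_{A_1B_1}\odot P_{A_2B_2}$ according to (\ref{eqnDefDecomp}) and using bilinearity gives
\[
\langle G_{A_1B_1}\odot G_{A_2B_2},\, P_{A_1B_1}\odot P_{A_2B_2}\rangle = \langle G_{A_1B_1},P_{A_1B_1}\rangle\cdot \langle G_{A_2B_2},P_{A_2B_2}\rangle.
\]
To prove the first inequality, I fix $\epsilon>0$ and choose $G_{A_iB_i}\in \ell_{1}^{|\mathcal{X}_i|}(\ell_{\infty}^{|\mathcal{A}_i|})\otimes \ell_{1}^{|\mathcal{Y}_i|}(\ell_{\infty}^{|\mathcal{B}_i|})$ with $\gamma_2^*(G_{A_iB_i})\leq 1$ and $|\langle G_{A_iB_i},P_{A_iB_i}\rangle|\geq \gamma_2(P_{A_iB_i})-\epsilon$. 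By the already established third inequality, $\gamma_2^*(G_{A_1B_1}\odot G_{A_2B_2})\leq 1$, so taking this $G$ as a feasible dual object for $P_{A_1B_1}\odot P_{A_2B_2}$ yields
\[
\gamma_2(P_{A_1B_1}\odot P_{A_2B_2}) \geq |\langle G_{A_1B_1},P_{A_1B_1}\rangle|\cdot|\langle G_{A_2B_2},P_{A_2B_2}\rangle| \geq \bigl(\gamma_2(P_{A_1B_1})-\epsilon\bigr)\bigl(\gamma_2(P_{A_2B_2})-\epsilon\bigr),
\]
and letting $\epsilon\to 0$ concludes the argument. The second inequality follows by precisely the same duality, using the fourth inequality in place of the third.

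I do not foresee a substantive obstacle; the only point that requires a little care is the tensor-pairing identity above, which is a routine consequence of bilinearity once one tracks the $A_1A_2:B_1B_2$ grouping implicit in the definition of $\odot$. Everything else is a bookkeeping application of Lemma~\ref{lemmaGammaComposition} and Lemma~\ref{lemmaInf1NormProduct}, and the duality between $\gamma_2$ and $\gamma_2^*$.
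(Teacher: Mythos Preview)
Your proposal is correct and follows essentially the same route as the paper: apply Lemma~\ref{lemmaInf1NormProduct} to verify the hypotheses of Lemma~\ref{lemmaGammaComposition}, obtain the two $\gamma_2^*$ upper bounds, and then derive the two $\gamma_2$ lower bounds by duality using the factorization $\langle G_1\odot G_2, P_1\odot P_2\rangle = \langle G_1,P_1\rangle\langle G_2,P_2\rangle$. The paper writes the duality step as a direct supremum restriction to product $G$'s rather than via an $\epsilon$-argument, but this is purely cosmetic.
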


\begin{proof}
As $\| P_A \otimes P_B \|_{\infty(1)} = \| P_A \|_{\infty(1)} \cdot \| P_B \|_{\infty(1)}$ and $\| G_A \otimes G_B \|_{1(\infty)} = \| G_A \|_{1(\infty)} \cdot \| G_B \|_{1(\infty)}$ (by Lemma \ref{lemmaInf1NormProduct}), Lemma \ref{lemmaGammaComposition} immediately implies that $\gamma_2^*(G_{A_1B_1} \odot G_{A_2B_2}) \leq \gamma_2^*(G_{A_1B_1}) \cdot \gamma_2^*(G_{A_2B_2})$ and $\gamma_2^*(P_{A_1B_1} \odot P_{A_2B_2}) \leq \gamma_2^*(P_{A_1B_1}) \cdot \gamma_2^*(P_{A_2B_2})$ which, by duality, imply $\gamma_2(P_{A_1B_1} \odot P_{A_2B_2}) \geq \gamma_2(P_{A_1B_1}) \cdot \gamma_2(P_{A_2B_2})$ and $\gamma_2(G_{A_1B_1} \odot G_{A_2B_2}) \geq \gamma_2(G_{A_1B_1}) \cdot \gamma_2(G_{A_2B_2})$, respectively. This holds since
\begin{eqnarray}
 \gamma_2(P_{A_1B_1} \odot P_{A_2B_2}) &=& \sup \{ | \langle G, P_{A_1B_1} \odot P_{A_2B_2} \rangle | ~:~ \gamma_2^*(G) \leq 1 \} \nonumber \\
&\geq& \sup \{ | \langle G_1 \odot G_2, P_{A_1B_1} \odot P_{A_2B_2} \rangle | ~:~ \gamma_2^*(G_1) \leq 1~,~ \gamma_2^*(G_2) \leq 1 \} \nonumber \\
&=& \sup \{ | \langle G_1,P_{A_1B_1} \rangle | \cdot | \langle G_2, P_{A_2B_2} \rangle | ~:~ \gamma_2^*(G_1) \leq 1~,~ \gamma_2^*(G_2) \leq 1 \} \nonumber \\
&=& \gamma_2(P_{A_1B_1}) \cdot \gamma_2(P_{A_2B_2}) ~, \nonumber
\end{eqnarray}
where, in the second line, we used that $\gamma_2^*(G_1) \leq 1$ and $\gamma_2^*(G_2) \leq 1$ imply $\gamma_2^*(G_1 \odot G_2) \leq 1$.
\end{proof}

Theorem \ref{lemmaProductTheoremGenNorm} can be strengthened when restricting to the case where $|\mathcal{A}_1 | = |\mathcal{A}_2 |  =  |\mathcal{B}_1 | = |\mathcal{B}_2 |  = 1$, namely, one can obtain perfect direct-product theorems, i.e., equalities instead of upper or lower bounds \cite{paralleXOR, directProductTheorems}.

\subsection{Hilbertian Tensor Norm and Bipartite Quantum Systems}
\label{sectionHilbertQuantum}

Remember that we call $P \in \ell_{\infty}^{|\mathcal{X} |}(\ell_{1}^{|\mathcal{A}|}) \otimes \ell_{\infty}^{|\mathcal{Y} |}(\ell_{1}^{|\mathcal{B}|})$ a \emph{quantum system} if it can be obtained by measurements on a pure quantum state (see Section \ref{sectionHilbertTensor}).
On the other hand, we will call $P \in \ell_{\infty}^{|\mathcal{X} |} \otimes \ell_{\infty}^{|\mathcal{Y} |}$ a \emph{quantum correlation} if there exists a pure quantum state $| \Psi \rangle  \in \mathcal{H}_A \otimes \mathcal{H}_B$, and observables $A_1, ...,A_{|\mathcal{X}|}$ and $B_1, ...,B_{|\mathcal{Y}|}$ on $\mathcal{H}_A$ and $\mathcal{H}_B$, respectively, with eigenvalues $\pm 1$ such that
\[
\langle f_{x} \otimes f_{y}, P \rangle = \langle \Psi | A_x \otimes B_y | \Psi \rangle ~.
\]

Let us first focus on the case where $P$ is a quantum correlation. In order to establish a connection to the $\gamma_2$ tensor norm, we need a theorem by Tsirelson, which says that the correlations which can be obtained by measurements on a quantum state can be represented by inner products of real unit vectors, and vice versa. More formally:
\begin{lemma}[Tsirelson's Theorem \cite{tsirelson}]
\label{theoremTsirelson}
Let $A_1,...,A_{|\mathcal{X}|}$ and $B_1,...,B_{|\mathcal{Y}|}$ be observables with eigenvalues in $[ -1 , +1 ]$. Then for any state $| \Psi \rangle  \in \mathcal{H}_A \otimes \mathcal{H}_B$ there exist real unit vectors $m_1,...,m_{|\mathcal{X}|} \in \mathbb{R}^{2 \cdot \max\{|\mathcal{X}|, |\mathcal{Y}|\}}$ and $n_1,...,n_{|\mathcal{Y}|} \in \mathbb{R}^{2 \cdot \max\{|\mathcal{X}|, |\mathcal{Y}|\}}$ such that
\[
\langle m_x, n_y \rangle = \langle \Psi | A_x \otimes B_y | \Psi \rangle ~,
\]
for all $1 \leq x \leq |\mathcal{X}|$ and  $1 \leq y \leq |\mathcal{Y}|$.

Conversely, let $m_1,..,m_{|\mathcal{X}|}, n_1,...,n_{|\mathcal{Y}|} \in \mathbb{R}^N$ be real vectors with $\| m_x \|_2 \leq 1$ and $\| n_y \|_2 \leq 1$ for all $x \in \mathcal{X}$ and $y \in \mathcal{Y}$, respectively, and $| \Psi \rangle  \in \mathcal{H}_A \otimes \mathcal{H}_B$ be any maximally entangled state where $\dim(\mathcal{H}_A) =  \dim(\mathcal{H}_B) = 2^{\lceil N/2 + 1 \rceil}$. Then,  there exist observables $A_1,...,A_{|\mathcal{X}|}$ on $\mathcal{H}_A$ and $B_1,...,B_{|\mathcal{Y}|}$ on $\mathcal{H}_B$ with eigenvalues $\pm 1$ such that
\[
\langle m_x, n_y \rangle = \langle \Psi | A_x \otimes B_y | \Psi \rangle ~,
\]
for all $1 \leq x \leq |\mathcal{X}|$ and  $1 \leq y \leq |\mathcal{Y}|$.
\end{lemma}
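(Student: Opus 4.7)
The plan is to prove the two directions by explicit construction, using a GNS-type embedding for the forward direction and a Clifford algebra representation for the reverse direction.

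For the forward direction, I would define complex vectors in $\mathcal{H}_A\otimes\mathcal{H}_B$ by
\[
\tilde m_x := (A_x\otimes I)|\Psi\rangle,\qquad \tilde n_y := (I\otimes B_y)|\Psi\rangle,
\]
and observe that since $A_x\otimes I$ and $I\otimes B_y$ commute on the bipartite space and both factors are Hermitian,
\[
\langle \tilde m_x, \tilde n_y\rangle \;=\; \langle\Psi|A_x\otimes B_y|\Psi\rangle,
\]
which is automatically real. Moreover $\|\tilde m_x\|^2 = \langle\Psi|A_x^2|\Psi\rangle \leq 1$ and similarly for $\tilde n_y$, since $A_x^2, B_y^2 \preceq I$. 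I would then project the $\tilde n_y$'s into the span of the $\tilde m_x$'s (or vice versa, choosing the side of smaller cardinality) without disturbing any inner product $\langle \tilde m_x, \tilde n_y\rangle$, confining everything to an ambient complex space of dimension at most $\max\{|\mathcal X|,|\mathcal Y|\}$, and then realify via the standard isometry $\mathbb C^d \hookrightarrow \mathbb R^{2d}$, which preserves real parts of complex inner products. A padding step in two reserved orthogonal coordinates (one for the $m$-side, one for the $n$-side, each taking value $\sqrt{1-\|\cdot\|^2}$) promotes the vectors to unit vectors without altering any cross inner product.

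For the reverse direction, I would invoke a Clifford representation. Let $\Gamma_1,\ldots,\Gamma_{N+2}$ be Hermitian operators on a Hilbert space $\mathcal H$ of dimension $2^{\lceil (N+2)/2\rceil} = 2^{\lceil N/2 + 1\rceil}$ satisfying $\{\Gamma_i,\Gamma_j\}=2\delta_{ij}I$, built explicitly via a Jordan--Wigner construction from Pauli matrices. Given input vectors $m_x, n_y \in \mathbb R^N$ with $\|m_x\|_2,\|n_y\|_2\leq 1$, first pad them to unit vectors $m_x', n_y'\in\mathbb R^{N+2}$ by appending $\sqrt{1-\|m_x\|_2^2}$ in coordinate $N+1$ (with $0$ for all $n_y'$) and $\sqrt{1-\|n_y\|_2^2}$ in coordinate $N+2$ (with $0$ for all $m_x'$). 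Set
\[
A_x := \sum_{i=1}^{N+2} (m_x')_i\,\Gamma_i,\qquad B_y := \sum_{j=1}^{N+2} (n_y')_j\,\Gamma_j^{T}.
\]
The Clifford relations give $A_x^2 = \|m_x'\|_2^2\,I = I$ and $B_y^2 = I$, so $A_x, B_y$ are Hermitian with spectrum exactly $\pm 1$. Taking $|\Psi\rangle$ as the canonical maximally entangled state on $\mathcal H\otimes\mathcal H$ and using the standard identity $\langle\Psi|A\otimes B^T|\Psi\rangle = \tfrac{1}{\dim\mathcal H}\,\tr(AB)$ together with $\tr(\Gamma_i\Gamma_j)=\dim(\mathcal H)\cdot\delta_{ij}$ (a direct consequence of anticommutation), one computes $\langle\Psi|A_x\otimes B_y|\Psi\rangle = \langle m_x', n_y'\rangle = \langle m_x, n_y\rangle$, as required.

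The main obstacle will be the precise dimension bookkeeping. On the forward side, simultaneously achieving the ambient bound $2\max\{|\mathcal X|,|\mathcal Y|\}$ and the unit-norm conclusion requires projecting onto the smaller-cardinality span before realification and carefully reusing the padding directions. On the reverse side, the doubling of $\dim\mathcal H$ (from $2^{\lceil N/2\rceil}$ to $2^{\lceil N/2+1\rceil}$) is precisely what accommodates the two extra Clifford generators needed for the unit-norm padding; I expect the most delicate step to be verifying that this padding neither spoils any of the cross inner products nor breaks the spectrum $\pm 1$ condition on $A_x$ and $B_y$.
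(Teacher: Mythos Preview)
Your proposal is correct and goes well beyond what the paper actually does. The paper treats Tsirelson's theorem as a black-box citation to \cite{tsirelson}; the only argument it supplies is the padding trick for the converse when $\|m_x\|_2,\|n_y\|_2\le 1$ rather than $=1$: append two extra coordinates, putting $\sqrt{1-\|m_x\|_2^2}$ in slot $N{+}1$ for the $m$-side and $\sqrt{1-\|n_y\|_2^2}$ in slot $N{+}2$ for the $n$-side, so that cross inner products are preserved and the vectors become unit, then invoke the standard (unit-vector) Tsirelson theorem. Your padding step is exactly this argument, and your observation that the two extra Clifford generators are precisely what force $\dim\mathcal H = 2^{\lceil N/2+1\rceil}$ is the right explanation for the dimension appearing in the statement.

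Where your write-up differs is that you reconstruct the cited theorem itself: the GNS-type vectors $(A_x\otimes I)|\Psi\rangle$, $(I\otimes B_y)|\Psi\rangle$ for the forward direction and the Clifford/Jordan--Wigner construction for the converse are the standard proofs, and both are sound. The only place to be careful is the one you already flag: getting unit vectors inside $\mathbb R^{2\max\{|\mathcal X|,|\mathcal Y|\}}$ in the forward direction. Projecting onto the smaller-cardinality span and then realifying gives ambient real dimension $2\min\{|\mathcal X|,|\mathcal Y|\}$, and the two padding coordinates push this to $2\min+2$, which exceeds $2\max$ when $|\mathcal X|=|\mathcal Y|$. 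A cleaner route that avoids this edge case is to note that the cross-Gram $(\langle\Psi|A_x\otimes B_y|\Psi\rangle)_{x,y}$ is a real matrix arising as the off-diagonal block of a PSD Gram with all diagonal entries $\le 1$; replacing the diagonal by all $1$'s keeps the full $(\lvert\mathcal X\rvert+\lvert\mathcal Y\rvert)\times(\lvert\mathcal X\rvert+\lvert\mathcal Y\rvert)$ matrix real PSD, so a Cholesky factor gives real unit vectors in $\mathbb R^{|\mathcal X|+|\mathcal Y|}\subseteq\mathbb R^{2\max\{|\mathcal X|,|\mathcal Y|\}}$. The paper does not discuss any of this, as it simply cites the forward direction.
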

Note that this is a slightly generalized version of Tsirelson's theorem where we do not need the vectors $m_x \in \mathbb{R}^N$ and $n_y \in \mathbb{R}^N$ to be unit vectors. So let us show that Lemma \ref{theoremTsirelson} indeed holds. In order to be allowed to apply the standard Tsirelson theorem, we need unit vectors. So let us construct them. Define $\tilde{m}_x \in \mathbb{R}^{N+2}$ to be $\langle \tilde{m}_x, e_i \rangle := \langle m_x, e_i \rangle$ for all $1 \leq i \leq N$, $\langle \tilde{m}_x, e_{N+1} \rangle := \sqrt{ 1 - \| m_x \|_2^2}$ and $\langle \tilde{m}_x, e_{N+2} \rangle := 0$. And similarly, for $\tilde{n}_y \in \mathbb{R}^{N+2}$ we set $\langle \tilde{n}_y, e_i \rangle := \langle n_y, e_i \rangle$ for all $1 \leq i \leq N$, $\langle \tilde{n}_y, e_{N+1} \rangle := 0$ and $\langle \tilde{n}_y, e_{N+2} \rangle := \sqrt{ 1 - \| n_y \|_2^2}$. We then have $\langle m_x, n_y \rangle = \langle \tilde{m}_x, \tilde{n}_y \rangle$ and $\| \tilde{m}_x \|_2 = 1$ and $\| \tilde{n}_y \|_2 = 1$ for all $1 \leq x \leq |\mathcal{X}|$ and  $1 \leq y \leq |\mathcal{Y}|$. Hence, we can apply the standard Tsirelson theorem and get $\langle m_x, n_y \rangle = \langle \tilde{m}_x, \tilde{n}_y \rangle = \langle \Psi | A_x \otimes B_y | \Psi \rangle$.

Using Lemma \ref{theoremTsirelson}, we are now ready to prove a tight connection between the $\gamma_2$ tensor norm and quantum correlations, i.e.,
\begin{lemma}
\label{lemmaQuantumCorrGamma}
$P \in \ell_{\infty}^{|\mathcal{X} |} \otimes \ell_{\infty}^{|\mathcal{Y} |}$ is a quantum correlation if and only if $\gamma_2(P) \leq 1$.
\end{lemma}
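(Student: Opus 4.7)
My plan is to reduce the statement to Tsirelson's theorem (Lemma \ref{theoremTsirelson}) via the vector characterization of $\gamma_2$ (Corollary \ref{corollaryEqivGammaVec}), specialized to the trivial output alphabets $|\mathcal{A}|=|\mathcal{B}|=1$. In that degenerate case, the Banach spaces $\ell_{\infty}^{|\mathcal{X}|}(\ell_{1}^{|\mathcal{A}|})$ and $\ell_{\infty}^{|\mathcal{Y}|}(\ell_{1}^{|\mathcal{B}|})$ collapse to plain $\ell_{\infty}^{|\mathcal{X}|}$ and $\ell_{\infty}^{|\mathcal{Y}|}$, and the operator norms in Lemma \ref{lemmaRepRSNormInf2} simplify: $\|(m_x)\|_{1 \to 2} = \max_x \|m_x\|_2$ and $\|(n_y)^T\|_{2 \to \infty} = \max_y \|n_y\|_2$. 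Hence $\gamma_2(P) \leq 1$ if and only if there exist real vectors $m_x, n_y \in \ell_2^N$ (for some $N$) with $\max_x \|m_x\|_2 \leq 1$ and $\max_y \|n_y\|_2 \leq 1$ such that $\langle f_x \otimes f_y, P \rangle = \langle m_x, n_y \rangle$ for all $x,y$.

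For the forward direction, I would assume $P$ is a quantum correlation, so there exist $\pm 1$ observables $A_x, B_y$ on Hilbert spaces and a pure state $|\Psi\rangle$ with $\langle f_x \otimes f_y, P \rangle = \langle \Psi | A_x \otimes B_y | \Psi \rangle$. Applying the first part of Lemma \ref{theoremTsirelson} produces real unit vectors $m_x, n_y$ with $\langle m_x, n_y \rangle = \langle \Psi | A_x \otimes B_y | \Psi \rangle$. Since the vectors have norm exactly $1$, the vector characterization immediately certifies $\gamma_2(P) \leq 1$.

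For the converse, assume $\gamma_2(P) \leq 1$. By the vector characterization there exist $m_x, n_y \in \mathbb{R}^N$ with $\|m_x\|_2 \leq 1$ and $\|n_y\|_2 \leq 1$ realizing the entries of $P$ as inner products. I then invoke the second (generalized) part of Lemma \ref{theoremTsirelson}, which is precisely stated for vectors of norm at most $1$ (not necessarily equal to $1$) -- this is the extra flexibility the author explicitly builds into Tsirelson's theorem, obtained by padding with two extra coordinates $\sqrt{1-\|m_x\|_2^2}$ and $\sqrt{1-\|n_y\|_2^2}$ to reach unit vectors while preserving all inner products $\langle m_x, n_y \rangle$. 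Applied to a maximally entangled state of sufficient dimension, this yields $\pm 1$ observables $A_x, B_y$ with $\langle \Psi | A_x \otimes B_y | \Psi \rangle = \langle m_x, n_y \rangle = \langle f_x \otimes f_y, P \rangle$, so $P$ is a quantum correlation.

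The only real technical point, and hence the main obstacle, is ensuring the degenerate-alphabet specialization of Corollary \ref{corollaryEqivGammaVec} really does yield the clean condition ``$\max_x \|m_x\|_2, \max_y \|n_y\|_2 \leq 1$'' without any hidden slack, and that the padding trick used in the proof of the generalized Tsirelson lemma is available here; both follow immediately from the formulas in Lemma \ref{lemmaRepRSNormInf2} and the remark following Lemma \ref{theoremTsirelson}, so the argument is essentially a two-line bookkeeping around Tsirelson's theorem once the equivalence $\gamma_2(P)\leq 1 \Leftrightarrow P_{x,y}=\langle m_x,n_y\rangle$ with contractive vectors is in hand.
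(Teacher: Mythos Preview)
Your proposal is correct and follows essentially the same route as the paper: both directions reduce to Tsirelson's theorem (Lemma~\ref{theoremTsirelson}) via the factorization characterization of $\gamma_2$ (equation~(\ref{eqnAltDefW2}) / Corollary~\ref{corollaryEqivGammaVec}) specialized to $|\mathcal{A}|=|\mathcal{B}|=1$, using Lemma~\ref{lemmaRepRSNormInf2} to identify the operator norms with $\max_x\|m_x\|_2$ and $\max_y\|n_y\|_2$, and the padding trick to handle subunit vectors in the converse. The only cosmetic difference is that the paper cites (\ref{eqnAltDefW2}) directly rather than Corollary~\ref{corollaryEqivGammaVec}.
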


\begin{proof}
As $P$ is a quantum correlation we can write it, according to Lemma \ref{theoremTsirelson}, as 
\[
\langle f_x \otimes f_y, P \rangle \equiv P_{x,y} = \langle m_x, n_y \rangle ~,
\]
with $\| m_x \|_2 = 1$ and $\| n_x \|_2 = 1$ for all $1 \leq x \leq |\mathcal{X}|$ and  $1 \leq y \leq |\mathcal{Y}|$, respectively.
Furthermore, the matrices $(m_x)$ and $(n_y)^T$ give a factorization of $\hat{P}$, i.e., we have $\hat{P} = (n_y)^T \cdot (m_x)$. Using the definition of the $\gamma_2$ tensor norm given by (\ref{eqnAltDefW2}) in Appendix \ref{sectionHilbertTensor1} (with $|\mathcal{A}| = |\mathcal{B}| = 1$) yields
\[
\gamma_2(P) \leq \| (n_y)^T \|_{2 \rightarrow \infty} \cdot \| (m_x) \|_{1 \rightarrow 2} ~. 
\]
By applying Lemma \ref{lemmaRepRSNormInf2} and using that $\| m_x \|_2 = 1$ and $\| n_x \|_2 = 1$, we get $\| (n_y)^T \|_{2 \rightarrow \infty} = \| (m_x) \|_{1 \rightarrow 2} = 1$ and hence, $\gamma_2(P) \leq 1$.

For the converse, assume that $\gamma_2(P) \leq 1$. Then, by the definition given in  (\ref{eqnAltDefW2}) we can conclude that there exist real vectors $\{ m_x \}$ and $\{ n_y \}$ such that $\| (n_y)^T \|_{2 \rightarrow \infty} \leq 1$ and $\| (m_x) \|_{1 \rightarrow 2} \leq 1$ with $\langle f_x \otimes f_y , P \rangle \equiv P_{x,y} = \langle m_x, n_y \rangle$. Then, the second part of Lemma \ref{lemmaRepRSNormInf2} implies that $\| m_x \|_2 \leq 1$ and $\| n_y \|_2 \leq 1$ for all $1 \leq x \leq |\mathcal{X}|$ and  $1 \leq y \leq |\mathcal{Y}|$. 
Applying the second part of Lemma \ref{theoremTsirelson} on the vectors $\{ m_x \}$ and $\{ n_y \}$ implies that $P$ is indeed a quantum correlation.
\end{proof}

\setcounter{proposition}{1}
We would also like to prove a similar result as given by Lemma \ref{lemmaQuantumCorrGamma} for quantum systems $P \in \ell_{\infty}^{|\mathcal{X} |}(\ell_{1}^{|\mathcal{A}|}) \otimes \ell_{\infty}^{|\mathcal{Y} |}(\ell_{1}^{|\mathcal{B}|})$. Unfortunately, no generalization of Tsirelson's theorem to many outputs is known to exist\footnote{although, one might consider the \emph{quantum rounding} method in \cite{uniqueGamesEasy} as some kind of approximated version of Tsirelson's theorem for unique games.}. In particular, the second part of Tsirelson's theorem is the problem, as the first part can be generalized as will be seen in the proof of Proposition 2. We therefore get a weaker result, namely
\begin{proposition}
\label{lemmaGammaQuantumState}
Let $P \in \ell_{\infty}^{|\mathcal{X} |}(\ell_{1}^{|\mathcal{A}|}) \otimes \ell_{\infty}^{|\mathcal{Y} |}(\ell_{1}^{|\mathcal{B}|})$ be a quantum system. Then $\gamma_2(P) = 1$.
\end{proposition}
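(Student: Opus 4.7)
\medskip
\noindent\textbf{Proof proposal.} The plan is to establish $\gamma_2(P) = 1$ by two matching bounds. For the upper bound $\gamma_2(P) \leq 1$, I would produce an explicit factorization of $P$ into Hilbert-space vectors induced by the measurement operators and apply the characterization of $\gamma_2$ in terms of the operator norms $\|\cdot\|_{1(\infty)\to 2}$ and $\|\cdot\|_{2\to \infty(1)}$ (Corollary \ref{corollaryEqivGammaVec}, together with Lemma \ref{lemmaRepRSNormInf2}). For the lower bound $\gamma_2(P) \geq 1$, I would dualize against a simple game $G$ that picks out the total probability on a fixed input pair, which is $1$ because $P$ encodes a conditional probability distribution.

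\medskip
\noindent For the upper bound, fix a pure state $| \Psi \rangle \in \mathcal{H}_A \otimes \mathcal{H}_B$ and projective measurements $\{M_x^a\}$, $\{N_y^b\}$ realizing $P$, and define
\[
m_{x,a} := (M_x^a \otimes I)| \Psi \rangle, \qquad n_{y,b} := (I \otimes N_y^b)| \Psi \rangle.
\]
Then $\langle m_{x,a}, n_{y,b} \rangle = \langle \Psi | M_x^a \otimes N_y^b | \Psi \rangle = \langle f_{x,a} \otimes f_{y,b}, P \rangle$, which is real; passing from complex vectors to real vectors in twice the dimension via $\tilde m_{x,a} = (\textnormal{Re}\, m_{x,a}, \textnormal{Im}\, m_{x,a})$ and $\tilde n_{y,b} = (\textnormal{Re}\, n_{y,b}, -\textnormal{Im}\, n_{y,b})$ preserves both the inner products and the $\ell_2$ norms. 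The key computation is that for any sign pattern $s: \{1,\ldots,|\mathcal{A}|\} \to \{-1,+1\}$ and any $x$, using $M_x^a M_x^{a'} = \delta_{a,a'} M_x^a$ and $\sum_a M_x^a = I$,
\[
\Bigl\| \sum_{a} s(a) \cdot m_{x,a} \Bigr\|_2^2 = \sum_{a,a'} s(a) s(a') \langle \Psi | M_x^a M_x^{a'} \otimes I | \Psi \rangle = \sum_{a} \langle \Psi | M_x^a \otimes I | \Psi \rangle = 1,
\]
and symmetrically $\| \sum_{b} s(b) \cdot n_{y,b} \|_2^2 = 1$ for every $y$ and every signing. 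By Lemma \ref{lemmaRepRSNormInf2} this gives $\|(m_{x,a})\|_{1(\infty)\to 2} = 1$ and $\|(n_{y,b})^T\|_{2\to \infty(1)} = 1$, so the factorization certifies $\gamma_2(P) \leq 1$.

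\medskip
\noindent For the lower bound, fix any $x_0, y_0$ and take $G \in \ell_{1}^{|\mathcal{X} |}(\ell_{\infty}^{|\mathcal{A}|}) \otimes \ell_{1}^{|\mathcal{Y} |}(\ell_{\infty}^{|\mathcal{B}|})$ with $\langle G, e_{x,a} \otimes e_{y,b} \rangle = \delta_{x,x_0} \delta_{y,y_0}$, i.e.\ $G = (e_{x_0} \otimes \mathbf{1}_{\mathcal{A}}) \otimes (e_{y_0} \otimes \mathbf{1}_{\mathcal{B}})$. Since $P$ represents a conditional probability distribution, $\langle G, P \rangle = \sum_{a,b} P_{AB|XY}(a,b,x_0,y_0) = 1$. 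On the other hand, because $\gamma_2^*$ is a tensor norm (in particular a cross-norm) and $\|e_{x_0}\otimes \mathbf{1}_{\mathcal{A}}\|_{1(\infty)} = \|e_{y_0}\otimes \mathbf{1}_{\mathcal{B}}\|_{1(\infty)} = 1$, one obtains $\gamma_2^*(G) = 1$. By the duality $\gamma_2(P) = \sup\{|\langle G', P\rangle| : \gamma_2^*(G') \leq 1\}$ this yields $\gamma_2(P) \geq 1$, completing the equality.

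\medskip
\noindent The main obstacle I anticipate is purely notational: $\gamma_2$ is defined via real Hilbert spaces while the measurement vectors live in a complex one, so the real/imaginary repackaging must be handled cleanly. Once that is in place, the orthogonality relation $M_x^a M_x^{a'} = \delta_{a,a'} M_x^a$ of projective measurements is exactly what makes the $\pm 1$-signed sums collapse to a unit norm, which is the single substantive computation in the argument. (Note that no generalization of the second, harder direction of Tsirelson's theorem is needed; only the easy direction — producing vectors from a quantum state — is used, consistent with the remark preceding the proposition that only the lower bound $\gamma_2(P) \leq 1$ is tight and the upper realizability direction cannot be reversed in general.)
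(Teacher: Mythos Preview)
Your argument is correct and follows essentially the same route as the paper: you use the same vectors $m_{x,a}=(M_x^a\otimes I)|\Psi\rangle$, $n_{y,b}=(I\otimes N_y^b)|\Psi\rangle$, exploit the orthogonality $M_x^aM_x^{a'}=\delta_{a,a'}M_x^a$ to collapse the signed sums via Lemma~\ref{lemmaRepRSNormInf2}, and for the lower bound pair $P$ against a rank-one dual element (the paper packages the same idea as Lemma~\ref{lemmaLowerAllNormMnorm}). One small slip in the realification you flagged as ``purely notational'': to preserve the Hermitian inner product you need $\tilde n_{y,b}=(\mathrm{Re}\,n_{y,b},\,+\mathrm{Im}\,n_{y,b})$ with the same sign as in $\tilde m_{x,a}$; with the minus sign you recover $\mathrm{Re}(m^{T}n)$ rather than $\mathrm{Re}(m^{\dagger}n)$.
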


\begin{proof}
As $P$ is a quantum system there exists a pure quantum state $| \Psi \rangle  \in \mathcal{H}_A \otimes \mathcal{H}_B$ and projective measurements $\{ M_x^a\}_{1 \leq a \leq |\mathcal{A}|}$ and $\{ N_y^b\}_{1 \leq b \leq |\mathcal{B}|}$ with $\sum_{a = 1}^{|\mathcal{A}|} M_x^a = id_{\mathcal{H}_A}$ and $\sum_{b = 1}^{|\mathcal{B}|} N_y^b = id_{\mathcal{H}_B}$, respectively, such that
\[
 \langle f_{x,a} \otimes f_{y,b}, P \rangle \equiv P_{x,y}^{a,b} = \langle \Psi | M_x^a \otimes N_y^b | \Psi \rangle = \langle \Psi | (M_x^a \otimes id_{\mathcal{H}_B}) \cdot (id_{\mathcal{H}_A} \otimes N_y^b) | \Psi \rangle ~,
\]
with $P = \sum_{x,y,a,b} P_{x,y}^{a,b} \cdot e_{x,a} \otimes e_{y,b}$. Let $\{ |i \rangle \}_{i}$ be an orthonormal basis of $\mathcal{H}_A \otimes \mathcal{H}_B$. We define the complex vectors $\tilde{m}_{x,a}$ and $\tilde{n}_{y,b}$ by 
\begin{eqnarray}
\tilde{m}_{x,a}^i &:=& \langle \Psi | (M_x^a \otimes id_{\mathcal{H}_B}) | i \rangle ~, \nonumber \\
\tilde{n}_{y,b}^i &:=& \langle i | (id_{\mathcal{H}_A} \otimes N_y^b) | \Psi \rangle ~, \nonumber
\end{eqnarray}
where $\tilde{m}_{x,a}^i$ indicates the $i$'th entry of the vector $\tilde{m}_{x,a}$. Hence,
\begin{equation}
\label{eqnSumRepVecPsi}
\sum_i \tilde{m}_{x,a}^i \cdot \tilde{n}_{y,b}^i = \langle \Psi | M_x^a \otimes N_y^b | \Psi \rangle ~,
\end{equation}
by using that $\sum_i |i \rangle \langle i | = id_{\mathcal{H}_A} \otimes id_{\mathcal{H}_B}$. Note that
\[
\| \tilde{m}_{x,a} \|_2^2 = \sum_i | \tilde{m}_{x,a}^i |^2 = \langle \Psi | (M_x^a \otimes id_{\mathcal{H}_B}) | \Psi \rangle ~,
\]
and therefore, that 
\begin{equation}
\label{eqnNormVec1AB}
\sum_{a = 1}^{|\mathcal{A}|} \| \tilde{m}_{x,a} \|_2^2 = \sum_{a = 1}^{|\mathcal{A}|} \langle \Psi | (M_x^a \otimes id_{\mathcal{H}_B}) | \Psi \rangle = \langle \Psi | \Psi \rangle = 1 ~,
\end{equation}
for all $x \in \{1,2,...,|\mathcal{X}| \}$. Similarly $\sum_{b = 1}^{|\mathcal{B}|} \| \tilde{n}_{y,b} \|_2^2 = 1$ for all $y \in \{1,2,...,|\mathcal{Y}| \}$.
Furthermore, the vectors $\{ \tilde{m}_{x,a} \}_a$ are mutually orthogonal for a given $x$, i.e., $\langle \tilde{m}_{x,a_1}, \tilde{m}_{x,a_2} \rangle = \delta_{a_1,a_2} \cdot \| \tilde{m}_{x,a_1} \|_2^2$, with $a_1, a_2 \in \{1,2,...,|\mathcal{A}| \}$, and for all $x \in \{1,2,...,|\mathcal{X}| \}$. This is the case since
\begin{eqnarray}
\label{eqnOrthogMxa1a2}
\langle \tilde{m}_{x,a_1}, \tilde{m}_{x,a_2} \rangle &=& \langle \Psi | (M_x^{a_1} \cdot M_x^{a_2} \otimes id_{\mathcal{H}_B}) | \Psi \rangle \nonumber \\
&=& \langle \Psi | (M_x^{a_1} \otimes id_{\mathcal{H}_B}) | \Psi \rangle \cdot \delta_{a_1,a_2} \nonumber \\
&=& \| \tilde{m}_{x,a_1} \|_2^2 \cdot \delta_{a_1,a_2} ~,
\end{eqnarray}
as $M_x^{a_1}$ and $M_x^{a_2}$ are projectors with the property that $M_x^{a_1} \cdot M_x^{a_2} = M_x^{a_1} \cdot \delta_{a_1,a_2}$. By an analogous argument one can show that also $\langle \tilde{n}_{y,b_1}, \tilde{n}_{y,b_2} \rangle = \delta_{b_1,b_2} \cdot \| \tilde{n}_{y,b_1} \|_2^2$, with $b_1, b_2 \in \{1,2,...,|\mathcal{B}| \}$, and for all $y \in \{1,2,...,|\mathcal{Y}| \}$.
Note that the complex vectors $\tilde{m}_{x,a}$ and $\tilde{n}_{y,b}$ can be replaced by real vectors $m_{x,a}$ and $n_{y,b}$ of twice the length while still fulfilling (\ref{eqnNormVec1AB}) and (\ref{eqnOrthogMxa1a2}).

In order to compute $\gamma_2$ we will use its version in  (\ref{eqnAltDefW2}). Let us define the matrices $R := (n_{y,b})^T$ and $S := (m_{x,a})$.
We then have that $\hat{P} = R \cdot S$. Let us now show that $\| R \|_{2 \rightarrow \infty(1)} = 1$ and  $\| S \|_{1(\infty) \rightarrow 2} = 1$ and therefore prove that $\gamma_2(P) \leq \| R \|_{2 \rightarrow \infty(1)} \cdot \| S \|_{1(\infty) \rightarrow 2} = 1$. That $\gamma_2(P) \geq 1$ follows form Lemma \ref{lemmaLowerAllNormMnorm}. By applying Lemma \ref{lemmaRepRSNormInf2} we get
\begin{eqnarray}
\| (m_{x,a}) \|_{1(\infty) \rightarrow 2}^2 &=& \max_{s, x} \left\| \sum_{a=1}^{|\mathcal{A}|} s(a) \cdot m_{x,a} \right\|_2^2 \nonumber \\
&=& \max_{s,x} \sum_{a} \langle m_{x,a}, m_{x,a} \rangle \nonumber \\
&=& \max_{s,x} \sum_{a} \| m_{x,a} \|_2^2 = 1 ~, \nonumber
\end{eqnarray}
where we used the orthogonality relations of  (\ref{eqnOrthogMxa1a2}) in the second line and  (\ref{eqnNormVec1AB}) in the third line. By the same argument one can show that $\| (n_{y,b})^T \|_{2 \rightarrow \infty(1)} = 1$ holds as well.
\end{proof}

\subsection{Dual of Hilbertian Tensor Norm and XOR Games}
\label{sectionDualHilXor}

\setcounter{proposition}{3}

\begin{proposition}
\label{lemmaOmegaGammaEquiv}
Let $G = (\pi,V)$ be an XOR game with $G \in \ell_{1}^{|\mathcal{X} |}(\ell_{\infty}^{|\mathcal{A}|}) \otimes \ell_{1}^{|\mathcal{Y} |}(\ell_{\infty}^{|\mathcal{B}|})$ and $|\mathcal{A}| = |\mathcal{B}| = 2$. Then
\[
\omega^*(G) = \gamma_2^*(G) ~.
\]
\end{proposition}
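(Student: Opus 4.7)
The inequality $\omega^*(G) \leq \gamma_2^*(G)$ already follows from Proposition \ref{lemmaUpperOmegaGamma}, so the plan is to establish the reverse inequality by exploiting the XOR structure to reduce the $\gamma_2^*$ optimization to a correlation problem for which Tsirelson's theorem (Lemma \ref{theoremTsirelson}) gives an exact match with the quantum value.

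Starting from the vector characterization in Corollary \ref{corollaryEqivGammaVec}, I would write
\[
\gamma_2^*(G) = \sup \sum_{x,y,a,b} \pi(x,y) V(a,b,x,y) \langle m_{x,a}, n_{y,b} \rangle,
\]
where the supremum is over real vector families with $\|(m_{x,a})\|_{1(\infty)\to 2} \leq 1$ and $\|(n_{y,b})^T\|_{2 \to \infty(1)} \leq 1$. Since $|\mathcal{A}| = |\mathcal{B}| = 2$, I would perform the bijective change of variables $\tilde{u}_x = m_{x,0}+m_{x,1}$, $\tilde{u}_x' = m_{x,0}-m_{x,1}$, $\tilde{v}_y = n_{y,0}+n_{y,1}$, $\tilde{v}_y' = n_{y,0}-n_{y,1}$. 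By Lemma \ref{lemmaRepRSNormInf2}, the two norm constraints become exactly $\|\tilde{u}_x\|_2 \leq 1$ and $\|\tilde{u}_x'\|_2 \leq 1$ for every $x$, and similarly for $\tilde{v}, \tilde{v}'$ — crucially, these are four \emph{independent} constraints. Writing $\alpha_{xy} := V(0,0,x,y) = V(1,1,x,y)$ and $\beta_{xy} := V(0,1,x,y) = V(1,0,x,y)$, a direct expansion yields
\[
\sum_{a,b} V(a,b,x,y)\langle m_{x,a}, n_{y,b}\rangle = \frac{\alpha_{xy}+\beta_{xy}}{2} \langle \tilde{u}_x, \tilde{v}_y \rangle + \frac{\alpha_{xy}-\beta_{xy}}{2} \langle \tilde{u}_x', \tilde{v}_y' \rangle.
\]

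Because the pairs $(\tilde{u},\tilde{v})$ and $(\tilde{u}',\tilde{v}')$ appear in disjoint halves with independent constraints, the supremum splits into two separate suprema. In the first half the coefficients $\pi(x,y)(\alpha_{xy}+\beta_{xy})/2$ are non-negative, so the maximum is $\sum_{xy}\pi(x,y)(\alpha_{xy}+\beta_{xy})/2$, attained by taking a common unit vector $\tilde{u}_x = \tilde{v}_y$. In the second half, Tsirelson's theorem says that supremizing $\sum_{xy}\pi(x,y)\frac{\alpha_{xy}-\beta_{xy}}{2}\langle \tilde{u}_x', \tilde{v}_y'\rangle$ over unit vectors equals the supremum of $\sum_{xy}\pi(x,y)\frac{\alpha_{xy}-\beta_{xy}}{2}\langle\Psi| A_x \otimes B_y |\Psi\rangle$ over states and $\pm 1$-valued observables. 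On the quantum side, decomposing any projective two-outcome measurement as $M_x^0,M_x^1 = (I\pm A_x)/2$ (and analogously for $N_y^b$) gives $P(0,0|x,y)+P(1,1|x,y) = (1+\langle A_x B_y\rangle_\Psi)/2$, so $\omega^*(G)$ collapses to the sum of precisely the same two quantities. Hence $\gamma_2^*(G) = \omega^*(G)$.

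The only step that requires care is the independence of the $(\tilde{u},\tilde{v})$ and $(\tilde{u}',\tilde{v}')$ variables, which is where the restriction $|\mathcal{A}|=|\mathcal{B}|=2$ is essential: for larger alphabets the analogous change of variables involves more sign patterns that are no longer decoupled, and the argument would break. Apart from this point, the proof is a bookkeeping exercise combining the vector description of $\gamma_2$, the Walsh-type expansion of the XOR payoff, and Tsirelson's theorem.
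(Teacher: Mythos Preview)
Your proposal is correct and rests on the same ingredients as the paper's proof: the vector characterization of $\gamma_2$ (Corollary~\ref{corollaryEqivGammaVec}), the identification of the $1(\infty)\to 2$ and $2\to\infty(1)$ norm constraints with $\|m_{x,0}\pm m_{x,1}\|_2\le 1$ and $\|n_{y,0}\pm n_{y,1}\|_2\le 1$ (Lemma~\ref{lemmaRepRSNormInf2}), and Tsirelson's theorem applied to the difference vectors.

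The organization differs, however. The paper fixes an arbitrary $P$ with $\gamma_2(P)\le 1$, builds a quantum strategy from $m_x:=m_{x,0}-m_{x,1}$, $n_y:=n_{y,0}-n_{y,1}$ via Tsirelson, and then verifies the pointwise inequalities $\Pr[a=b]\ge P^{0,0}_{x,y}+P^{1,1}_{x,y}$ and $\Pr[a\neq b]\ge P^{0,1}_{x,y}+P^{1,0}_{x,y}$ using the ``plus'' constraints $\|m_{x,0}+m_{x,1}\|_2\le 1$, $\|n_{y,0}+n_{y,1}\|_2\le 1$ (these reduce to $\langle m_{x,0}+m_{x,1},\,n_{y,0}+n_{y,1}\rangle\le 1$ by Cauchy--Schwarz). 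You instead perform the full change of variables, observe that the four constraints decouple, and compute both $\gamma_2^*(G)$ and $\omega^*(G)$ explicitly as the same closed form $\sum_{x,y}\pi(x,y)\tfrac{\alpha_{xy}+\beta_{xy}}{2}$ plus the quantum bias of the $\pm1$-coefficient matrix $\tfrac{\alpha_{xy}-\beta_{xy}}{2}$. Your route is slightly more transparent because it exhibits the common value directly rather than via a domination inequality, and it makes the role of the decoupling (which is exactly where $|\mathcal A|=|\mathcal B|=2$ enters) explicit. The paper's route, on the other hand, avoids having to argue that the two optimizations are genuinely independent by working with a single feasible $P$ at a time.
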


\begin{proof}
That $\omega^*(G) \leq \gamma_2^*(G)$ follows from Proposition \ref{lemmaUpperOmegaGamma}.
So let us show that $\omega^*(G) \geq \gamma_2^*(G)$.
Let $\mathcal{A} = \mathcal{B} = \{0,1\}$ and $P_{x,y}^{a,b} = \langle f_{x,a} \otimes f_{y,b}, P \rangle$ with $P \in \ell_{\infty}^{|\mathcal{X} |}(\ell_{1}^{|\mathcal{A}|}) \otimes \ell_{\infty}^{|\mathcal{Y} |}(\ell_{1}^{|\mathcal{B}|})$. Then
\begin{eqnarray}
\label{eqnGammaEqivUpperX}
\gamma_2^*(G) &=& \sup \{ | \langle G, P \rangle | ~:~ \gamma_2(P) \leq 1 \} \nonumber \\
&=& \sup \left\{  \sum_{x,y} \pi(x,y) \sum_{a,b} V(a,b,x,y) \cdot P_{x,y}^{a,b}  ~:~ \gamma_2(P) \leq 1 \right\} \nonumber \\
&=& \sup \left\{  \sum_{x,y} \pi(x,y) \left( V(a=b,x,y) \cdot ( P_{x,y}^{0,0} + P_{x,y}^{1,1}) \right. \right. \nonumber \\ 
&+& \left. \left. V(a \neq b, x,y) \cdot (P_{x,y}^{0,1} + P_{x,y}^{1,0}) \right)  ~:~ \gamma_2(P) \leq 1 \right\} ~, 
\end{eqnarray}
where we used the fact that $G$ is an XOR game, i.e., we have that $V(a=b,x,y) := V(0,0,x,y) = V(1,1,x,y) \in \{0,1\}$ and $V(a \neq b,x,y) := V(0,1,x,y) = V(1,0,x,y) \in \{0,1\}$. We do not have to take the absolute value as $\gamma_2(P) = \gamma_2((-1)\cdot P)$. Using the definition of $\gamma_2$ given in  (\ref{eqnAltDefW2}) and applying Lemma \ref{lemmaRepRSNormInf2}, from $\gamma_2(P) \leq 1$ we get the following constraints:
\begin{equation}
\label{eqnCondmn}
\| m_{x,0} \pm m_{x,1} \|_2 \leq 1 ~,~ \| n_{y,0} \pm n_{y,1} \|_2 \leq 1 ~,~ P_{x,y}^{a,b} = \langle m_{x,a}, n_{y,b} \rangle ~,
\end{equation}
for all $x \in \mathcal{X}$ and $y \in \mathcal{Y}$, respectively.
We cannot hope to simulate the ``distribution'' $P_{x,y}^{a,b}$ in  (\ref{eqnGammaEqivUpperX}) by some measurements on a quantum state, as these values can be negative and do not have to correspond to valid probabilities. But what we can do is to show that there exists a quantum state $| \Psi \rangle$ and observables $A_1,...,A_{|\mathcal{X}|}$ and $B_1,...,B_{|\mathcal{Y}|}$ with binary outcomes such that
\begin{equation}
\label{eqnConEquiv}
 \Pr[a = b | A_x,B_y, |\Psi\rangle ] \geq P_{x,y}^{0,0} + P_{x,y}^{1,1} ~,~ \Pr[a \neq b | A_x,B_y, |\Psi\rangle ] \geq P_{x,y}^{0,1} + P_{x,y}^{1,0} ~,
\end{equation}
where $a$ is the outcome of Alice's measurement $A_x$ and $b$ the outcome of Bob's measurement $B_y$. So, if we assume that  (\ref{eqnConEquiv}) holds, we get
\begin{eqnarray}
\gamma_2^*(G)&=& \sup \left\{  \sum_{x,y} \pi(x,y) ( V(a=b,x,y) \cdot ( P_{x,y}^{0,0} + P_{x,y}^{1,1}) \right. \nonumber \\ 
&+& \left. V(a \neq b, x,y) \cdot (P_{x,y}^{0,1} + P_{x,y}^{1,0}) )  ~:~ \gamma_2(P) \leq 1 \right\} \nonumber \\
&\leq&  \sum_{x,y} \pi(x,y) \left( V(a=b,x,y) \Pr[a = b | A_x,B_y, |\Psi\rangle ] \right. \nonumber \\
&+& \left. V(a \neq b,x,y) \Pr[a \neq b | A_x,B_y, |\Psi\rangle ] \right)  \nonumber \\
&\leq& \omega^*(G) ~. \nonumber
\end{eqnarray}
It remains to be shown that  (\ref{eqnConEquiv}) can be achieved. First, note that  (\ref{eqnConEquiv}) can be rewritten as
\begin{eqnarray}
\label{eqnRepDifPer}
\Pr[a = b | A_x,B_y, |\Psi\rangle ] &\geq& \langle m_{x,0}, n_{y,0} \rangle + \langle m_{x,1}, n_{y,1} \rangle ~,  \\
\Pr[a \neq b | A_x,B_y, |\Psi\rangle ] &\geq& \langle m_{x,0}, n_{y,1} \rangle + \langle m_{x,1}, n_{y,0} \rangle ~,
\end{eqnarray}
by using  (\ref{eqnCondmn}).
And second, we set $m_x := m_{x,0} - m_{x,1}$ and $n_y := n_{y,0} - n_{y,1}$, apply the second part of Lemma \ref{theoremTsirelson} (which we are allowed to use because of the constraints given in  (\ref{eqnCondmn})) and get observables $A_1,...,A_{|\mathcal{X}|}$ and $B_1,...,B_{|\mathcal{Y}|}$ with eigenvalue $\pm 1$ and a quantum state $| \Psi \rangle$ such that 
\[
\langle m_x, n_y \rangle = \langle \Psi | A_x \otimes B_y | \Psi \rangle ~.
\]
As $\langle \Psi | A_x \otimes B_y | \Psi \rangle$ is the expectation value when measuring the observables $A_x$ and $B_y$ with eigenvalues $\pm 1$, we have that
\begin{eqnarray}
\label{eanREpPRF}
\Pr[a = b | A_x,B_y, |\Psi\rangle ] &=& \frac{1 + \langle \Psi | A_x \otimes B_y | \Psi \rangle}{2} = \frac{1 + \langle m_x, n_y \rangle}{2}  ~, \\
\Pr[a \neq b | A_x,B_y, |\Psi\rangle ] &=& \frac{1 - \langle \Psi | A_x \otimes B_y | \Psi \rangle}{2} = \frac{1 - \langle m_x, n_y \rangle}{2} ~.
\end{eqnarray}
By straightforward calculations,  (\ref{eanREpPRF}) implies  (\ref{eqnRepDifPer}), where the conditions $\| m_{x,0} + m_{x,1} \|_2 \leq 1$ and $\| n_{y,0} + n_{y,1} \|_2 \leq 1$ of  (\ref{eqnCondmn}) are used. And similarly for $\Pr[a \neq b | A_x,B_y, |\Psi\rangle ]$.
\end{proof}

\appendix

\renewcommand{\theequation}{A-\arabic{equation}}
\setcounter{equation}{0}

\section*{APPENDIX}

\section{Basic Properties of Banach Spaces and Tensor Products}
\label{sectionNotation}

We call $\| \cdot \|_X : V \rightarrow \mathbb{R}^+_0$ a \emph{norm} over the vector space $V$ if it fulfils the following three conditions:
\begin{enumerate}
 \item $\| v \|_X = 0$ if and only if $v = 0$.
 \item $\| c \cdot v \|_X = |c| \cdot \| v \|_X$, for all $c \in \mathbb{R}$ and $v \in V$.
 \item $\| v + w \|_X \leq \| v \|_X + \| w \|_X$, for all $v,w \in V$.
\end{enumerate}
Given a vector space $V$ and a norm $\| \cdot \|_X$ on it, the tuple $X = (V, \| \cdot \|_X)$ is called a \emph{normed space}. 
A normed space $X = (V, \| \cdot \|_X)$ with $V$ finite dimensional is also a \emph{Banach space}. From now on let $V$ always be finite dimensional and therefore we consider only finite dimensional Banach spaces.

\begin{lemma}
\label{lemmaTakingDuals}
Let $X := (\mathbb{R}^n, \| \cdot \|_X)$ and $Y := (\mathbb{R}^n, \| \cdot \|_Y)$ be a Banach spaces. Then, for any positive constant $c \in \mathbb{R}$,
\[
\| P \|_X  \leq c \cdot \| P \|_Y  ~,\forall P \in \mathbb{R}^n ~ \Leftrightarrow \| G \|_{Y^*} \leq c \cdot \| G \|_{X^*} ~, \forall G \in \mathbb{R}^n ~.  
\]
\end{lemma}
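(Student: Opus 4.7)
The plan is to handle the two implications by a direct rescaling argument and then invoke reflexivity of finite-dimensional Banach spaces to get the converse essentially ``for free''.

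For the forward direction, I would start from the definition of the dual norm given in (\ref{eqnDefDualNorm}):
\[
\| G \|_{Y^*} = \sup_{P \in \mathbb{R}^n} \{ |\langle G, P \rangle| : \| P \|_Y \leq 1 \}.
\]
Assuming $\| P \|_X \leq c \cdot \| P \|_Y$ for all $P$, any $P$ with $\| P \|_Y \leq 1$ satisfies $\| P \|_X \leq c$, so $Q := P/c$ has $\| Q \|_X \leq 1$ and $\langle G, P \rangle = c \langle G, Q \rangle$. Therefore $|\langle G, P \rangle| \leq c \cdot \| G \|_{X^*}$, and taking the supremum over admissible $P$ gives $\| G \|_{Y^*} \leq c \cdot \| G \|_{X^*}$.

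For the reverse direction, I would apply the forward direction just proved, but in the dual spaces: treat $X^*$ and $Y^*$ as the base Banach spaces, so that their duals are $X^{**}$ and $Y^{**}$. Since $V = \mathbb{R}^n$ is finite dimensional, the canonical isomorphism identifies $X^{**} \cong X$ and $Y^{**} \cong Y$ isometrically (this is the only place where finite-dimensionality is used). Hence, starting from $\| G \|_{Y^*} \leq c \cdot \| G \|_{X^*}$ for all $G$ and applying the forward implication with the roles of $X$ and $Y$ played by $Y^*$ and $X^*$, respectively, yields $\| P \|_{X^{**}} \leq c \cdot \| P \|_{Y^{**}}$, which under the identification becomes $\| P \|_X \leq c \cdot \| P \|_Y$ for all $P$.

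I do not expect any real obstacles here: both directions reduce to rescaling under the supremum in the definition of the dual norm, and the only non-elementary ingredient is the reflexivity $X^{**} = X$, which is standard in finite dimensions. The main care is simply to be explicit about the rescaling $Q = P/c$ so that the bound lands with the correct constant, and to justify that the identification $X^{**} \cong X$ is isometric (which follows since on $\mathbb{R}^n$ the bidual norm equals the original norm by the Hahn--Banach theorem, trivially available in finite dimension).
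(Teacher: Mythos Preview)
Your proof is correct and follows essentially the same route as the paper: the paper carries out the forward direction via the same rescaling-of-the-supremum argument and then says ``similarly for the other direction,'' which amounts precisely to your reflexivity step $X^{**}\cong X$. You have simply made explicit what the paper leaves as an exercise.
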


\begin{proof}
By the definition of the dual norm given in (\ref{eqnDefDualNorm}) we obtain
\begin{eqnarray}
\| G \|_{Y^*} &=& \sup \{ | \langle G, P \rangle | ~:~ \| P \|_Y \leq 1 \} \nonumber \\
&=& c \cdot \sup \{ | \langle G, P \rangle | ~:~ c \cdot \| P \|_Y \leq 1 \} \nonumber \\
&\leq& c \cdot \sup \{ | \langle G, P \rangle | ~:~ \| P \|_X \leq 1 \} \nonumber \\
&=& c \cdot \| G \|_{X^*} ~, \nonumber
\end{eqnarray}
and similarly for the other direction.
\end{proof}

An example of a Banach space is $\ell_2^n := (\mathbb{R}^n, \| \cdot \|_2)$ with the norm defined as
\begin{equation}
\label{eqnDef2norm}
\| v \|_2 := \left( \sum_{i=1}^n | \langle f_i, v \rangle |^2 \right)^{1/2} ~.
\end{equation}
Note that we write $\| \cdot \|_2$ instead of $\| \cdot \|_{\ell_2^n}$ in order to simplify the notation. The special properties the space $\ell_2^n$ has make it to a \emph{Hilbert space}. A Hilbert space $\mathcal{H}$ is a Banach space where the norm fulfils the parallelogram identity, i.e.,
\[
\| v + w \|_{\mathcal{H}}^2 + \| v - w \|_{\mathcal{H}}^2 = 2 \| v \|_{\mathcal{H}}^2 + 2 \| w \|_{\mathcal{H}}^2 ~,
\]
for all $v,w \in \mathcal{H}$. Furthermore, in that case the norm $\| \cdot \|_{\mathcal{H}}$ uniquely induces an \emph{inner product} on the Hilbert space. The Banach space $\ell_2^n$ has the nice property that it is \emph{self dual}, i.e., $\ell_2^n \cong (\ell_2^n)^*$ which means that 
\[
\| v \|_{2^*} = \sup_{w \in \mathbb{R}^n} \{ | \langle v, w \rangle | ~:~ \| w \|_2 \leq 1 \} = \| v \|_2  ~,
\]
for all $v \in (\mathbb{R}^n)^* \cong \mathbb{R}^n$. Note that in particular $\| v \|_2^2 = | \langle v, v \rangle |$.

\begin{lemma}
\label{lemmaDualityA}
Let $X := (\mathbb{R}^n, \| \cdot \|_X)$, with $1 \leq n < \infty$, be a Banach space with $\| \cdot \|_X$ an arbitrary norm over $\mathbb{R}^n$ and $A: \ell_2^m \rightarrow X$, for $1 \leq m \leq \infty$, a linear operator. Then
\[
\| A \|_{2 \rightarrow X} = \| A^T \|_{X^* \rightarrow 2} ~.
\]
 
\end{lemma}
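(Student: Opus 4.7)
The plan is to prove this by a direct double-dualization argument, exploiting the fact that $\ell_2^m$ is self-dual and that finite-dimensional Banach spaces are reflexive (so $(X^*)^* \cong X$, with $\|w\|_X = \sup\{|\langle g,w\rangle| : \|g\|_{X^*}\leq 1\}$).

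First I would unfold the definition of the operator norm:
\[
\|A\|_{2\to X} \;=\; \sup_{v\in\ell_2^m}\bigl\{\|Av\|_X : \|v\|_2 \leq 1\bigr\}.
\]
Then I would rewrite the inner norm $\|Av\|_X$ via duality as $\sup\{|\langle g, Av\rangle| : \|g\|_{X^*}\leq 1\}$; this is where reflexivity of finite-dimensional Banach spaces is needed, and it is an immediate consequence of (\ref{eqnDefDualNorm}) applied twice. So
\[
\|A\|_{2\to X} \;=\; \sup_{\|v\|_2\leq 1}\;\sup_{\|g\|_{X^*}\leq 1} |\langle g, Av\rangle|.
\]

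Next I would swap the two suprema (which is trivially allowed) and use the defining property of the transpose operator $A^T : X^* \to \ell_2^m$, namely $\langle g, Av\rangle = \langle A^T g, v\rangle$ for all $v \in \ell_2^m$ and $g \in X^*$. This yields
\[
\|A\|_{2\to X} \;=\; \sup_{\|g\|_{X^*}\leq 1}\;\sup_{\|v\|_2\leq 1} |\langle A^T g, v\rangle|.
\]
Finally, using that $\ell_2^m$ is self-dual (so $\|u\|_2 = \|u\|_{2^*} = \sup\{|\langle u,v\rangle| : \|v\|_2 \leq 1\}$), the inner supremum evaluates to $\|A^T g\|_2$, and hence
\[
\|A\|_{2\to X} \;=\; \sup_{\|g\|_{X^*}\leq 1} \|A^T g\|_2 \;=\; \|A^T\|_{X^*\to 2},
\]
as required.

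There is no real obstacle here: every step is either a definition, finite-dimensional reflexivity, the adjoint identity, or self-duality of $\ell_2^m$. The only thing worth being careful about is justifying that the dual-norm characterization $\|w\|_X = \sup_{\|g\|_{X^*}\leq 1}|\langle g,w\rangle|$ holds; in finite dimensions this is immediate from the Hahn--Banach theorem (or simply from applying (\ref{eqnDefDualNorm}) twice and checking equality via a supporting functional), so the lemma reduces to bookkeeping of suprema.
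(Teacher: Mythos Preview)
Your proof is correct and follows essentially the same approach as the paper's: both dualize $\|Av\|_X$ via the $X^*$-supremum, swap the two suprema, and finish with self-duality of $\ell_2$. The only cosmetic difference is that the paper writes $A$ explicitly as a row matrix with rows $a_i$ and carries out the adjoint step in coordinates (so $A^T\mu = \sum_i \mu_i a_i$), whereas you invoke the abstract adjoint identity $\langle g, Av\rangle = \langle A^T g, v\rangle$ directly.
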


\begin{proof}
Representing $A$ as a row matrix, with rows $a_i \in \ell_2^m$, for $1 \leq i \leq n$, yields
\begin{eqnarray}
\| A \|_{2 \rightarrow X} &=& \sup_{ \| \lambda \|_2 \leq 1 } \| A(\lambda) \|_X \nonumber \\
&=& \sup_{ \| \lambda \|_2 \leq 1 } \left\| \left( \begin{array}{c} \langle a_1, \lambda \rangle \\ \vdots \\ \langle a_n, \lambda \rangle \end{array} \right) \right\|_X \nonumber \\
&=& \sup_{ \| \lambda \|_2 \leq 1 } \sup_{ \| \mu \|_{X^*} \leq 1} \left| \sum_{i=1}^n \mu_i \cdot \langle a_i, \lambda \rangle \right| ~.
\end{eqnarray}
On the other hand, by using that the $2$-norm is self dual, we obtain 
\begin{eqnarray}
\| A^T \|_{X^* \rightarrow 2} &=& \sup_{ \| \mu \|_{X^*} \leq 1} \left\| \sum_{i=1}^n \mu_i \cdot a_i \right\|_2 \nonumber \\
&=& \sup_{ \| \mu \|_{X^*} \leq 1} \sup_{ \| \lambda \|_2 \leq 1 } \left| \sum_{i=1}^n \mu_i \cdot \langle a_i, \lambda \rangle \right| ~,
\end{eqnarray}
and therefore $\| A \|_{2 \rightarrow X} = \| A^T \|_{X^* \rightarrow 2}$.
\end{proof}

The algebraic tensor product of two Banach spaces $X$ and $Y$ is denoted by $X \otimes Y$. Note that this is not yet a Banach space as we have not yet defined a norm on the tensor space $X \otimes Y$. An element $v \in X \otimes Y$ can always be written as
\[
v = \sum_{i = 1}^{m} v_A^i \otimes v_B^i ~,
\]
with $v_A^i \in X$ and $v_B^i \in Y$, respectively. It is important to note that this decomposition is not unique as, typically, there are infinitely many such representations. The tensor product has the following properties:
\begin{enumerate}
 \item $(v_1 + v_2) \otimes w = v_1 \otimes w + v_2 \otimes w$,
 \item $v \otimes (w_1 + w_2) = v \otimes w_1 + v \otimes w_2$,
 \item $c \cdot (v \otimes w) = (c \cdot v) \otimes w = v \otimes (c \cdot w)$,
 \item $0 \otimes w = v \otimes 0 = 0$,
\end{enumerate}
for all $v,v_1,v_2 \in X$, $w,w_1,w_2 \in Y$, and $c \in \mathbb{R}$. Furthermore, let $X$ and $Y$ be Banach spaces and $X^*$ and $Y^*$ their corresponding dual Banach spaces. Then, it holds that
\[
\langle w_A \otimes w_B, v_A \otimes v_B \rangle = \langle w_A, v_A \rangle \cdot \langle w_B, v_B \rangle ~, 
\]
with $w_A \in X^*$,$w_B \in Y^*$,$v_A \in X$ and $v_B \in Y$ and $w_A \otimes w_B \in X^* \otimes Y^*$ and $v_A \otimes v_B \in X \otimes Y$. In particular, given $v = \sum_i v_A^i \otimes v_B^i \in X \otimes Y$ and $w = \sum_j w_A^j \otimes w_B^j \in X^* \otimes Y^*$, we get
\[
\langle w, v \rangle = \sum_{i,j} \langle w_A^j \otimes w_B^j, v_A^i \otimes v_B^i \rangle = \sum_{i,j}  \langle w_A^j, v_A^i \rangle \cdot \langle w_B^j, v_B^i \rangle ~. 
\]
Furthermore, if the vector spaces are isomorphic to $\mathbb{R}^n$ and $\mathbb{R}^m$ for some $1 \leq n,m < \infty$, and $v,w \in \mathbb{R}^n \otimes \mathbb{R}^m$, then
\[
\langle w, v \rangle = \sum_{i=1}^n \sum_{j=1}^m \langle w, e_i \otimes e_j \rangle \cdot  \langle f_i \otimes f_j, v \rangle ~.
\]
We will also use the notation $e_{i,j} := e_i \otimes e_j$ and $f_{i,j} := f_i \otimes f_j$. 

The $2$-norm behaves 'nicely' on product tensors, i.e., it is easy to see that 
\begin{equation}
\label{eqnNiceTwonorm}
\| v \otimes w \|_2 = \| v \|_2 \cdot \| w \|_2 ~,
\end{equation}
for all $v \in \ell_2^n$ and $w \in \ell_2^m$.

\renewcommand{\theequation}{B-\arabic{equation}}
\setcounter{equation}{0}  

\section{Basic Properties of Tensor Norms}
\label{sectionIntroTensorNorm}

We will introduce the notion of tensor norms in this section. Let $X$ and $Y$ be arbitrary Banach spaces and $X \otimes Y$ the algebraic tensor product of these two spaces. We will call $X$ and $Y$ \emph{local} spaces. A \emph{tensor norm} is a norm on $X \otimes Y$, which is based on the local Banach spaces $X$ and $Y$ with some additional special properties. See also \cite{tensorNormsOperatorIdeals, introTensor} which give a good introduction to the subject of tensor norms. Recall that by $\alpha^*$ we denote the dual tensor norm of $\alpha$ in the sense of (\ref{eqnDefDualNorm}). The definition of tensor norms reads then as follows \cite{crossSpaces}:
\begin{definition}[Tensor Norm]
Let $X$ and $Y$ be finite dimensional Banach spaces. A norm $\alpha$ on $X \otimes Y$ is called a tensor norm if the following three conditions are satisfied:
\begin{enumerate}
 \item $\alpha(P_A \otimes P_B) = \| P_A \|_X \cdot \| P_B \|_Y$ for every $P_A \in X$ and $P_B \in Y$.
 \item $\alpha^*(G_A \otimes G_B) = \| G_A \|_{X^*} \cdot \| G_B \|_{Y^*}$ for every $G_A \in X^*$ and $G_B \in Y^*$.
 \item $\| T_A \otimes T_B \|_{X \otimes_{\alpha} Y \rightarrow X \otimes_{\alpha} Y} = \| T_A \|_{X \rightarrow X} \cdot \| T_B \|_{Y \rightarrow Y}$ for all linear maps $T_A: X \rightarrow X$ and $T_B: Y \rightarrow Y$.
\end{enumerate}
\end{definition}
If a norm on $X \otimes Y$ fulfils the first two conditions it is called a \emph{reasonable cross norm} \cite{introTensor}.
Note that we need only the first two properties of tensor norms in this paper and that in \cite{crossSpaces, introTensor} $\alpha$ with these three properties is called a \emph{uniform cross norm} whereas in \cite{tensorNormsOperatorIdeals} it is called a \emph{tensor norm}.

\subsection{Four Different Tensor Norms}
\label{sectionFourTensorNorms}

In the following we will define four different tensor norms (actually just two, but taking the duals gives us four). We will only write down the definitions for the case where the local Banach spaces $X$ and $Y$ are $\ell_{\infty}^{|\mathcal{X}|}(\ell_{1}^{|\mathcal{A}|})$ (or $\ell_{1}^{|\mathcal{X}|}(\ell_{\infty}^{|\mathcal{A}|})$) and $\ell_{\infty}^{|\mathcal{Y}|}(\ell_{1}^{|\mathcal{B}|})$ (or $\ell_{1}^{|\mathcal{Y}|}(\ell_{\infty}^{|\mathcal{B}|})$), respectively. Note, that because $|\mathcal{X}|$, $|\mathcal{Y}|$, $|\mathcal{A}|$ and $|\mathcal{B}|$ are finite, the resulting Banach spaces are finite dimensional as well.

\subsubsection{Projective and Injective Tensor Norm}
\label{sectionProjInjNorm}

The first two tensor norms, called the projective and injective tensor norm, are the ``extremal`` ones., i.e., all tensor norms are larger than the injective and smaller than the projective tensor norm.
The projective tensor norm of $P \in \ell_{\infty}^{|\mathcal{X}|}(\ell_{1}^{|\mathcal{A}|}) \otimes \ell_{\infty}^{|\mathcal{Y}|}(\ell_{1}^{|\mathcal{B}|})$ is defined by
\[
\pi(P) := \inf \left\{ \sum_{i=1}^n \| P_A^i \|_{\infty(1)} \cdot \| P_B^i \|_{\infty(1)} ~:~ P = \sum_{i=1}^n P_A^i \otimes P_B^i  \right\} ~,
\]
where the infimum is over all decompositions (or representations) of $P$.
The injective tensor norm of $G \in \ell_{1}^{|\mathcal{X}|}(\ell_{\infty}^{|\mathcal{A}|}) \otimes \ell_{1}^{|\mathcal{Y}|}(\ell_{\infty}^{|\mathcal{B}|})$ is defined by
\[
\varepsilon(G) := \sup \left\{ | \langle G , P_A \otimes P_B \rangle | ~:~ \| P_A \|_{\infty(1)} \leq 1, \| P_B \|_{\infty(1)} \leq 1 \right\} ~,
\]
where the supremum is over $P_A \in \ell_{\infty}^{|\mathcal{X}|}(\ell_{1}^{|\mathcal{A}|})$ and $P_B \in \ell_{\infty}^{|\mathcal{Y}|}(\ell_{1}^{|\mathcal{B}|})$.

One can show \cite{introTensor} that these two norms are the dual of each other, i.e., 
\begin{eqnarray}
\pi(P) &=& \sup \{ | \langle G, P \rangle | ~:~ \varepsilon(G) \leq 1 \} ~, \nonumber \\
\varepsilon(G) &=& \sup \{ | \langle G, P \rangle | ~:~ \pi(P) \leq 1 \} ~, \nonumber
\end{eqnarray}
for $P \in \ell_{\infty}^{|\mathcal{X}|}(\ell_{1}^{|\mathcal{A}|}) \otimes \ell_{\infty}^{|\mathcal{Y}|}(\ell_{1}^{|\mathcal{B}|})$ and $G \in \ell_{1}^{|\mathcal{X}|}(\ell_{\infty}^{|\mathcal{A}|}) \otimes \ell_{1}^{|\mathcal{Y}|}(\ell_{\infty}^{|\mathcal{B}|})$.

As already state above, these two tensor norms are extremal. Formally, we have \cite{introTensor}:
\begin{lemma}
\label{lemmaCrossnorm}
Let $X$ and $Y$ be Banach spaces. Every tensor norm $\alpha$ on $X \otimes Y$ satisfies
\[
\varepsilon(P) \leq \alpha(P) \leq \pi(P)
\]
for every $P \in X \otimes Y$, where $\varepsilon$ is the injective tensor norm and $\pi$ is the projective tensor norm, both defined over $X \otimes Y$.
\end{lemma}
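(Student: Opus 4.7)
The plan is to prove the two inequalities separately, using the three tensor norm axioms (really only the two cross norm axioms suffice) and the duality between a norm and its dual norm.

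For the upper bound $\alpha(P) \leq \pi(P)$, I will exploit that $\alpha$ is a norm (so satisfies the triangle inequality and absolute homogeneity) together with the first cross norm axiom $\alpha(P_A \otimes P_B) = \|P_A\|_X \cdot \|P_B\|_Y$. Fix any decomposition $P = \sum_{i=1}^n P_A^i \otimes P_B^i$. Then by the triangle inequality and axiom~1,
\[
\alpha(P) \;\leq\; \sum_{i=1}^n \alpha(P_A^i \otimes P_B^i) \;=\; \sum_{i=1}^n \|P_A^i\|_X \cdot \|P_B^i\|_Y.
\]
Taking the infimum over all such decompositions of $P$ yields $\alpha(P) \leq \pi(P)$ by the very definition of the projective tensor norm.

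For the lower bound $\varepsilon(P) \leq \alpha(P)$, the idea is dual. I first invoke the generic duality inequality $|\langle G, P\rangle| \leq \alpha(P) \cdot \alpha^*(G)$, which follows directly from the definition of the dual norm in (\ref{eqnDefDualNorm}). Specializing to product functionals $G = G_A \otimes G_B$ with $G_A \in X^*$, $G_B \in Y^*$, the second cross norm axiom gives $\alpha^*(G_A \otimes G_B) = \|G_A\|_{X^*} \cdot \|G_B\|_{Y^*}$, so
\[
|\langle G_A \otimes G_B, P\rangle| \;\leq\; \|G_A\|_{X^*} \cdot \|G_B\|_{Y^*} \cdot \alpha(P).
\]
Restricting to $\|G_A\|_{X^*} \leq 1$ and $\|G_B\|_{Y^*} \leq 1$ and taking the supremum over such $G_A, G_B$ produces exactly $\varepsilon(P)$ on the left (using the definition of $\varepsilon$ in Section~\ref{sectionProjInjNorm} applied to the role swap between $X \otimes Y$ and its dual), which gives $\varepsilon(P) \leq \alpha(P)$.

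Neither direction presents a serious obstacle; the proof is essentially bookkeeping once one writes down the axioms. The only point that requires a little care is ensuring consistency of the pairing between $X \otimes Y$ and $X^* \otimes Y^*$ when invoking the second cross norm axiom for $\alpha^*$ in the lower bound argument, and noting that the supremum defining $\varepsilon(P)$ really can be restricted to product functionals $G_A \otimes G_B$ with unit dual norms (this is how $\varepsilon$ was defined on the space $X \otimes Y$ via its local dual norms). Once that is in place, both inequalities follow in one or two lines each.
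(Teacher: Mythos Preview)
Your proof is correct and is the standard textbook argument. Note, however, that the paper does not actually supply its own proof of Lemma~\ref{lemmaCrossnorm}: it is stated with a citation to \cite{introTensor} and used as a known fact. Your argument---triangle inequality plus axiom~1 for the upper bound, and the dual pairing inequality $|\langle G,P\rangle|\le \alpha(P)\,\alpha^*(G)$ combined with axiom~2 on product functionals for the lower bound---is exactly the proof one finds in that reference, so there is nothing to compare beyond saying your write-up matches the classical approach.
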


The next lemma states that if $P \in \ell_{\infty}^{|\mathcal{X}|}(\ell_{1}^{|\mathcal{A}|}) \otimes \ell_{\infty}^{|\mathcal{Y}|}(\ell_{1}^{|\mathcal{B}|})$ corresponds to an (almost) valid conditional probability distribution (we allow negative entries), then all tensor norms will assign to $P$ a value which is at least one.
\begin{lemma}
\label{lemmaLowerAllNormMnorm}
Let $P \in \ell_{\infty}^{|\mathcal{X}|}(\ell_{1}^{|\mathcal{A}|}) \otimes \ell_{\infty}^{|\mathcal{Y}|}(\ell_{1}^{|\mathcal{B}|})$ with $\sum_{a,b} \langle f_{x,a} \otimes f_{y,b}, P \rangle = 1$ for all $1 \leq x \leq | \mathcal{X} |$ and $1 \leq y \leq | \mathcal{Y} |$. Then
\[
\alpha(P) \geq 1 ~,
\]
for all tensor norms $\alpha$ over $\ell_{\infty}^{|\mathcal{X}|}(\ell_{1}^{|\mathcal{A}|}) \otimes \ell_{\infty}^{|\mathcal{Y}|}(\ell_{1}^{|\mathcal{B}|})$.
\end{lemma}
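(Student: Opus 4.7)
The plan is to reduce to the injective tensor norm via Lemma \ref{lemmaCrossnorm} and then exhibit a single explicit test pair of functionals that already achieves the value $1$. More precisely, since every tensor norm $\alpha$ on $X \otimes Y$ satisfies $\alpha(P) \geq \varepsilon(P)$, where $\varepsilon$ denotes the injective tensor norm defined with respect to the appropriate local duals, it suffices to prove that $\varepsilon(P) \geq 1$. For $P \in \ell_{\infty}^{|\mathcal{X}|}(\ell_{1}^{|\mathcal{A}|}) \otimes \ell_{\infty}^{|\mathcal{Y}|}(\ell_{1}^{|\mathcal{B}|})$, the general definition of $\varepsilon$ gives
\[
\varepsilon(P) = \sup \{ |\langle G_A \otimes G_B, P \rangle| : \| G_A \|_{1(\infty)} \leq 1,\, \| G_B \|_{1(\infty)} \leq 1 \}.
\]

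Next I would exhibit specific $G_A, G_B$. The natural choice is the uniform functional: set $\langle G_A, e_x \otimes e_a \rangle := 1/|\mathcal{X}|$ for all $x, a$ and $\langle G_B, e_y \otimes e_b \rangle := 1/|\mathcal{Y}|$ for all $y, b$. A one-line computation using the definition of the $1(\infty)$-norm gives
\[
\| G_A \|_{1(\infty)} = \sum_{x=1}^{|\mathcal{X}|} \max_{a} \tfrac{1}{|\mathcal{X}|} = 1, \qquad \| G_B \|_{1(\infty)} = 1.
\]

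With these choices, using the bilinear pairing formula (\ref{eqnAltRepInnerProd}) together with the hypothesis $\sum_{a,b} \langle f_{x,a} \otimes f_{y,b}, P \rangle = 1$ for all $x,y$, I would compute
\[
\langle G_A \otimes G_B, P \rangle = \frac{1}{|\mathcal{X}|\,|\mathcal{Y}|} \sum_{x,y} \sum_{a,b} \langle f_{x,a} \otimes f_{y,b}, P \rangle = \frac{1}{|\mathcal{X}|\,|\mathcal{Y}|} \sum_{x,y} 1 = 1.
\]
This immediately gives $\varepsilon(P) \geq 1$ and hence $\alpha(P) \geq 1$, completing the proof. There is no serious obstacle here: the only subtlety is to make sure the general (rather than the $\ell_1(\ell_\infty)$-specific) formulation of $\varepsilon$ is invoked for the correct local duals $(\ell_{\infty}^{|\mathcal{X}|}(\ell_{1}^{|\mathcal{A}|}))^* = \ell_{1}^{|\mathcal{X}|}(\ell_{\infty}^{|\mathcal{A}|})$, so that Lemma \ref{lemmaCrossnorm} applies in the intended direction.
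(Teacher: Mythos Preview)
Your proof is correct and essentially identical to the paper's: both lower-bound $\alpha(P)$ by $\varepsilon(P)$ via Lemma~\ref{lemmaCrossnorm}, then test $\varepsilon$ against the uniform functionals $G_A = \tfrac{1}{|\mathcal{X}|}\mathbb{1}$, $G_B = \tfrac{1}{|\mathcal{Y}|}\mathbb{1}$ (which the paper calls $\mathbb{I}_A$, $\mathbb{I}_B$), verify their $1(\infty)$-norms equal $1$, and use the normalization hypothesis to get $\langle G_A \otimes G_B, P\rangle = 1$.
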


\begin{proof}
By the definition of the injective tensor norm we have
\begin{eqnarray}
\varepsilon(P) &=& \sup \left\{ \left| \langle G_A \otimes G_B, P \rangle \right| ~:~ \| G_A \|_{1(\infty)} \leq 1, \| G_B \|_{1(\infty)} \leq 1 \right\} \nonumber \\
&\geq& | \langle \mathbb{I}_A \otimes \mathbb{I}_B, P \rangle | ~, \nonumber
\end{eqnarray}
where $\mathbb{I}_A$ is the all-$1$ vector multiplied by $1 / |\mathcal{X}|$ and $\mathbb{I}_B$ is the all-$1$ vector multiplied by $1 / |\mathcal{Y}|$, where $\| \mathbb{I}_A \|_{1(\infty)} = 1$ and $\| \mathbb{I}_B \|_{1(\infty)} = 1$, respectively.
Taking the tensor product of $\mathbb{I}_A$ and $\mathbb{I}_B$ yields the all-$1$ vector multiplied by $1 / (|\mathcal{X}| |\mathcal{Y}|)$. Then, by using $\sum_{a,b} \langle f_{x,a} \otimes f_{y,b}, P \rangle = 1$, we obtain
\[
| \langle \mathbb{I}_A \otimes \mathbb{I}_B, P \rangle | = \frac{1}{|\mathcal{X}| |\mathcal{Y}|} \cdot |\mathcal{X}| |\mathcal{Y}| = 1 ~.
\]
Applying Lemma \ref{lemmaCrossnorm} results in
\[
\alpha(P) \geq \varepsilon(P) \geq 1 ~,
\]
for all tensor norms $\alpha$.
\end{proof}

\subsubsection{Hilbertian Tensor Norm and its Dual}
\label{sectionHilbertTensor1}

In this section we introduce the Hilbertian tensor norm, denoted by $\gamma_2$. One possible way to define it is \cite{tensorNormsOperatorIdeals}:
\begin{equation}
\label{defGamma2First}
 \gamma_2(P) := \inf \left\{ w_2(P_A^i ; \ell_{\infty}^{|\mathcal{X}|}(\ell_{1}^{|\mathcal{A}|})) \cdot w_2(P_B^i ; \ell_{\infty}^{|\mathcal{Y}|}(\ell_{1}^{|\mathcal{B}|})) ~:~  P = \sum_{i=1}^n P_A^i \otimes P_B^i \right\} ~,
\end{equation}
where the infimum is over all decomposition $P = \sum_i P_A^i \otimes P_B^i \in \ell_{\infty}^{|\mathcal{X}|}(\ell_{1}^{|\mathcal{A}|}) \otimes \ell_{\infty}^{|\mathcal{Y}|}(\ell_{1}^{|\mathcal{B}|})$ and
\[
w_2(P_A^i ; X) :=  \sup_{ \| G_A \|_{X^*} \leq 1} \left( \sum_{i=1}^n | \langle G_A, P_A^i \rangle |^{2} \right)^{1/2} ~.
\]
On the other hand, if $\gamma_2$ is defined over $\ell_{1}^{|\mathcal{X}|}(\ell_{\infty}^{|\mathcal{A}|}) \otimes \ell_{1}^{|\mathcal{Y}|}(\ell_{\infty}^{|\mathcal{B}|})$  we set:
\begin{equation}
\label{defGamma2First1}
 \gamma_2(G) := \inf \left\{ w_2(G_A^i ; \ell_{1}^{|\mathcal{X}|}(\ell_{\infty}^{|\mathcal{A}|})) \cdot w_2(G_B^i ; \ell_{1}^{|\mathcal{Y}|}(\ell_{\infty}^{|\mathcal{B}|})) ~:~  G = \sum_{i=1}^n G_A^i \otimes G_B^i \right\} ~,
\end{equation}
where the infimum is over all decomposition $G = \sum_i G_A^i \otimes G_B^i \in \ell_{1}^{|\mathcal{X}|}(\ell_{\infty}^{|\mathcal{A}|}) \otimes \ell_{1}^{|\mathcal{Y}|}(\ell_{\infty}^{|\mathcal{B}|})$.

The dual of $\gamma_2$ can be represented by \cite{tensorNormsOperatorIdeals}:
\begin{equation}
\label{eqnDefGammaDual}
 \gamma_2^{*}(G) := \inf \left\{  \| (\mu_{ij}) \|_{2 \rightarrow 2} \cdot \ell_2(G_A^i ; \ell_{1}^{|\mathcal{X}|}(\ell_{\infty}^{|\mathcal{A}|}) ) \cdot \ell_2(G_B^j ; \ell_{1}^{|\mathcal{Y}|}(\ell_{\infty}^{|\mathcal{B}|}) ) \right\}  ~,
\end{equation}
where the infimum is over all decompositions $G = \sum_{i,j}^n \mu_{ij} \cdot G_A^i \otimes G_B^j \in \ell_{1}^{|\mathcal{X}|}(\ell_{\infty}^{|\mathcal{A}|}) \otimes \ell_{1}^{|\mathcal{Y}|}(\ell_{\infty}^{|\mathcal{B}|})$, $(\mu_{ij})$ is a real $n \times n$-matrix, and 
\[
\ell_2(G_A^i ; X) :=  \left( \sum_{i=1}^n \| G_A^i \|^2_{X} \right)^{1/2} ~.
\]
And similarly for $\gamma_2^*$ over the tensor space $\ell_{\infty}^{|\mathcal{X}|}(\ell_{1}^{|\mathcal{A}|}) \otimes \ell_{\infty}^{|\mathcal{Y}|}(\ell_{1}^{|\mathcal{B}|})$ we set
\begin{equation}
\label{eqnDefGammaDual1}
 \gamma_2^{*}(P) := \inf \left\{ \| (\mu_{ij}) \|_{2 \rightarrow 2} \cdot \ell_2(P_A^i ; \ell_{\infty}^{|\mathcal{X}|}(\ell_{1}^{|\mathcal{A}|}) ) \cdot \ell_2(P_B^j ; \ell_{\infty}^{|\mathcal{Y}|}(\ell_{1}^{|\mathcal{B}|}) ) \right\}  ~,
\end{equation}
where the infimum is over all decompositions $P = \sum_{i,j}^n \mu_{ij} \cdot P_A^i \otimes P_B^j \in \ell_{\infty}^{|\mathcal{X}|}(\ell_{1}^{|\mathcal{A}|}) \otimes \ell_{\infty}^{|\mathcal{Y}|}(\ell_{1}^{|\mathcal{B}|})$.

There is a useful alternative representation of the Hilbertian tensor norm $\gamma_2$ whereof it actually got its name from.
A tensor $P \in \ell_{\infty}^{|\mathcal{X}|}(\ell_{1}^{|\mathcal{A}|}) \otimes \ell_{\infty}^{|\mathcal{Y} |}(\ell_{1}^{|\mathcal{B}|})$ can be interpreted as a linear operator $\hat{P}: (\ell_{\infty}^{|\mathcal{X} |}(\ell_{1}^{|\mathcal{A}|}))^* \rightarrow \ell_{\infty}^{|\mathcal{Y} |}(\ell_{1}^{|\mathcal{B}|})$ by the following identification:
\begin{equation}
\label{eqnDefRhoOp}
\hat{P}(G) := \sum_{i} \langle G, P_A^i \rangle \cdot P_B^i ~, 
\end{equation}
with $P = \sum_{i} P_A^i \otimes P_B^i$ and $G \in (\ell_{\infty}^{|\mathcal{X}|}(\ell_{1}^{|\mathcal{A}|}))^* \cong \ell_{1}^{|\mathcal{X} |}(\ell_{\infty}^{|\mathcal{A}|})$. Note that $\hat{P}(G)$ does not depend on the actual decomposition of $P$. We are now ready to state the alternative representation of the $\gamma_2$ norm \cite{introTensor}, namely:
\begin{equation}
\label{eqnAltDefW2}
\gamma_2(P) = \inf_{ \hat{P} = R \cdot S} \| R \|_{2 \rightarrow \infty(1)} \cdot \| S \|_{1(\infty) \rightarrow 2} ~,
\end{equation}
where the infimum is over all decomposition of $\hat{P}$ into linear operators $R: \ell_2 \rightarrow \ell_{\infty}^{|\mathcal{Y} |}(\ell_{1}^{|\mathcal{B}|})$ and $S : \ell_{1}^{|\mathcal{X} |}(\ell_{\infty}^{|\mathcal{A}|}) \rightarrow \ell_2$. In other words, $\hat{P}$ is factored through the Hilbert space $\ell_2$. Note that this Hilbert space can be of any dimension, even infinite dimensional. By setting $|\mathcal{A}| = |\mathcal{B}| = 1$ we recover the norms used in \cite{complexMeasure, lowerBoundsComm, directProductTheorems}. See Appendix \ref{sectionEquivGamma} for a proof of this equivalence.

We can think of $R$ and $S$ being matrices of dimension $|\mathcal{Y} | |\mathcal{B}| \times n$ and $n \times |\mathcal{X}| |\mathcal{A}|$, respectively, with $1 \leq n \leq \infty$, such that their matrix product yields $\hat{P}$. Representing $R$ as a row matrix $R := (n_{y,b})^T$ and $S$ as a column matrix $S := (m_{x,a})$ (see also Section \ref{sectionNotationProofs} about the notation) yields as entries of $\hat{P} = R \cdot S$ the values $\langle f_{x,a} \otimes f_{y,b}, P \rangle = \langle n_{y,b}, m_{x,a} \rangle = \langle m_{x,a}, n_{y,b} \rangle$. 
An immediate corollary is
\begin{corollary}
\label{corollaryEqivGammaVec}
Let $P \in \ell_{\infty}^{|\mathcal{X}|}(\ell_{1}^{|\mathcal{A}|}) \otimes \ell_{\infty}^{|\mathcal{Y} |}(\ell_{1}^{|\mathcal{B}|})$. Then $\gamma_2(P) \leq 1$ if and only if there exist vectors $m_{x,a}, n_{y,b} \in \ell_2^n$, with $1 \leq n \leq \infty$, such that $\langle f_{x,a} \otimes f_{y,b}, P \rangle = \langle m_{x,a}, n_{y,b} \rangle$, and $\| (m_{x,a}) \|_{1(\infty) \rightarrow 2} \leq 1$ and $\| (n_{y,b})^T \|_{2 \rightarrow \infty(1)} \leq 1$.
\end{corollary}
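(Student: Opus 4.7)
The plan is to unpack the alternative representation of $\gamma_2$ given in (\ref{eqnAltDefW2}) by explicitly realizing the factorization $\hat{P} = R \cdot S$ as a matrix product and reading off the vectors $m_{x,a}, n_{y,b}$ as columns of $S$ and rows of $R$, respectively. Concretely, any linear operator $R: \ell_2^n \to \ell_{\infty}^{|\mathcal{Y}|}(\ell_{1}^{|\mathcal{B}|})$ can be represented as a row matrix $R = (n_{y,b})^T$ with $n_{y,b} \in \ell_2^n$, and any $S: \ell_{1}^{|\mathcal{X}|}(\ell_{\infty}^{|\mathcal{A}|}) \to \ell_2^n$ as a column matrix $S = (m_{x,a})$ with $m_{x,a} \in \ell_2^n$. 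The $((y,b),(x,a))$-entry of the matrix product $R \cdot S$ is then $\langle n_{y,b}, m_{x,a} \rangle = \langle m_{x,a}, n_{y,b} \rangle$. Using the identification (\ref{eqnDefRhoOp}) of $\hat{P}$ as a matrix whose $((y,b),(x,a))$-entry equals $\langle f_{x,a} \otimes f_{y,b}, P \rangle$, the factorization $\hat{P} = R\cdot S$ is equivalent to the system of equalities $\langle f_{x,a} \otimes f_{y,b}, P \rangle = \langle m_{x,a}, n_{y,b} \rangle$.

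For the ``if'' direction, assume that vectors $m_{x,a}, n_{y,b} \in \ell_2^n$ are given as in the statement. Define $R := (n_{y,b})^T$ and $S := (m_{x,a})$. Then $\hat{P} = R \cdot S$ by the computation above, and by hypothesis $\| R \|_{2 \to \infty(1)} \leq 1$ and $\| S \|_{1(\infty) \to 2} \leq 1$. Plugging this admissible factorization into (\ref{eqnAltDefW2}) yields $\gamma_2(P) \leq 1$.

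For the ``only if'' direction, assume $\gamma_2(P) \leq 1$. Since $P$ lies in a finite-dimensional tensor space, one may restrict the factorization dimension in (\ref{eqnAltDefW2}) to a bounded $n$, so that after normalizing the norms the infimum is taken over a compact set and is therefore attained. Hence there exist $R: \ell_2^n \to \ell_{\infty}^{|\mathcal{Y}|}(\ell_{1}^{|\mathcal{B}|})$ and $S: \ell_{1}^{|\mathcal{X}|}(\ell_{\infty}^{|\mathcal{A}|}) \to \ell_2^n$ with $\hat{P} = R \cdot S$ and $\| R \|_{2 \to \infty(1)} \cdot \| S \|_{1(\infty) \to 2} \leq 1$. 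Writing $a := \| R \|_{2 \to \infty(1)}$ and $b := \| S \|_{1(\infty) \to 2}$ and assuming $a,b > 0$ (the degenerate case is trivial), the rescaling $R' := \sqrt{b/a}\cdot R$, $S' := \sqrt{a/b}\cdot S$ preserves $\hat{P} = R' \cdot S'$ and yields $\| R' \|_{2 \to \infty(1)} = \| S' \|_{1(\infty) \to 2} = \sqrt{ab} \leq 1$. Setting $n_{y,b}$ to be the rows of $R'$ and $m_{x,a}$ the columns of $S'$ produces vectors satisfying all the required conditions.

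The argument is essentially a translation exercise between the operator-factorization formulation and the inner-product formulation; the only mild subtlety is ensuring that the infimum in (\ref{eqnAltDefW2}) is actually attained (so one obtains vectors with norm at most $1$, rather than at most $1+\varepsilon$), which is handled by the finite dimensionality of $P$ together with the homogeneous rescaling that balances the two operator norms.
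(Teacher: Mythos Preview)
Your argument is correct and is precisely the unpacking the paper intends when it declares the statement an ``immediate corollary'' of (\ref{eqnAltDefW2}) together with the row/column representation $R=(n_{y,b})^T$, $S=(m_{x,a})$ described just before the corollary; the only additions you make---attainment of the infimum via finite dimensionality and the homogeneous rescaling to balance the two operator norms---are the natural details one fills in. One cosmetic remark: you overload the symbol $b$ (both as the output index in $n_{y,b}$ and as the scalar $\|S\|_{1(\infty)\to 2}$), so rename the latter to avoid confusion.
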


\renewcommand{\theequation}{C-\arabic{equation}}
\setcounter{equation}{0}  

\section{Introduction to Bell Inequalities}
\label{sectionIntroBell}

A \emph{Bell inequality} $G :  \ell_{\infty}^{|\mathcal{X} |}(\ell_{1}^{|\mathcal{A}|}) \otimes \ell_{\infty}^{|\mathcal{Y} |}(\ell_{1}^{|\mathcal{B}|}) \rightarrow \mathbb{R}$ can be interpreted as a linear functional from the space of conditional probability distributions to the real numbers. Let us denote by $B_C(G)$ the maximal value that can be achieved by applying the Bell inequality $G$ on a \emph{classical} conditional probability distribution, i.e., 
\[
B_C(G) := \sup_P \{ | \langle G, P \rangle | ~:~ P \textit{~is~classical} \} ~,
\]
where by ``$P$ \emph{is classical}'' we mean that $P$ can be written as
\[
\langle f_{x,a} \otimes f_{y,b} , P \rangle = \int \rho(\lambda) \cdot P_{A | X \Lambda}(a , x, \lambda) \cdot P_{B | Y \Lambda}(b , y, \lambda) d\lambda
\]
with $P_{A | X \Lambda}(a , x, \lambda) \geq 0$, $P_{B | Y \Lambda}(b , y, \lambda) \geq 0$, $\sum_a P_{A | X \Lambda}(a , x, \lambda) = 1$, $\sum_b P_{B | Y \Lambda}(b , y, \lambda) = 1$ and $\int \rho(\lambda) d\lambda = 1$,
i.e., the distribution $P$ can be explained by a local hidden variable model, where the local hidden variable $\lambda$ is selected with probability $\rho(\lambda)$. Hence, by $\langle f_{x,a} \otimes f_{y,b} , P \rangle$ we refer to the probability that the outputs are $a$ and $b$, given the inputs $x$ and $y$. We say that a Bell inequality $G \in \ell_{1}^{|\mathcal{X} |}(\ell_{\infty}^{|\mathcal{A}|}) \otimes \ell_{1}^{|\mathcal{Y} |}(\ell_{\infty}^{|\mathcal{B}|})$ is violated by the conditional probability distribution $P$ if 
\[
\arrowvert \langle G, P \rangle | \equiv \left| \sum_{x,y,a,b} \langle G, e_{x,a} \otimes e_{y,b} \rangle \cdot \langle f_{x,a} \otimes f_{y,b} , P \rangle \right| > B_C(G) ~.
\]

The most prominent example of a Bell inequality is the so-called CHSH Bell inequality \cite{CHSH}. Let $\mathcal{A} = \mathcal{B} = \mathcal{X} = \mathcal{Y} = \{0 , 1\}$, i.e., there are only two inputs and two outputs on each side, respectively. The CHSH inequality is usually stated in the form of expectation values, but in order to fit into our presentation, we will state its equivalent ``probability representation'':
\begin{equation}
\label{eqnDefCHSH}
\langle G_{CHSH}, e_{x,a} \otimes e_{y,b} \rangle := \left\{ \begin{array}{l} +1 ~,~ \textit{if}~ a \oplus b = x \wedge y \\ -1 ~,~ \textit{otherwise} \end{array} \right. ~.
\end{equation}
It is not hard to show that $B_C(G_{CHSH}) = 2$, where this value can be achieved for $P$ which always ``outputs'' the values $a = 0$ and $b = 0$, independently of the inputs $x$ and $y$.

In Section \ref{sectionInjecProjecNorm}, we have shown that, for a game $G \in \ell_{1}^{|\mathcal{X} |}(\ell_{\infty}^{|\mathcal{A}|}) \otimes \ell_{1}^{|\mathcal{Y} |}(\ell_{\infty}^{|\mathcal{B}|})$, the injective tensor norm and classical value of the game are equal, i.e., that $\varepsilon(G) = \omega(G)$ (see Proposition \ref{lemmaEqiuvClassVAr}). For Bell inequalities $G$ there is no equality relation any more. It only holds that $B_C(G) \leq \varepsilon(G)$ for all Bell inequalities $G$.
The reason for losing the equality stems from the fact that, in contrast to two-prover games, a Bell inequality can have \emph{negative} entries. Furthermore, the fact that $B_C(G)$ is not equal to $\varepsilon(G)$ for Bell inequalities $G$ is the reason for our proof of Theorem \ref{theoremBellupperNew} not going through for Bell inequalities.

\renewcommand{\theequation}{D-\arabic{equation}}
\setcounter{equation}{0}  

\section{Equivalence of $\gamma_2$ Definitions}
\label{sectionEquivGamma}

We will show the following equality:
\begin{equation}
\label{eqnEqivNormGamma2}
\inf_{ \hat{P} = R \cdot S} \| R \|_{2 \rightarrow Y} \cdot \| S \|_{X^* \rightarrow 2} = \inf  w_2(P_A^i ; X ) \cdot w_2(P_B^i ; Y ) ~,
\end{equation}
with $P = \sum_{i=1}^{n} P_A^i \otimes P_B^i \in X \otimes Y$ for $X$ and $Y$ arbitrary finite dimensional Banach spaces, which implies the equivalence of (\ref{defGamma2First}) and (\ref{eqnAltDefW2}) in Appendix \ref{sectionHilbertTensor1}.

Let us first show that the right hand side of  (\ref{eqnEqivNormGamma2}) is larger or equal to the left hand side.
First, let $P = \sum_{i=1}^{n} P_A^i \otimes P_B^i \in X \otimes Y$ be the optimal decomposition on the right hand side of  (\ref{eqnEqivNormGamma2}). Then, we define $R : \ell_2^n \rightarrow Y$ and $S: X^* \rightarrow \ell_2^n$ as follows:
\begin{eqnarray}
R(\lambda) &:=& \sum_{i=1}^{n} \langle \lambda, e_i \rangle \cdot P_{B}^i ~, \nonumber \\
S(G_A) &:=& \sum_{i=1}^n e_i \cdot \langle G_A, P_A^i \rangle ~. \nonumber
\end{eqnarray}
The operator $\hat{P} : X^{*} \rightarrow Y$ corresponding to $P$ can be represented as 
\begin{equation}
\label{eqnRepP}
\hat{P}(G_A) = \sum_{i=1}^n \langle G_A, P_A^i \rangle \cdot P_B^i ~. 
\end{equation}
That $\hat{P} = R \cdot S$ indeed holds follows then by
\begin{equation}
\label{eqnEqivRSP}
 (R \cdot S)(G_A) = R(S(G_A)) = R\left( \sum_i e_i \cdot \langle G_A, P_A^i \rangle \right) = \sum_i \langle G_A, P_A^i \rangle \cdot P_B^i ~.
\end{equation}
We then get
\begin{eqnarray}
\label{eqnSequivSum}
\| S \|_{X^* \rightarrow 2} &=& \sup_{ \| G_A \|_{X^*} \leq 1 } \| S(G_A) \|_2 \nonumber \\
&=& \sup_{ \| G_A \|_{X^*} \leq 1 } \left\| \sum_{i} e_i \cdot \langle G_A, P_A^i \rangle \right\|_2 \nonumber \\
&=& \sup_{ \| G_A \|_{X^*} \leq 1 } \left( \sum_{i} | \langle G_A, P_A^i \rangle |^2 \right)^{1/2}  \nonumber \\
&=& w_2(P_A^i ; X) ~.
\end{eqnarray}
On the other hand, using the duality relation between norms, we have
\begin{eqnarray}
\| R \|_{2 \rightarrow Y} &=& \sup_{ \| \lambda \|_{2} \leq 1 } \| R(\lambda) \|_Y \nonumber \\
&=& \sup_{ \| \lambda \|_{2} \leq 1 } \left\| \sum_{i=1}^{n} \langle \lambda, e_i \rangle \cdot P_{B}^i \right\|_Y \nonumber \\
&=& \sup_{ \| \lambda \|_{2} \leq 1 } \sup_{ \| G_B \|_{Y^*} \leq 1} \left| \left\langle G_B, \sum_{i=1}^{n} \langle \lambda, e_i \rangle \cdot P_{B}^i \right\rangle \right| \nonumber \\
&=& \sup_{ \| G_B \|_{Y^*} \leq 1} \sup_{ \| \lambda \|_{2} \leq 1 }  \left| \sum_{i=1}^{n} \langle \lambda, e_i \rangle \cdot \langle G_B, P_{B}^i \rangle \right| ~. \nonumber
\end{eqnarray}
By setting $\langle \mu, e_i \rangle := \langle G_B, P_B^i \rangle$, with $\mu \in \ell_2^n$, and using that $\ell_2^n$ is self dual, we get
\begin{eqnarray}
\label{eqnReqivSum}
\| R \|_{2 \rightarrow Y} &=& \sup_{ \| G_B \|_{Y^*} \leq 1} \sup_{ \| \lambda \|_{2} \leq 1 }  \left| \sum_{i=1}^{n} \langle \lambda, e_i \rangle \cdot \langle G_B, P_{B}^i \rangle \right| \nonumber \\
&=& \sup_{ \| G_B \|_{Y^*} \leq 1} \sup_{ \| \lambda \|_{2} \leq 1 }  | \langle \lambda, \mu \rangle | \nonumber \\
&=& \sup_{ \| G_B \|_{Y^*} \leq 1} \| \mu \|_2 \nonumber \\
&=& \sup_{ \| G_B \|_{Y^*} \leq 1} \left( \sum_i | \langle \mu, e_i \rangle |^2 \right)^{1/2} \nonumber \\
&=& \sup_{ \| G_B \|_{Y^*} \leq 1} \left( \sum_i | \langle G_B, P_B^i \rangle |^2 \right)^{1/2} \nonumber \\
&=& w_2(P_B^i ; Y) ~.
\end{eqnarray}
which finishes the first part of the proof.

Let us now show that the right-hand side of  (\ref{eqnEqivNormGamma2}) is smaller or equal to the left-hand side.
Let $\hat{P} = R \cdot S$ be the optimal factorization of $\hat{P}$ on the left-hand side of  (\ref{eqnEqivNormGamma2}).
Then there exist $P_A^i \in X$ and $P_B^i \in Y$ such that $R(\lambda) = \sum_{i=1}^{n} \langle \lambda, e_i \rangle \cdot P_{B}^i$ and $S(G_A) = \sum_{i=1}^n e_i \cdot \langle G_A, P_A^i \rangle$, respectively. Hence, $\sum_i P_A^i \otimes P_B^i$ is a valid representation of $P$ (see also (\ref{eqnRepP}) and (\ref{eqnEqivRSP})). Using (\ref{eqnSequivSum}) and (\ref{eqnReqivSum}) finishes the proof.

\bibliographystyle{alpha}
\bibliography{HilbertianTensorNormV3}

\newcommand{\etalchar}[1]{$^{#1}$}
\begin{thebibliography}{BOGKW88}

\bibitem[ABG{\etalchar{+}}07]{deviceIndependentCollective}
A.~Acin, N.~Brunner, N.~Gisin, S.~Massar, S.~Pironio, and V.~Scarani.
\newblock Device-independent security of quantum cryptography against
  collective attacks.
\newblock {\em Phys.~Rev.~Lett.}, 98:230501, 2007.

\bibitem[ABL{\etalchar{+}}09]{closedSets}
J.~Allcock, N.~Brunner, N.~Linden, S.~Popescu, P.~Skrzypczyk, and T.~V\'ertesi.
\newblock Closed sets of nonlocal correlations.
\newblock {\em Phys.~Rev.~A}, 80:062107, 2009.

\bibitem[AMMN06]{quadGrothGraphs}
N.~Alon, K.~Makarychev, Y.~Makarychev, and A.~Naor.
\newblock Quadratic forms on graphs.
\newblock {\em Invent. Math.}, 163:499--522, 2006.

\bibitem[AN04]{ApproxCutNorm}
N.~Alon and A.~Naor.
\newblock Approximating the cut-norm via {G}rothendieck's inequality.
\newblock In {\em STOC '04: Proc. 36th Annu. ACM Symp. on Theory of Computing},
  pages 72--80, 2004.

\bibitem[BBL{\etalchar{+}}06]{limitOnNonlocality}
G.~Brassard, H.~Buhrman, N.~Linden, A.~A. M\'ethot, A.~Tapp, and F.~Unger.
\newblock Limit on nonlocality in any world in which communication complexity
  is not trivial.
\newblock {\em Phys. Rev. Lett.}, 96:250401, 2006.

\bibitem[BBT09]{generalGrothen}
J.~Bri\"{e}t, H.~Buhrman, and B.~Toner.
\newblock A generalized {G}rothendieck inequality and entanglement in {XOR}
  games.
\newblock In {\em QIP '09: Proc. 12th Workshop on Quantum Information
  Processing}, 2009.
\newblock arXiv:0901.2009.

\bibitem[Bel64]{bellInequality}
J.~S. Bell.
\newblock On the {E}instein-{P}odolsky-{R}osen paradox.
\newblock {\em Physics}, 1:195--200, 1964.

\bibitem[Ben77]{schurMultipliers}
G.~Bennett.
\newblock Schur multipliers.
\newblock {\em Duke Math. J.}, 44, 1977.

\bibitem[BHK05]{nosignalingQKD}
J.~Barrett, L.~Hardy, and A.~Kent.
\newblock No signaling and quantum key distribution.
\newblock {\em Phys.~Rev.~Lett.}, 95:010503, 2005.

\bibitem[BOGKW88]{BGKW89}
M.~Ben-Or, S.~Goldwasser, J.~Kilian, and A.~Wigderson.
\newblock Multi-prover interactive proofs: how to remove intractability
  assumptions.
\newblock In {\em STOC '88: Proc. 20th Annu. ACM Symp. on Theory of Computing},
  pages 113--131, 1988.

\bibitem[BRSdW10]{nearOptimalBell}
H.~Buhrman, O.~Regev, G.~Scarpa, and R.~de~Wolf.
\newblock Near-optimal and explicit {B}ell inequality violations, 2010.
\newblock arXiv:1012.5043.

\bibitem[CHSH69]{CHSH}
J.~F. Clauser, M.~A. Horne, A.~Shimony, and R.~A. Holt.
\newblock Proposed experiment to test local hidden-variable theories.
\newblock {\em Phys.~Rev.~Lett.}, 23:880--884, 1969.

\bibitem[CHTW04]{limitsNonlocal}
R.~Cleve, P.~H\o{}yer, B.~Toner, and J.~Watrous.
\newblock Consequences and limits of nonlocal strategies.
\newblock In {\em CCC '04: Proc. 19th Annu. IEEE Conf. on Computational
  Complexity}, pages 236--249, 2004.

\bibitem[CSUU07]{paralleXOR}
R.~Cleve, W.~Slofstra, F.~Unger, and S.~Upadhyay.
\newblock Perfect parallel repetition theorem for quantum {XOR} proof systems.
\newblock In {\em CCC '07: Proc. 22nd Annu. IEEE Conf. on Computational
  Complexity}, pages 109--114, 2007.

\bibitem[CW04]{maxQuadratic}
M.~Charikar and A.~Wirth.
\newblock Maximizing quadratic programs: Extending {G}rothendieck's inequality.
\newblock In {\em FOCS '04: Proc. 45th Annu. IEEE Symp. on Foundations of
  Computer Science}, pages 54--60, 2004.

\bibitem[Dav84]{lowerGrothendieckDavis}
A.~M. Davies.
\newblock unpublished note, 1984.

\bibitem[DF93]{tensorNormsOperatorIdeals}
A.~Defant and K.~Floret.
\newblock {\em Tensor Norms and Operator Ideals}.
\newblock North-Holland Mathematical Studies. North-Holland Publishing Co.,
  1993.

\bibitem[DKLR09]{commComplexNonSig}
J.~Degorre, M.~Kaplan, S.~Laplante, and J.~Roland.
\newblock The communication complexity of non-signaling distributions.
\newblock In {\em MFCS '09: Proc. 34th Int. Symp. on Mathematical Foundations
  of Computer Science 2009}, pages 270--281, 2009.

\bibitem[DLTW08]{hierarchySDP1}
A.~C. Doherty, Y.-C. Liang, B.~Toner, and S.~Wehner.
\newblock The quantum moment problem and bounds on entangled multi-prover
  games.
\newblock In {\em In CCC '08: Proc. 23th Annu. IEEE Conf. on Computational
  Complexity}, pages 199--210, 2008.

\bibitem[EPR35]{EPR}
A.~Einstein, B.~Podolsky, and N.~Rosen.
\newblock Can quantum-mechanical description of physical reality be considered
  complete?
\newblock {\em Phys. Rev.}, 47:777--780, 1935.

\bibitem[For89]{F89}
L.~Fortnow.
\newblock {\em Complexity-theoretic aspect of interactive proof systems}.
\newblock PhD thesis, MIT, 1989.
\newblock MIT/LCS/TR-447.

\bibitem[FR94]{bellInGrothen}
P.~C. Fishburn and J.~A. Reeds.
\newblock Bell inequalities, {G}rothendieck’s constant, and root two.
\newblock {\em SIAM J. Discrete Math.}, 7:48--56, 1994.

\bibitem[Gro53]{grothendieckInequality}
A.~Grothendieck.
\newblock R\'{e}sum\'{e} de la th\'{e}orie m\'{e}trique des produits tensoriels
  topologiques.
\newblock {\em Boletim Soc. Mat. S\~{a}o Paulo}, 8, 1953.

\bibitem[Hol07]{H07}
T.~Holenstein.
\newblock Parallel repetition: simplifications and the no-signaling case.
\newblock In {\em STOC '07: Proc. 39th Annu. ACM Symp. on Theory of Computing},
  pages 411--419, 2007.

\bibitem[HRW10]{deviceIndependentGeneral}
E.~H\"anggi, R.~Renner, and S.~Wolf.
\newblock Efficient device-independent quantum key distribution.
\newblock In {\em EUROCRYPT '10: Proc. Conf. on the Theory and Applications of
  Cryptographic Techniques}, pages 216--234, 2010.

\bibitem[JP10]{largeViolation}
M.~Junge and C.~Palazuelos.
\newblock Large violation of {B}ell inequalities with low entanglement.
\newblock In {\em QIP '11: Proc. 14th Workshop on Quantum Information
  Processing}, 2010.
\newblock arXiv:1007.3043.

\bibitem[JPPG{\etalchar{+}}10a]{operatorSpace1}
M.~Junge, C.~Palazuelos, D.~P\'{e}rez-Garc\'{\i}a, I.~Villanueva, and M.~M.
  Wolf.
\newblock Operator space theory: a natural framework for {B}ell inequalities.
\newblock {\em Phys. Rev. Lett.}, 104:170405, 2010.

\bibitem[JPPG{\etalchar{+}}10b]{operatorSpace2}
M.~Junge, C.~Palazuelos, D.~P\'{e}rez-Garc\'{\i}a, I.~Villanueva, and M.~M.
  Wolf.
\newblock Unbounded violations of bipartite {B}ell inequalities via operator
  space theory.
\newblock {\em Comm. Math. Phys.}, 44:715--739, 2010.

\bibitem[KKM{\etalchar{+}}08]{entangledHard}
J.~Kempe, H.~Kobayashi, K.~Matsumoto, B.~Toner, and T.~Vidick.
\newblock Entangled games are hard to approximate.
\newblock In {\em FOCS '08: Proc. 49th Annu. IEEE Symp. on Foundations of
  Computer Science}, pages 447--456, 2008.

\bibitem[KR10]{noStrongParallel}
J.~Kempe and O.~Regev.
\newblock No strong parallel repetition with entangled and non-signaling
  provers.
\newblock In {\em CCC '10: Proc. 25th Annu. IEEE Conf. on Computational
  Complexity}, pages 7--15, 2010.

\bibitem[Kri79]{KervinGrothendick}
J.~L. Krivine.
\newblock Constantes de {G}rothendieck et fonctions de type positif sur les
  sph\`{e}res.
\newblock {\em Adv. Math.}, 31:16--30, 1979.

\bibitem[KRT08]{uniqueGamesEasy}
J.~Kempe, O.~Regev, and B.~Toner.
\newblock Unique games with entangled provers are easy.
\newblock In {\em FOCS '08: Proc. 49th Annu. IEEE Symp. on Foundations of
  Computer Science}, pages 457--466, 2008.

\bibitem[LMSS07]{complexMeasure}
N.~Linial, S.~Mendelson, G.~Schechtman, and A.~Shraibman.
\newblock Complexity measures of sign matrices.
\newblock {\em Combinatorica}, 27:439--463, 2007.

\bibitem[LS07]{lowerBoundsComm}
N.~Linial and A.~Shraibman.
\newblock Lower bounds in communication complexity based on factorization
  norms.
\newblock In {\em STOC '07: Proc. 39th Annu. ACM Symp. on Theory of Computing},
  pages 699--708, 2007.

\bibitem[LSS08]{directProductTheorems}
T.~Lee, A.~Shraibman, and R.~Spalek.
\newblock A direct product theorem for discrepancy.
\newblock In {\em CCC '08: Proc. 23rd Annu. IEEE Conf. on Computational
  Complexity}, pages 71--80, 2008.

\bibitem[NC00]{nielsenchuang}
M.~A. Nielsen and I.~L. Chuang.
\newblock {\em Quantum Computation and Quantum Information}.
\newblock Cambridge University Press, 2000.

\bibitem[NPA08]{hierarchySDP2}
M.~Navascu\'es, S.~Pironio, and A.~Ac\'in.
\newblock A convergent hierarchy of semidefinite programs characterizing the
  set of quantum correlations.
\newblock {\em New J. Phys.}, 10:073013, 2008.

\bibitem[NW10]{beyondquantum}
M.~Navascu\'es and H.~Wunderlich.
\newblock A glance beyond the quantum model.
\newblock {\em Proc. R. Soc. Lond. A}, 466:881--890, 2010.

\bibitem[Raz98]{R98}
R.~Raz.
\newblock A parallel repetition theorem.
\newblock {\em SIAM J. Comput.}, 27:763--803, 1998.

\bibitem[Ree91]{lowerGrothendieckReeds}
J.~A. Reeds.
\newblock unpublished note (available at
  http://www.dtc.umn.edu/~reedsj/bound2.dvi), 1991.

\bibitem[Rie74]{genProofGrothen}
R.~Rietz.
\newblock A proof of the {G}rothendieck inequality.
\newblock {\em Israel J. Math.}, 19:271--276, 1974.

\bibitem[Rya02]{introTensor}
R.~A. Ryan.
\newblock {\em Introduction to Tensor Products of Banach Spaces}.
\newblock Springer Monographs in Mathematics. Springer-Verlag London Ltd.,
  2002.

\bibitem[Sch50]{crossSpaces}
R.~Schatten.
\newblock {\em A theory of cross spaces}.
\newblock Annals of Mathematics Studies. Princeton Univ. Press, 1950.

\bibitem[Tsi80]{tsirelson}
B.~Tsirelson.
\newblock Quantum generalizations of {B}ell’s inequality.
\newblock {\em Lett. Math. Phys.}, 4:93--100, 1980.

\bibitem[Tsi87]{tirelsonGrothendieckearly}
B.~S. Tsirelson.
\newblock Quantum analogues of the {B}ell inequalities. {T}he case of two
  spatially separated domains.
\newblock {\em J. Sov. Math.}, 36:557--570, 1987.

\end{thebibliography}

\end{document}